\newcommand{\bfm}[1]{\ensuremath{\mathbf{#1}}}
     \def\bA{\bfm A}     \def\cA{{\cal  A}}     
     \def\bB{\bfm B}     \def\cB{{\cal  B}}     
          \def\cD{{\cal  D}}     
\def\be{\bfm e}          \def\cE{{\cal  E}}     
         \def\cF{{\cal  F}}     
\def\bg{\bfm g}               
\def\bh{\bfm h}          \def\cH{{\cal  H}}     
     \def\bI{\bfm I}
          \def\cM{{\cal  M}}     
          \def\cN{{\cal  N}}
     \def\bQ{\bfm Q}          
     \def\bR{\bfm R}          
\def\bs{\bfm s}     \def\bS{\bfm S}     \def\cS{{\cal  S}}     
          \def\cT{{\cal  T}}     
\def\bu{{\bfm u}}               
\def\bv{\bfm v}               
\def\bw{\bfm w}               
\def\bx{\bfm x}     \def\bX{\bfm X}     \def\cX{{\cal  X}}     
\def\by{\bfm y}     \def\bY{{\bfm Y}}     \def\cY{{\cal  Y}}     
\def\bz{\bfm z}               
\def\bzero{\bfm 0}
\newcommand{\bfsym}[1]{\ensuremath{\boldsymbol{#1}}}
       \def \bbeta    {\bfsym{\beta}}
       \def \bdelta   {\bfsym{\xi}}
      \def \bmu      {\bfsym{\mu}}
       \def \bDelta   {\bfsym{\Delta}}
\def \bTheta   {\bfsym{\Theta}}       
\def \bSigma   {\bfsym{\Sigma}}
\DeclareMathOperator*{\argmin}{argmin}
\DeclareMathOperator{\var}{var}
\DeclareMathOperator{\cov}{cov}
\DeclareMathOperator{\diag}{diag}
\def \bone   {\bfsym{1}}
\def \RR	{\mathbb{R}}
\def \PP {\mathbb{P}}
\def \NN {\mathbb{N}}
\def \rb {\mathrm{rb}}
\newcommand{\EE}{\mathbb{E}}
\newcommand{\beq}  {\begin{equation}}
\newcommand{\eeq}  {\end{equation}}
\newcommand{\beqn} {\begin{eqnarray}}
\newcommand{\eeqn} {\end{eqnarray}}
\newcommand{\beqnn}{\begin{eqnarray*}}
\newcommand{\eeqnn}{\end{eqnarray*}}
\theoremstyle{definition}
\theoremstyle{plain}
\newtheorem{lem}{Lemma}
\newtheorem{cor}{Corollary}
\newtheorem{con}{Condition}
\newtheorem{prop}{Proposition}
\newtheorem{thm}{Theorem}
\newcommand{\ltwonorm}[1]{\lVert#1\rVert_2}
\newcommand{\fnorm}[1]{\|#1\|_{\mathrm{F}}}
\providecommand{\keywords}[1]
{	
  \small
  \textbf{Keywords:} #1
}
\title{ReBoot: Distributed statistical learning via refitting \\bootstrap samples}
\author{Yumeng Wang$^\dagger$, Ziwei Zhu$^{\dagger}$, Xuming He$^{\ddagger}$ \\
 \normalsize
 $^\dagger$Department of Statistics, University of Michigan \\ 
 $^{\ddagger}$Department of Statistics and Data Science, Washington University in St Louis
\date{\today} }
\begin{document}

\maketitle

\begin{abstract}
In this paper, we propose a one-shot distributed learning algorithm via \underline{re}fitting \underline{boot}strap samples, which we refer to as ReBoot. ReBoot refits a new model to mini-batches of bootstrap samples that are continuously drawn from each of the locally fitted models. It requires only one round of communication of model parameters without much memory. Theoretically, we analyze the statistical error rate of ReBoot for generalized linear models (GLM) and noisy phase retrieval, which represent convex and non-convex problems, respectively. In both cases, ReBoot provably achieves the full-sample statistical rate.
In particular, we show that the systematic bias of ReBoot, the error that is independent of the number of subsamples (i.e., the number of sites), is $O(n ^ {-2})$ in GLM, where $n$ is the subsample size (the sample size of each local site). This rate is sharper than that of model parameter averaging and its variants, implying the higher tolerance of ReBoot with respect to data splits to maintain the full-sample rate. Our simulation study demonstrates the statistical advantage of ReBoot over competing methods.
Finally, we propose FedReBoot, an iterative version of ReBoot, to aggregate convolutional neural networks for image classification. FedReBoot exhibits substantial superiority over Federated Averaging (FedAvg) within early rounds of communication.
\end{abstract}

\keywords{Distributed Learning, One-Shot Aggregation, Generalized Linear Models, Phase Retrieval, Model Aggregation}
\normalsize

\section{Introduction}
\label{sec:Introduction}

Apace with the data explosion in the digital era, it is common for modern data to be distributed across multiple or even a large number of sites. One example is the data that are continuously generated by edge devices such as mobile phones, personal computers, and smart watches. Such data, if accessible, can be used to train models that underpin modern AI applications and services. For instance, browsing history data can help recommendation systems learn customer preference and produce personalized recommendations. Another example is regarding health records from multiple clinical sites. Aggregating these datasets or the derived models can improve the learning accuracy and enhance the power of statistical tests of interest, lending further statistical support to knowledge discovery. However, there are two salient challenges of analyzing decentralized data: (a) communication of large-scale data between sites is expensive and inefficient, {because of the limitation of network bandwidth, among other things}; (b) such sensitive data as internet browsing history or health records may not be allowed to be shared for privacy or legal reasons. It is thus time-pressing to develop a new generation of statistical learning methods that can address these challenges.

Perhaps the most straightforward strategy to handle distributed datasets is the one-shot aggregation (see the left panel of Figure \ref{fig:one_shot}). Suppose the data are distributed across $m$ sites, forming $m$ sub-datasets $\{\cD ^ {(k)}\}_{k=1} ^ m$. The one-shot aggregation framework first calculates local statistics $\widehat \bbeta ^ {(k)}$ on each sub-dataset $\cD ^ {(k)}$ and then combines all the local statistics to form an aggregated statistic $\widetilde \bbeta$. This strategy requires only one round of communication of subsample-based statistics, thereby requiring low communication cost and preventing privacy leakages due to transmission of raw data.

\begin{figure*}[t]
    \centering
    \includegraphics[scale=0.5]{./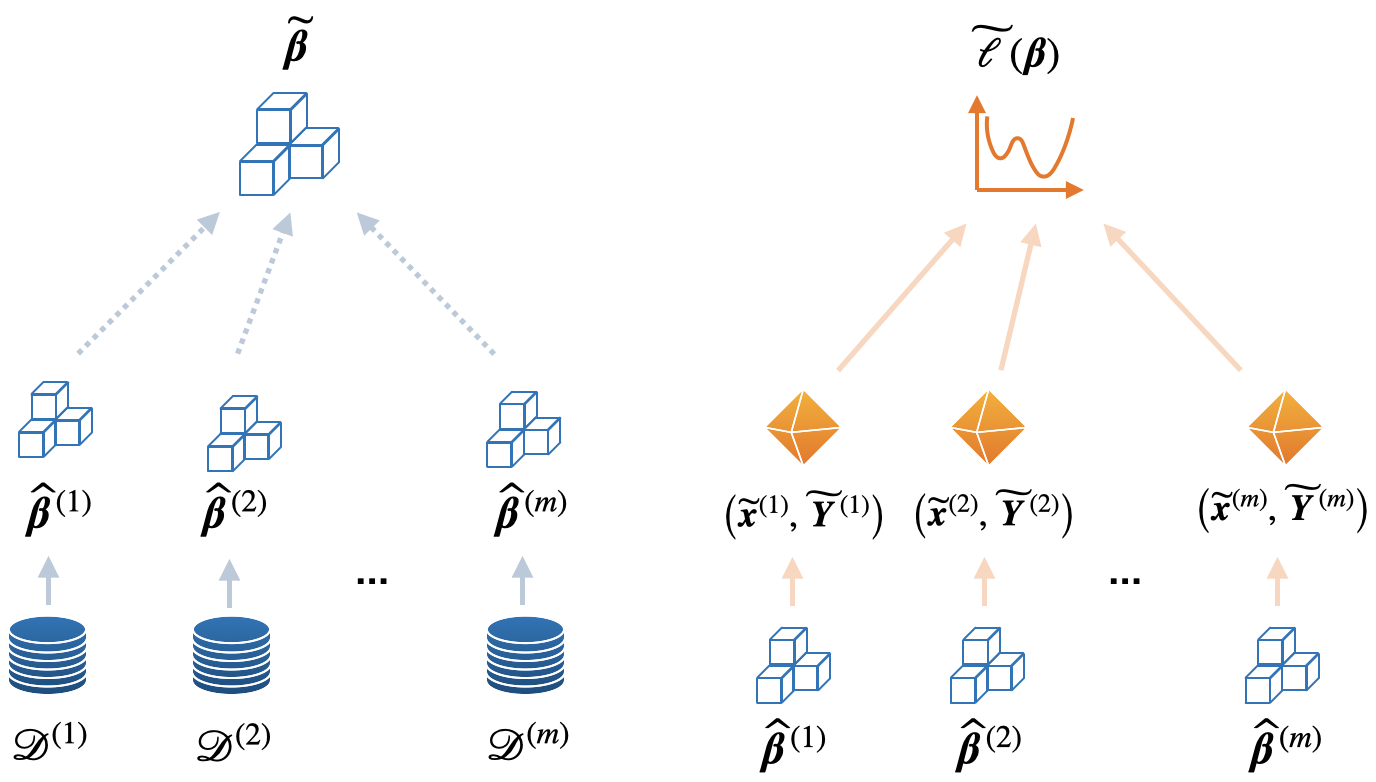} 
    \begin{tabular}{c}
    	One-shot framework \kern12em  ReBoot \kern3em
    \end{tabular}
    \caption{\textit{Left:} Illustration of one-shot aggregation. Each site trains a local model on sub-dataset $\cD ^ {(k)}$ and obtains local statistics $\bbeta ^ {(k)}$, $k \in [m]$. A central server then combines these local statistics to form an aggregated statistic $\widetilde \bbeta$. 
    \textit{Right:} Overview of ReBoot framework. The central server generates a bootstrap sample $\big(\widetilde \bx ^ {(k)}, \widetilde Y ^ {(k)}\big)$ based on the model parameterized by $\widehat\bbeta ^ {(k)}$ for each $k \in [m]$, and then pools them to evaluate the ReBoot loss function.
    }
    \label{fig:one_shot}
\end{figure*}

Much work has been published over the past decade on one-shot parameter averaging or its variants for a wide range of statistical learning problems.  A partial list of the related literature includes \cite{Zhang2012Comunication}, \cite{chen2014split}, \cite{rosenblatt2016optimality}, \cite{lee2017communication}, \cite{battey2018distributed}, \cite{banerjee2019divide}, \cite{dobriban2020wonder}, \cite{fan2019distributed}, etc. Specifically, in the low-dimensional setup, \cite{Zhang2012Comunication} and \cite{rosenblatt2016optimality} analyzed the mean squared error (MSE) of the na\"ive averaging estimator for general empirical risk minimization (ERM) problems. They found that the averaging estimator can achieve the same statistical rate as the full-sample estimator, i.e., the standard estimator derived from the entire data, provided that the number of machines and the parameter dimension are not large. \cite{Zhang2012Comunication} also proposed a new approach called subsampled average mixture (SAVGM) that averages debiased local estimators, which incurs less bias than the na\"ive averaging and thus allows more data splits while preserving the full-sample rate. \cite{liu2014distributed} proposed a KL-divergence-based averaging to aggregate local MLEs in point estimation problems, which provably yields the minimax optimal rate. \cite{banerjee2019divide} studied the averaging approach in non-standard problems where the statistical rate is typically slower than root-$n$, instantiated by the least squares estimator in isotonic regression. Their work unveiled a super-efficiency phenomenon: the averaging estimator \emph{outperforms} the full-sample estimator in pointwise inference, while in a uniform sense, the averaging estimator's performance worsens as the number of subsamples {(the number of sites)} increases. In the high-dimensional setup, \cite{chen2014split} proposed a split-and-conquer approach for penalized regression problems in high-dimensional generalized linear models (GLM). It uses majority voting across different machines to estimate the support of the true model and then applies a weighted average of the local estimators within the selected model to obtain the final distributed estimator. \cite{lee2017communication} and \cite{battey2018distributed} proposed to average local debiased lasso estimators or test statistics in high-dimensional sparse (generalized) linear models. They showed that the proposed distributed test and estimator can achieve full-sample efficiency and accuracy if the number of subsamples is not large. \cite{dobriban2020wonder} studied distributed ridge regression in a high-dimensional random-effects model and investigated the MSE of a weighted-average of local ridge regression estimators. \cite{Panigrahi2024} proposed a data aggregation scheme with model selection performed at each site.  With a great variety of the problem setups, a common finding in the literature is that one-shot averaging of model parameters can achieve full-sample statistical accuracy whenever the subsample size is sufficiently large. 

In this paper, we focus on a different one-shot aggregation method based on  \underline{re}fitting \underline{boot}strap samples from local models. We refer to this method as ReBoot. At a high level, ReBoot consists of three main steps:
\begin{enumerate}
    \item Train a local model on each subsample $\cD ^ {(k)}$ and send the model parameter estimate $\widehat\bbeta ^ {(k)}$ to the central server; 
    \item For each $k \in [m]$, the central server generates a bootstrap sample $\big(\widetilde \bx ^ {(k)}, \widetilde Y ^ {(k)}\big)$ based on the model parameterized by $\widehat\bbeta ^ {(k)}$; 
    \item The central server minimizes the ReBoot loss function:  $$\widetilde \ell \big(\bbeta \big) := \frac{1}{m} \sum_{k = 1} ^ m \EE \big\{\ell \big(\bbeta \, ;\,  \big(\widetilde \bx ^ {(k)}, \widetilde Y ^ {(k)}\big)\big) |\, \widehat\bbeta ^ {(k)}\big\},$$ where the expectation is taken with respect to all the bootstrap samples conditional on the local estimators, and $\ell (\cdot)$ is a loss function which will be specified later. 
\end{enumerate}
In practice, we apply mini-batch stochastic gradient descent (SGD) to minimize $\widetilde \ell(\bbeta)$, which avoids generating and storing large bootstrap samples in memory. We provide the implementation details of ReBoot in Algorithm \ref{alg:reboot}. The right panel of Figure \ref{fig:one_shot} illustrates the ReBoot framework. The motivation behind ReBoot is that a linear combination of the loss functions of the subsamples can recover the full-sample loss function, while a linear combination of the subsample-based model parameters cannot recover the global parameter. In other words, it is the \emph{loss functions}, rather than the model parameters, that are additive with respect to subsamples. By drawing bootstrap samples from local models and then pooling them to evaluate the loss function, ReBoot aims to reconstruct the global loss function and then minimizes it to obtain the aggregated estimator. 
Given the design of ReBoot, one can see that it enjoys at least three desirable properties in distributed learning setups: (i) privacy protection: ReBoot communicates only the local estimators or models, thereby avoiding leakages of instance-specific information; (ii) communication efficiency: ReBoot requires only one round communication of $m$ sets of model parameters, whose size is often much smaller than that of the raw data themselves; (iii) wide applicability: ReBoot is a generic rather than model-specific algorithm for aggregating local models. 

The most important advantage of ReBoot is statistical: ReBoot tolerates \emph{smaller subsample sizes}, or equivalently, \emph{more data splits} than averaging methods to retain the full-sample error rate. To demonstrate this, we rigorously derive the statistical rate of ReBoot under distributed GLM and noisy phase retrieval, which represent convex and non-convex problems, respectively. Suppose the data are uniformly distributed across $m$ sites and each site has $n$ subjects. Under distributed GLM, we show that the systematic bias of ReBoot, which is independent of the number of sites $m$, is of order $O(n ^ {-2})$, while those of na\"ive averaging and SAVGM are respectively $O(n ^ {-1})$ and $O(n ^ {-1.5})$. With the total sample size fixed, the sharper bias allows ReBoot to tolerate a larger number of sample splits while retaining the desirable full-sample rate. Under distributed noisy phase retrieval, we show that the systematic bias of ReBoot is of order $O(n ^ {-1})$; to the best of our knowledge, we have not seen any competing result in the same context. 

The idea of bootstrap aggregation for distributed learning already appeared in \cite{liu2014distributed} as a practical implementation of the KL-averaging to aggregate the MLEs for the exponential family. \cite{liu2014distributed} and \cite{han2016bootstrap} analyzed the asymptotic properties of this approach, and \cite{han2016bootstrap} proposed several variance reduction techniques to reduce bootstrap noise and thus relax the requirement of the bootstrap sample sizes to obtain the full-sample efficiency. We clarify the main differences between our work and theirs in the following aspects: 
\begin{enumerate}
    \item We focus on general supervised learning setups, while \cite{liu2014distributed} and \cite{han2016bootstrap} mainly focus on unsupervised learning problems. 
    \item Our analysis accommodates general loss functions and data distributions, while \cite{liu2014distributed} and \cite{han2016bootstrap} focus on the MLE problems under the exponential family. 
    \item Our main theoretical interest is to understand the systematic bias of ReBoot, which is independent of the number of data splits $m$ and thus cannot be reduced by increasing $m$. With the total sample size $N$ fixed, this bias determines the largest number of data splits one can have to maintain the full-sample efficiency and is widely acknowledged as a crucial statistical criteria to evaluate a one-shot distributed learning algorithm. To rigorously characterize this bias, we take a non-asymptotic approach and explicitly track the dependence of $m$ in the statistical rate. In contrast, the analysis of \cite{liu2014distributed} and \cite{han2016bootstrap} is asymptotic: they mainly focus on the first-order term with respect to $N$ and do not track $m$ in the second-order term. The rate of the systematic bias thus remains unclear therein, and so does the maximum data splits allowed to retain the full-sample efficiency. 
\end{enumerate}

Finally, we discuss recent development on multi-round communication algorithms. Multi-round communication has been found remarkably useful to alleviate the aforementioned restriction of the one-shot approaches on the subsample size to retain the full-sample efficiency. A natural multi-round approach, free of the subsample restriction, is performing gradient descent in a distributed manner, where the global gradient can be re-assembled by averaging all the local gradients. The problem with this proposal, however, lies in its communication efficiency: the number of communication rounds can scale polynomially with the sample size \citep{shamir2014communication}. To reduce the communication cost, \cite{shamir2014communication} proposed a novel distributed optimization framework called DANE, short for Distributed Approximate NEwton. In each iteration, DANE asks local machines to take an approximate Newton step based on global gradient and local Hessian and then transmit the updates to the central machine for averaging. For quadratic objectives, DANE provably enjoys a linear convergence rate that improves with the subsample size under reasonable assumptions, which implies that it can take just a constant number of iterations to reach the optimum.   \cite{jordan2018communication} and \cite{wang2017efficient} further applied this approximate Newton strategy to design new distributed algorithms for high-dimensional setups. \cite{jordan2018communication} referred to their algorithm as CSL, short for Communication-efficient Surrogate Likelihood. They also applied CSL to Bayesian inference for regular parametric models. There have also been recent works on new multi-round distributed algorithms for problems that are not amenable to CSL or DANE, including support vector machines \citep{wang2019distributed}, principal component analysis \citep{chen2021distributed}, quantile regression \citep{battey2021communication}, etc. Regarding the relationship between multi-round approaches and one-shot ones, we view a one-shot approach as the cornerstone for a multi-round approach, because it is natural to derive a multi-round algorithm from a one-shot one. Similarly to deriving FedAvg \citep{mcmahan2017communication} from one-shot averaging, in the real data analysis section, we propose FedReBoot, a multi-round extension of ReBoot, to train a convolutional neural network on decentralized data. We find that the CNN trained by FedReBoot enjoys higher prediction accuracy than that by FedAvg in the early rounds of communication, which we attribute to the higher statistical efficiency of ReBoot than averaging.

The rest of this paper is organized as follows. In Section \ref{sec:Problem setup}, we introduce GLM and noisy phase retrieval under the context of decentralized data. In Section \ref{sec:Methodology}, we elucidate the ReBoot algorithm. In Section \ref{sec:Statistical analysis}, we present the theoretical guarantee for ReBoot in the two aforementioned problems. In Section \ref{sec:Implementation}, we provide a practical implantation of ReBoot algorithm. In Section \ref{sec:Numerical studies}, we numerically compare ReBoot with existing methods via simulation. In Section \ref{sec:Real data}, we apply ReBoot to the Fashion-MNIST dataset \citep{xiao2017fashion} to learn a convolutional neural network (CNN) in a distributed fashion. We also propose FedReBoot and apply it to the Fashion-MNIST dataset in Section \ref{sec:Real data}. All the proofs are given in supplementary material.

\subsection{Notation}

We first introduce the notation that is used throughout this paper. By convention, we use regular letters for scalars, bold lower-case letters for vectors and bold capital letters for both matrices and tensors of order three or higher. We use $[n]$ to denote the set $\{1, \ldots, n\}$ for any positive integer $n$. We use $|\cdot|$ to denote  absolute value or cardinality of a set. Given $a, b \in \RR$, let $a \vee b$ denote the maximum of $a$ and $b$. For any function $f: \RR \rightarrow \RR$, we put primes in its superscript to denote its derivative, and the number of primes refers to the order of the derivative. For instance, $f''''$ is the fourth-order derivative function of $f$. For any $p$-dimensional vector $\bx = (x_1 \ldots x_p)^\top$ and $q \in[1, \infty)$, we define $\|\bx\|_q := (\sum_{i=1}^{p} |x_i|^q) ^ {1/q}$ and $\|\bx\|_\infty := \max _{i \in [p]} |x_i|$. 

Let $\be_j$ denote the unit vector with the $j$th element equal to one and other elements equal to zeros, and let $\bone_p$ denote the $p$-dimensional all-one vector. For any matrix $\bX \in \RR^{n_1 \times n_2}$, we use $\|\bX\|_2$ and $\fnorm{\bX}$ to denote the operator norm and the Frobenius norm of $\bX$ respectively. For any $q_1, q_2 \in[1, \infty]$, we use $\|\bX\|_{q_1 \rightarrow q_2} := \sup _{\|\bu\|_{q_1}=1}\|\bX \bu\|_{q_2}$ to denote its $q_1$-to-$q_2$ operator norm, where $\bX$ is viewed as a representation of a linear map from $(\RR^{n_1},\|\cdot\|_{q_1})$ to $(\RR^{n_2},\|\cdot\|_{q_2})$. For a symmetric matrix $\bX$, we use $\lambda_j(\bX)$ to denote the $j$th largest eigenvalue of $\bX$. For convenience, we also use $\lambda_{\max}(\bX)$ and $\lambda_{\min}(\bX)$ to denote the maximum and minimum eigenvalue of $\bX$. Let $\bI_p$ denote the $p \times p$ identity matrix. Given $\bx = (x_1 \ldots x_p)^\top$, we use $\diag(\bx)$ to denote the $p \times p$ diagonal matrix whose $j$th diagonal entry is $x_j$ for any $j \in [p]$. Let $\otimes$ denote the outer product. Given a $k$th-order symmetric tensor $\bA \in \RR ^ {p ^ k}$, for any $\bx \in \RR ^ p$, define the tensor product $\bA(\underbrace{\bx \otimes \ldots \otimes \bx}_{k - 1})$, which is in $\RR ^ p$, such that 
\[
    [\bA(\underbrace{\bx \otimes \ldots \otimes \bx}_{k - 1})]_i = \sum_{j_1, j_2, \ldots j_{k - 1} \in [p]} A_{i, j_1, \ldots, j_{k - 1}} \Pi_{t = 1}^{k - 1} x_{j_t},~\forall i \in [p],
\]
and define the operator norm of $\bA$ as
\[
    \|\bA\|_2 := \sup _{\|\bu\|_2 = 1, \bu \in \RR ^ p}\|\bA \underbrace{(\bu \otimes \bu \otimes \ldots \otimes \bu)}_{k-1}\|_2. 
\]

For two scalar sequences $\{a_n\}_{n \ge 1}$ and $\{b_n\}_{n \ge 1}$, we say $a_n \gtrsim b_n (a_n \lesssim b_n)$ if there exists a universal constant $C > 0$ such that $a_n \ge C b_n (a_n \le C b_n)$ for all $n \ge 1$. For any random variable $X$ valued in $\RR$ and any $r \in \NN$, define its (Orlicz) $\psi_r$-norm as
\[
    \|X\|_{\psi_r} := \inf \bigl\{k > 0: \EE \exp \{(|X| / k) ^ r\} \leq 2\bigr \}. 
\]
Similarly, for any random vector $\bx$ valued in $\RR^p$, define its (Orlicz) $\psi_r$-norm by
\[
	\|\bx\|_{\psi_r} := \sup _{\bu \in \cS ^ {p-1}} \|\bu ^ \top \bx\|_{\psi_r}, 
\]
where $\cS^{p-1}$ denotes the unit sphere in $\RR^p$. 

Define $\cB(\bbeta ^ \ast, r)$ to be the Euclidean ball of radius $r$ centered at $\bbeta ^ *$.

\section{Problem setups}
\label{sec:Problem setup}

Suppose we have $N$ independent observations $\cD := \{(\bx_i, Y_i)\}_{i=1}^{N}$ of $(\bx, Y)$ valued in $\RR ^ p \times \RR$. In matrix forms, write $\bX = (\bx_1, \ldots, \bx_N) ^ \top$ and $\by = (Y_1, \ldots, \allowbreak Y_N) ^ \top$. Denote the probability density or mass function of $\bx$ by $f_{\bx}$ and the conditional probability density or mass function of $Y$ given $\bx$ by $f_{Y | \bx}(\cdot| \bx; \bbeta^ *)$, where $\bbeta ^ * \in \RR ^ p$ parametrizes $f_{Y | \bx}$ and is of our interest. In the decentralized data setup, $\cD$ is distributed over $m$ sites, across which the communication is highly restricted. For simplicity, we assume that {$\cD$ is split uniformly at random over $m$ sites} so that each site has $n := N / m$ observations. For each $k \in [m]$, let $\cD^{(k)} := \big\{\big(\bx_1^{(k)},Y_1^{(k)}\big), \ldots, \big(\bx_n^{(k)}, Y_n^{(k)}\big)\big\}$ denote the subsample at the $k$th site. Similarly, write $\bX^{(k)} = \big(\bx_1^{(k)}, \ldots,  \bx_n^{(k)}\big)^\top$ and $\by^{(k)} = \big(Y_1^{(k)}, \ldots, Y_n^{(k)}\big)^\top$.

Our paper focuses on distributed estimation of $\bbeta ^ *$ through one-shot communication of local estimators. Under the centralized setup where the full sample $\cD$ is accessible, one often estimates $\bbeta ^ *$ by solving an empirical risk minimization problem as follows:  
\beq
    \label{eq:mle_full}
    \widehat \bbeta ^ {\mathrm{full}} \in \argmin_{\bbeta \in \cT} \frac{1}{N}\sum_{i = 1} ^ N \ell(\bbeta; (\bx_i, Y_i)). 
\eeq
Here $\cT$ is the parameter space, and $\ell: \cT \times \RR ^ p \times \RR \to \RR$ is a differentiable loss function. However, when the full data are decentralized, they are hard to access, thereby making it difficult to evaluate the global loss function above. To avoid massive data transfers, we instead communicate and aggregate local estimators $\{\widehat\bbeta ^ {(k)}\}_{k = 1} ^ m$. Formally, for any dataset $\cA$, a finite subset of $\RR ^ p \times \RR$, define $\ell_{\cA}(\bbeta) :=\frac{1}{|\cA|} \sum_{(\bx, Y) \in \cA} \ell(\bbeta; (\bx, Y))$. For any $k \in [m]$, the $k$th site computes 
\beq
    \label{eq:erm_local}
    \widehat{\bbeta} ^ {(k)} \in \underset{\bbeta \in \cT}{\argmin} ~\ell_{\cD ^ {(k)}} (\bbeta) =: \underset{\bbeta \in \cT}{\argmin} ~\ell ^ {(k)}(\bbeta). 
\eeq
Next, a central server collects all these local estimators $\{\widehat\bbeta ^ {(k)}\}_{k \in [m]}$ and aggregates them to generate an enhanced estimator, which we expect to enjoy comparable statistical accuracy as the full-sample estimator $\widehat \bbeta ^ {\mathrm{full}}$ \citep{Zhang2012Comunication, rosenblatt2016optimality}.

We consider two specific problem setups with decentralized data. The first setup is distributed estimation of the coefficients of a GLM with canonical link. There the conditional probability density function (PDF) of $Y$ given $\bx$ is defined as
\begin{equation}
\label{eq:pdf_glm}
    f_{Y|\bx}(y | \bx ; \bbeta ^ *)
    =c(y) \exp \bigg(\frac{y(\bx ^ \top \bbeta^{*})-b(\bx ^ \top \bbeta ^ *)}{\phi}\bigg)
    =c(y) \exp \bigg(\frac{y \eta-b(\eta)}{\phi}\bigg).      
\end{equation}
Here $\eta = \bx ^ {\top}\bbeta ^ *$ is the linear predictor, $\phi$ is the dispersion parameter, and $b: \RR \to \RR$ and $c: \RR \to \RR$ are known functions. Some algebra yields that $\EE (Y | \bx) = b'(\eta)$ and that $\var(Y | \bx) = \phi b''(\eta)$. The moment generating function of $Y$ is given by
\beq
    \label{eq:mgf_glm}
    \begin{aligned}
        M_Y(t) := \EE(e ^ {tY})  
        &= \int_{-\infty}^{\infty} c(y)\exp\biggl(\frac{(\eta + \phi t)y - b(\eta)}{\phi}\biggr)dy \\
        & = \int_{-\infty}^{\infty} c(y)\exp\biggl(\frac{(\eta + \phi t)y - b(\eta + \phi t) + b(\eta + \phi t) - b(\eta)}{\phi}\biggr)dy \\
        & = \exp[\phi ^ {-1}\{b(\eta + \phi t) - b(\eta)\}], 
    \end{aligned}
\eeq
where the last equation is due to the fact that for any $\eta \in \RR$, 
\[
    \int_{y = -\infty} ^ {\infty} c(y)\exp\bigg(\frac{y\eta - b(\eta)}{\phi}\bigg)dy = 1. 
\]
To estimate $\bbeta ^ *$, we choose the loss function in \eqref{eq:erm_local} to be negative log-likelihood, i.e., 
\beq
    \label{eq:nll}
    \ell(\bbeta; (\bx, Y)) = -Y\bx ^ \top \bbeta + b(\bx ^ \top \bbeta).
\eeq
Under the one-shot distributed learning framework, this means that the central server needs to aggregate $m$ local maximum likelihood estimators (MLEs) to learn $\bbeta ^ *$. 

The second problem that we consider is the noisy phase retrieval problem, which, unlike solving for the MLE under GLMs, is a non-convex problem. Basically, the problem of phase retrieval aims to recover the phase of a signal from the magnitude of its Fourier transformation. It has wide applications in X-ray crystallography, microscopy, optical imaging, etc. We refer the readers to \cite{shechtman2015phase} for more details on how the mathematical formulation of the problem is derived from the physical setting in optical imaging. Formally, consider the sensing vector $\bx$ and response $Y$ that conform to the following phase retrieval model with noise:
\begin{equation}
	\label{eq:noisy_phase_retrieval}
	Y = (\bx^\top\bbeta^ *)^2 + \varepsilon, 
\end{equation}
where $\varepsilon$ is the noise term that is independent of $\bx$. It is noteworthy that this model has an identifiability issue regarding $\bbeta ^ *$: flipping the sign of $\bbeta ^ *$ does not change the model at all! Therefore, in order to gauge the statistical error of an estimator of $\bbeta ^ *$, say $\widehat \bbeta$, we take the minimum of the distances between $\widehat \bbeta$ and $\bbeta ^ *$ and between $\widehat \bbeta$ and $-\bbeta ^ *$. Following \cite{candes2015phase}, \cite{ma2018implicit}, we choose the loss function to be the square loss in \eqref{eq:erm_local} to estimate $\bbeta ^ *$, i.e., $\ell(\bbeta; (\bx, Y)) = \{Y - (\bx ^ \top \bbeta) ^ 2\} ^ 2$. While the resulting least-squares problem is non-convex, it is solvable through a two-stage approach, exemplified by the Wirtinger Flow algorithm (Algorithm \ref{alg:wirtinger_flow}). This approach first derives a plausible initial estimator of $\bbeta ^ *$, say through spectral methods, and then refines this initial estimator through solving a local least squares problem around it. In the distributed learning context, all the sites perform this two-stage approach locally and send the least squares estimators to the central server for aggregation.

\section{Core methodology}
\label{sec:Methodology}

In this section, we introduce ReBoot, a one-shot distributed learning framework based on refitting bootstrap samples drawn from local models. The ReBoot framework consists of three main steps:
\begin{enumerate}
    \item The central server collects all the local estimators $\big\{\widehat\bbeta ^ {(k)}\big\}_{k = 1} ^ m$.
    \item For each $k \in [m]$, the central server generates a bootstrap sample $\big(\widetilde \bx ^ {(k)}, \widetilde Y ^ {(k)}\big)$ by first drawing a vector $\widetilde \bx ^ {(k)}$ from the feature distribution $f_{\bx}(\cdot)$, followed by drawing a response $\widetilde Y ^ {(k)}$ from the conditional distribution $f_{Y|\bx} \big(\cdot|\,\widetilde \bx ^ {(k)}, \widehat \bbeta ^ {(k)}\big)$. Let $\widetilde \bz ^ {(k)}$ denote the bootstrap sample $\big(\widetilde \bx ^ {(k)}, \widetilde Y ^ {(k)}\big)$.
    \item The central server aggregates all the bootstrap samples $\big\{\widetilde \bz ^ {(k)}\big\}_{k = 1} ^ m$ and forms the ReBoot loss function:
\begin{equation}
\label{eq:reboot_loss}
	\widetilde \ell (\bbeta) 
    := \frac{1}{m} \sum_{k = 1} ^ m \EE \Big\{\ell \big(\bbeta \, ;\,  \widetilde \bz ^ {(k)}\big) \big|\, \widehat\bbeta ^ {(k)}\Big\}
    = \frac{1}{m} \sum_{k = 1} ^ m \int_{\RR ^ {p + 1}} \ell \big(\bbeta \, ;\,  \widetilde \bz ^ {(k)}\big) f_{\widetilde \bz ^ {(k)}} \big(\widetilde \bz ^ {(k)}\,\big|\,  \widehat \bbeta ^ {(k)}\big) d  \widetilde \bz ^ {(k)},
\end{equation}
where $f_{\widetilde \bz ^ {(k)}} \big(\cdot|\, \widehat \bbeta ^ {(k)}\big)$ denotes the conditional distribution of $\widetilde \bz ^ {(k)}$ given $\widehat\bbeta ^ {(k)}$, and where $\EE \big(\cdot |\, \widehat\bbeta ^ {(k)}\big)$ denotes the conditional expectation given $\widehat \bbeta ^ {(k)}$. Here we denote the ReBoot loss by simply $\widetilde \ell (\bbeta)$ for notational convenience, though it is conditional on all the local estimators $\big\{\widehat\bbeta ^ {(k)}\big\}_{k = 1} ^ m$.
\end{enumerate}
By minimizing $\widetilde \ell (\bbeta)$ given the parameter space $\cT$, we derive the ReBoot estimator $\widehat \bbeta ^ {\mathrm{rb}}$ as
\begin{equation}
\label{eq:reboot_def}
    \widehat\bbeta ^ {\mathrm{rb}} 
    := \underset{\bbeta \in \cT}{\argmin} \ \widetilde \ell (\bbeta).
\end{equation}
Our framework assumes that the feature distribution $f_\bx(\cdot)$ is known when drawing the bootstrap samples $\{\widetilde \bx ^ {(k)}\}_{k = 1} ^ m$. However, this assumption may not hold in real-world circumstances. In such cases, we instead use an estimated feature distribution, denoted as $f_{\widetilde \bx}(\cdot)$, as a practical substitute for $f_\bx(\cdot)$ to generate bootstrap features. In the following sections, we will provide theoretical guarantee for relaxing this assumption and show robust empirical performances when using an estimated feature distribution.

It is worth emphasis that the methodology we present is conceptual; after all, the ReBoot loss function $\widetilde{\ell}(\bbeta)$ cannot be computed directly in practice. Nevertheless, minimizing this loss function does not require its exact form. In Section \ref{sec:Implementation}, we provide a practical implementation for the ReBoot algorithm by mini-batch stochastic gradient descent.

\section{Statistical analysis}
\label{sec:Statistical analysis}

In this section, we analyze the statistical error of the ReBoot estimator $\widehat \bbeta ^ {\mathrm{rb}}$ as defined in \eqref{eq:reboot_def} under the GLM and the noisy phase retrieval models. 

\subsection{Generalized linear models}

In this subsection, we analyze the statistical error of ReBoot under the GLM. Consider the pair $(\bx, Y)$ of the feature vector and response that satisfies the GLM \eqref{eq:pdf_glm}. We further assume the following conditions for the GLM. 

\begin{con}
\label{con:distribution}
	Suppose the feature vector $\bx$ satisfies that $\EE \bx = \bzero$ and that $\|\bx\|_{\psi_2} \le K$ with $K \geq 1$. Besides, $\lambda_{\min} \{\EE (\bx \bx^\top)\} \geq \kappa_0 > 0$.	
\end{con}

\begin{con}
\label{con:b_double_prime}
	(i) There exists $\tau: \RR_+ \to \RR_+$ such that for any $\eta \in \RR$ and any $\omega > 0$, $b'' (\eta) \geq \tau (\omega) > 0$ whenever $|\eta| \leq \omega$; (ii) $\forall \eta \in \RR$, $b''(\eta) \leq M$ with $1 \leq M < \infty$.
\end{con}

{\begin{con}
\label{con:b_four_prime}
	$\forall \eta \in \RR$, $|b''' (\eta)| \leq M$ and $|b'''' (\eta)| \leq M$. 
\end{con}}

Condition \ref{con:distribution} assumes that $\bx$ is centered and sub-Gaussian with covariance matrix positive definite. Condition \ref{con:b_double_prime} guarantees that the response is sub-Gaussian and non-degenerate when $\eta$ is bounded. To see this, by \eqref{eq:mgf_glm}, when $b''(\eta) \le M$ for any $\eta$, 
\[
    \EE\bigl[\exp\{t(Y - b'(\eta))\} | \bx\bigr] 
    = \exp\bigg(\frac{b(\eta + \phi t) - b(\eta) - \phi t b'(\eta)}{\phi}\bigg)
    \le \exp\biggl(\frac{\phi M t ^ 2}{2}\biggr), 
\]
which implies that $\|Y - b'(\eta)\|_{\psi_2} \lesssim (\phi M) ^ {1 / 2}$. Besides, $\var(Y | \bx) = \phi b''(\eta) \ge \phi \tau(\omega) > 0$ when $|\eta| \le \omega$; $Y$ is thus non-degenerate given $\bx$. In particular, in logistic regression, we can choose $\tau(\omega) = (3 + e ^ \omega) ^ {-1}$. 
Note that $\nabla ^ 2 \ell (\bbeta; (\bx, Y)) = b''(\bx ^ \top \bbeta) \bx \bx ^ \top$. Therefore, one implication of the conditions above is that $C_1 \preceq \nabla ^ 2 \ell_{\cD ^ {(k)}} (\bbeta) \preceq C_2$ for some positive constants $C_1$ and $C_2$ with high probability for any $k \in [m]$. \cite{han2016bootstrap} required similar conditions for the Hessian matrix of the log-likelihood to establish the statistical guarantee for their KL-averaging estimator. 
Finally, Condition \ref{con:b_four_prime} requires $b''' (\eta)$ and $b''''(\eta)$ to be bounded; this guarantees that the third and fourth derivatives of the empirical negative log-likelihood function enjoy fast concentration rates (see Lemmas \ref{lem:b_three_prime} and \ref{lem:b_four_prime} as well as their proof).
Similar assumptions appear in the statistical analysis of the na\"ive averaging estimator in \cite{rosenblatt2016optimality}, which required the sixth-order derivative of the loss function to be bounded when $p / n \rightarrow 0$.

Now we introduce more notation to facilitate the presentation. Recall that we choose the loss function to be the negative log-likelihood function as per \eqref{eq:nll}. The gradient and Hessian of the loss function $\ell_{\cD}(\bbeta)$ are respectively
\[
\begin{aligned}
	\nabla \ell_{\cD}(\bbeta) = \frac{1}{N} \sum_{i=1} ^ N \{b'(\bx_i ^ \top \bbeta) - Y_i\} \bx_i
	\quad \text{and}\quad
	\nabla ^ 2 \ell_{\cD}(\bbeta) = \frac{1}{N} \sum_{i=1} ^ N b''(\bx_i ^ \top \bbeta) \bx_i \bx_i ^ \top.    
\end{aligned}
\]
For simplicity, write the negative log-likelihood on the $k$th subsample $\ell_{\cD ^ {(k)}} (\bbeta)$ as $\ell ^{(k)}(\bbeta)$. 

Given that $\widehat\bbeta ^ \rb$ is derived from minimizing $\widetilde\ell(\bbeta)$, a standard approach to bound its statistical error is to first establish the local strong convexity of $\widetilde \ell(\bbeta)$ around $\bbeta ^ *$ and then bound $\ltwonorm{\nabla \widetilde \ell(\bbeta ^ *)}$ \citep{negahban2012unified, fan2018lamm, zhu2021taming}. These two ingredients are established in Proposition \ref{cor:_reboot_glr_lrsc} and Theorem \ref{thm:glm_reboot_gradient} respectively. 

We start with establishing the local strong convexity of $\widetilde \ell(\bbeta)$ around $\bbeta ^ *$. Towards this end, for any differentiable map $\ell: \RR ^ p \rightarrow \RR$, we define the first-order Taylor remainder of $\ell(\bbeta)$ at $\bbeta_0$ as
\[
	\delta \ell(\bbeta; \bbeta_0) := \ell(\bbeta) - \ell(\bbeta_0) - \nabla \ell(\bbeta_0) ^ \top (\bbeta - \bbeta_0).
\]

\begin{prop}
\label{cor:_reboot_glr_lrsc}
Let $\alpha := 2\log(64K ^ 2 / \kappa_0)$ and $\kappa := \kappa_0 \tau(K \alpha ^ {1/2} + K \alpha ^ {1/2} \ltwonorm{\bbeta ^ *}) / 4$. Under Conditions \ref{con:distribution} and \ref{con:b_double_prime}, we have
\begin{equation}
	\delta \widetilde \ell (\bbeta; \bbeta ^ *)	
	\ge \kappa \|\bbeta - \bbeta ^ *\|_2 ^ 2
\end{equation}
for any $\bbeta \in \cB(\bbeta ^ *, 1)$.
\end{prop}

Then we are establish the statistical rate of $\ltwonorm{\nabla\widetilde \ell(\bbeta ^ *)}$.

\begin{thm}
\label{thm:glm_reboot_gradient}
Suppose $n \ge C \max(\kappa_0 ^ {-2} K^ 4\alpha ^ 2 \log n, \kappa_0 ^ {-2} K^ 4 \alpha p, p ^ 2)$ where $C$ is a universal constant. Under Conditions \ref{con:distribution}, \ref{con:b_double_prime} and \ref{con:b_four_prime}, we have with probability at least $1 - (12 m + 14) n ^ {-4}$ that
\begin{equation}
\begin{aligned}
    \|\nabla \widetilde \ell(\bbeta ^ *) \|_2
	\lesssim C_{\kappa, \phi, M, K, \bSigma ^ {-1}}' \bigg(\frac{p \vee \log n}{mn}\bigg) ^ {1 / 2}
    + C_{\kappa, \phi, M, K, \bSigma ^ {-1}}'' \bigg(\frac{p \vee \log n}{n}\bigg) ^ {2}, 
\end{aligned}
\end{equation}
for some polynomial function $C_{\kappa, \phi, M, K, \bSigma ^ {-1}}'$ and $C_{\kappa, \phi, M, K, \bSigma ^ {-1}}''$ of $\kappa, \phi, M, K, \ltwonorm{\bSigma ^ {-1}}$.
\end{thm}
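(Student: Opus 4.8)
The plan is to split $\nabla\widetilde\ell(\bbeta^*)$ according to the two nested sources of randomness---the original data producing $\{\widehat\bbeta^{(k)}\}_{k\in[m]}$, and the Bootstrap draws given these local models---and to isolate the systematic ($m$-free) bias from the stochastic fluctuations. Write $F(\bbeta):=\EE_{\bx}\{b'(\bx^\top\bbeta)\,\bx\}$ for the population score map, and let $g^{(k)}(\bbeta^*):=\EE\{\nabla\ell_{\widetilde\cD^{(k)}}(\bbeta^*)\mid\widehat\bbeta^{(k)}\}$ be the conditional mean of the $k$th Bootstrap gradient at $\bbeta^*$. Since the Bootstrap features are drawn from $f_\bx$ and $\EE(\widetilde Y\mid\widetilde\bx,\widehat\bbeta^{(k)})=b'(\widetilde\bx^\top\widehat\bbeta^{(k)})$, one obtains the clean identity $g^{(k)}(\bbeta^*)=F(\bbeta^*)-F(\widehat\bbeta^{(k)})$. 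Setting $\bmu:=\EE\,g^{(1)}(\bbeta^*)$ (common across machines by symmetry), I decompose
\[
\nabla\widetilde\ell(\bbeta^*)=\underbrace{\Big(\nabla\widetilde\ell(\bbeta^*)-\tfrac1m\textstyle\sum_{k\in[m]}g^{(k)}(\bbeta^*)\Big)}_{\text{(I) Bootstrap noise}}+\underbrace{\tfrac1m\textstyle\sum_{k\in[m]}\big(g^{(k)}(\bbeta^*)-\bmu\big)}_{\text{(II) cross-machine fluctuation}}+\underbrace{\bmu}_{\text{(III) systematic bias}}.
\]

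For term (I), condition on $\{\widehat\bbeta^{(k)}\}_{k\in[m]}$: it is a mean-zero average of the $m\widetilde n$ independent summands $(b'(\widetilde\bx_i^{(k)\top}\bbeta^*)-\widetilde Y_i^{(k)})\widetilde\bx_i^{(k)}$, each the product of a sub-Gaussian feature and a centered sub-Gaussian residual whose $\psi_2$-norm is $\lesssim(\phi M)^{1/2}$ by the moment generating function bound following Condition \ref{con:b_four_prime}. A vector Bernstein/covering argument gives $\ltwonorm{\text{(I)}}\lesssim\{(p\vee\log n)/(m\widetilde n)\}^{1/2}\le\{(p\vee\log n)/(mn)\}^{1/2}$ with probability $1-O(n^{-4})$, using $\widetilde n\ge n$. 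For term (II), substituting Proposition \ref{prop:glm_decompn} shows $g^{(k)}(\bbeta^*)=\nabla\ell^{(k)}(\bbeta^*)+O\{(p\vee\log n)/n\}$, since the leading part of $-\bSigma(\widehat\bbeta^{(k)}-\bbeta^*)$ is exactly the local score $\bSigma\,\bSigma^{-1}\nabla\ell^{(k)}(\bbeta^*)$. Averaging the leading terms reassembles the full-sample score $\tfrac1m\sum_{k}\nabla\ell^{(k)}(\bbeta^*)=\nabla\ell_{\cD}(\bbeta^*)$, a mean-zero average over $N=mn$ observations of norm $\lesssim\{(p\vee\log n)/(mn)\}^{1/2}$, while the centered higher-order remainders form an $m$-average of i.i.d. terms of size $(p\vee\log n)/n$, contributing only $m^{-1/2}(p\vee\log n)/n\le\{(p\vee\log n)/(mn)\}^{1/2}$. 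Both feed the first term of the asserted bound.

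The crux is term (III): the naive expansion $\bmu=-\bSigma\EE(\widehat\bbeta^{(1)}-\bbeta^*)-\tfrac12\bTheta\,\EE\{(\widehat\bbeta^{(1)}-\bbeta^*)^{\otimes2}\}-\cdots$ leaves two competing $O\{(p\vee\log n)/n\}$ terms, and the clean route is to expose their cancellation \emph{structurally} before taking expectations. Let $\widehat F(\bbeta):=\tfrac1n\sum_{i\in[n]}b'(\bx_i^{(1)\top}\bbeta)\bx_i^{(1)}$ be the empirical score on the first subsample and $G:=F-\widehat F$. The stationarity identity $\widehat F(\widehat\bbeta^{(1)})=\tfrac1n\sum_iY_i^{(1)}\bx_i^{(1)}$ together with $F(\bbeta^*)=\EE(Y\bx)$ yields $g^{(1)}(\bbeta^*)=\{F(\bbeta^*)-\tfrac1n\sum_iY_i^{(1)}\bx_i^{(1)}\}-G(\widehat\bbeta^{(1)})$, whose first bracket is mean-zero, so that
\[
\bmu=-\EE\{G(\widehat\bbeta^{(1)})\}.
\]
Taylor-expanding $G$ to second order about $\bbeta^*$ and taking expectations, I use three facts: (a) $\EE\{G(\bbeta^*)\}=\bzero$ because $G(\bbeta^*)$ is a centered average at the fixed point $\bbeta^*$; (b) $\nabla G(\bbeta^*)=-(\nabla^2\ell^{(1)}(\bbeta^*)-\bSigma)$, whose action on the leading error $-\bSigma^{-1}\nabla\ell^{(1)}(\bbeta^*)$ has zero mean since, conditionally on the features, the Hessian fluctuation is measurable while the score summands are mean-zero; and (c) $\EE\{\nabla^2 G(\bbeta^*)\}=\bzero$, so its contraction against the deterministic leading covariance $\EE\{(\bSigma^{-1}\nabla\ell^{(1)}(\bbeta^*))^{\otimes2}\}=(\phi/n)\bSigma^{-1}$ also vanishes. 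What survives is of two types. First, expectations of odd-degree products of independent centered sample averages, which are $O(n^{-2})$ rather than $O(n^{-3/2})$ because only the fully-diagonal index coincidence survives. Second, the Proposition \ref{prop:glm_decompn} remainder $\be$ \emph{always} multiplied by a fluctuation of size $\{(p\vee\log n)/n\}^{1/2}$ (namely $\nabla G(\bbeta^*)$ or $\nabla^2 G(\bbeta^*)$), so that $\EE\|\,\text{fluctuation}\times\be\,\|\lesssim\{(p\vee\log n)/n\}^{2}$; together with the third-order Taylor term $\tfrac16\EE\{\nabla^3G(\xi)(\widehat\bbeta^{(1)}-\bbeta^*)^{\otimes3}\}$, bounded by a uniform-over-$\cB(\bbeta^*,r)$ control $\sup\|\nabla^3G\|\lesssim\{(p\vee\log n)/n\}^{1/2}$ (the derivative-process concentration guaranteed by $b''''\le M$) times $\ltwonorm{\widehat\bbeta^{(1)}-\bbeta^*}^3\lesssim\{(p\vee\log n)/n\}^{3/2}$. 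Collecting these gives $\ltwonorm{\bmu}\lesssim C''_{\kappa,\phi,M,K,\bSigma^{-1}}\{(p\vee\log n)/n\}^2$.

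The main obstacle is precisely this third step: recognizing that rewriting $\bmu=-\EE\{G(\widehat\bbeta^{(1)})\}$ forces the two $O(1/n)$ terms to cancel before expectation, and then certifying that \emph{every} residual is genuinely $O((p/n)^2)$ and not $O((p/n)^{3/2})$. Three points need care. First, the half-integer power must be ruled out everywhere via odd-moment vanishing, and the otherwise fatal remainder $\be$ must be kept paired with a small factor $\nabla G(\bbeta^*)$ or $\nabla^2 G(\bbeta^*)$, so that only the second-order Proposition \ref{prop:glm_decompn}---not a third-order expansion of the local estimator---is required. Second, because Proposition \ref{prop:glm_decompn} holds only on an event of probability $1-O(n^{-4})$, the unconditional expectations defining $\bmu$ require bounding the complementary event crudely through $\ltwonorm{F(\bbeta^*)-F(\widehat\bbeta^{(1)})}\lesssim\opnorm{\bSigma}\,\ltwonorm{\widehat\bbeta^{(1)}-\bbeta^*}$ and the tail of Proposition \ref{prop:glm_local_mse}, which is negligible against $(p/n)^2$. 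Third, the polynomial constant and the $p$-dependence must be propagated through the tensor contractions, the governing estimate being $\ltwonorm{\bTheta(\bSigma^{-1})}\lesssim MK^3\opnorm{\bSigma^{-1}}\,p$ and its higher-order analogues, controlled by the sub-Gaussianity of $\bx$, Condition \ref{con:b_four_prime}, and the derivative-concentration Lemmas \ref{lem:b_three_prime}--\ref{lem:b_five_prime}. Finally, a union bound over the $m$ local expansions (each costing the $4n^{-4}$ of Proposition \ref{prop:glm_local_mse} and the $12n^{-4}$ of Proposition \ref{prop:glm_decompn}, plus local Hessian concentration) and the $O(n^{-4})$ events for the pooled Bootstrap sample produces the stated probability $1-(22m+24)n^{-4}$.
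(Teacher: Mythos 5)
Your top-level decomposition is genuinely different from the paper's. The paper never forms the conditional mean $g^{(k)}(\bbeta^*)=F(\bbeta^*)-F(\widehat\bbeta^{(k)})$; instead it writes two matched fourth-order Taylor expansions, of $\nabla\widetilde\ell^{(k)}(\widehat\bbeta^{(k)})$ and of $0=\nabla\ell^{(k)}(\widehat\bbeta^{(k)})$ around $\bbeta^*$, so that the population tensors $\bSigma,\bTheta,\bOmega$ contracted with $\bDelta^{(k)}$ cancel algebraically, leaving ten terms that are products of empirical-minus-population derivative tensors with powers of $\bDelta^{(k)}$. Past that point your mechanisms and the paper's coincide: Proposition \ref{prop:glm_decompn} is substituted for $\bDelta^{(k)}$; the leading bias candidates vanish because in a GLM the Hessian and higher derivatives at $\bbeta^*$ depend only on the features while the score is conditionally centered (your step (b), the paper's $T_{31},T_{32}$); the surviving expectation is $O((p/n)^2)$ by diagonal index-counting (your ``odd-degree products,'' the paper's $T_{33}$ and $\EE\bs_1$ computations); the remainder $\be$ is always paired with a fluctuation of size $\{(p\vee\log n)/n\}^{1/2}$ (the paper's $T_{35}$); and cross-machine fluctuations are handled by Hoeffding plus $n\gtrsim p^2$. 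Your identity $\bmu=-\EE\{G(\widehat\bbeta^{(1)})\}$ is an elegant repackaging that reduces the Bootstrap randomness to a single concentration step and lets you stop at a third-order expansion of the smooth deterministic map $F$, rather than carrying the paper's fourth- and fifth-derivative terms.

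There is, however, a genuine gap: your systematic bias is an \emph{unconditional} expectation, $\bmu=\EE\{F(\bbeta^*)-F(\widehat\bbeta^{(1)})\}$, and nothing in Conditions \ref{con:distribution}--\ref{con:b_four_prime} guarantees that this expectation exists. These conditions admit, for instance, logistic regression with a discrete sub-Gaussian design, where the subsample is linearly separable with small but positive probability and the local MLE diverges, and the Gaussian GLM ($b(\eta)=\eta^2/2$) with a design whose Gram matrix is singular with positive probability. On such events $\ltwonorm{\widehat\bbeta^{(1)}-\bbeta^*}=\infty$, the stationarity identity $\widehat F(\widehat\bbeta^{(1)})=\frac{1}{n}\sum_{i}Y_i\bx_i$ fails, and when $b'$ is unbounded even $\EE F(\widehat\bbeta^{(1)})$ is undefined, so your terms (II) and (III) do not make sense as written. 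Your proposed repair---Lipschitzness of $F$ plus ``the tail of Proposition \ref{prop:glm_local_mse}''---does not close this: that proposition is a single-threshold probability bound; it supplies no tail control beyond its threshold and no moment bound, and no such moment bound can hold in general by the examples above. The paper avoids this issue entirely by never integrating over the bad event: its decomposition is asserted only on the event that all local MLEs exist and Proposition \ref{prop:glm_decompn} is valid, and the only exact expectations it evaluates ($T_{33}$, $\EE\bs_1$) are expectations of explicit polynomials of the data, finite by sub-Gaussianity. To rescue your argument you would need to define the bias conditionally on the good event and propagate the resulting $O(mn^{-4})$ perturbations through every centering step (including the centering of term (II)), or truncate the local estimators---a restructuring, not a remark. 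A smaller imprecision: in your step (c), $\nabla^2G(\bbeta^*)$ and $(\bSigma^{-1}\nabla\ell^{(1)}(\bbeta^*))^{\otimes 2}$ are \emph{not} independent, so $\EE\{\nabla^2G(\bbeta^*)\}=\bzero$ alone does not make that contraction vanish; conditioning on the features leaves $\frac{\phi}{n}\EE\bigl\{(\bTheta-\nabla^3\ell^{(1)}(\bbeta^*))\,\bSigma^{-1}(\nabla^2\ell^{(1)}(\bbeta^*)-\bSigma)\bSigma^{-1}\bigr\}$, which is $O(\mathrm{poly}(p)/n^2)$ and hence within budget, but it must be accounted for rather than declared zero.
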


Simply speaking, Theorem \ref{thm:glm_reboot_gradient} shows that under appropriate assumptions, $\ltwonorm{\nabla \widetilde \ell (\bbeta ^ *)} = O_{\mathbb{P}}\big\{\big(\frac{p \vee \log n}{mn}\big) ^ {1 / 2} + \big(\frac{p \vee \log n}{n}\big) ^ 2\big\}$. The first term is a concentration term that corresponds to $\ltwonorm{\nabla\widetilde \ell(\bbeta ^ *) - \EE \nabla\widetilde \ell(\bbeta ^ *)}$, and the second term is a bias term that corresponds to $\ltwonorm{\EE \nabla\widetilde \ell(\bbeta ^ *)}$. Finally, we combine Proposition \ref{cor:_reboot_glr_lrsc} and Theorem \ref{thm:glm_reboot_gradient} to achieve the following statistical rate of the ReBoot estimator.

\begin{thm} 
\label{thm:glm_reboot_mse}
Under the same conditions as in Theorem \ref{thm:glm_reboot_gradient}, we have with probability at least $1 - (12 m + 14) n ^ {-4}$ that
\begin{equation}
    \|\widehat \bbeta ^ \rb - \bbeta^ *\|_2
	\lesssim \kappa ^ {-1} \bigg\{C_{\kappa, \phi, M, K, \bSigma ^ {-1}}' \bigg(\frac{p \vee \log n}{mn}\bigg) ^ {1 / 2}
    + C_{\kappa, \phi, M, K, \bSigma ^ {-1}}'' \bigg(\frac{p \vee \log n}{n}\bigg) ^ {2}\bigg\}, 
\end{equation}
where $C_{\kappa, \phi, M, K, \bSigma ^ {-1}}'$ and $C_{\kappa, \phi, M, K, \bSigma ^ {-1}}''$ are the same as in Theorem \ref{thm:glm_reboot_gradient}.
\end{thm}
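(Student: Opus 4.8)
The plan is to combine the two ingredients already established—the local strong convexity of $\widetilde\ell$ around $\bbeta^*$ from Corollary~\ref{cor:_reboot_glr_lrsc} and the gradient bound from Theorem~\ref{thm:glm_reboot_gradient}—through a standard localized basic-inequality argument for $M$-estimators, exploiting the convexity of the GLM negative log-likelihood (the Hessian \eqref{eq:hessian} is positive semidefinite since $b''\ge 0$). Throughout, write $G := \ltwonorm{\nabla\widetilde\ell(\bbeta^*)}$ for the quantity controlled by Theorem~\ref{thm:glm_reboot_gradient}. First I would fix the high-probability event: apply Corollary~\ref{cor:_reboot_glr_lrsc} with $t = 32\log n$, so that its failure probability $2e^{-t/8}$ equals $2n^{-4}$, and intersect with the event of Theorem~\ref{thm:glm_reboot_gradient}; a union bound then yields the stated success probability $1-(22m+26)n^{-4}$.

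Next I would pin down the localization radius. Set $r := (4/\kappa)G$ and verify, using the sample-size hypotheses $\widetilde n\ge n$ and $n \gtrsim \kappa^{-2}K^4 p$ together with $t=32\log n$, that the tolerance factor $\alpha(t/(m\widetilde n))^{1/2} + 16(2\alpha p/(m\widetilde n))^{1/2}$ appearing in Corollary~\ref{cor:_reboot_glr_lrsc} is bounded by $\kappa_0/(4K^2)$. On that event the bracket in the corollary is at least $\kappa_0 r^2/4$, so, recalling $\kappa = \kappa_0\tau((2+\ltwonorm{\bbeta^*})\alpha)/4$, I obtain the clean lower bound
\[
    \delta\widetilde\ell(\bbeta;\bbeta^*) \ge \frac{\kappa}{2}\ltwonorm{\bbeta-\bbeta^*}^2 \qquad \text{for all } \bbeta \in \cB(\bbeta^*, r).
\]

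The one genuinely non-mechanical step is showing that $\widehat\bbeta^\rb$ actually lies in $\cB(\bbeta^*, r)$, where this strong convexity is valid. For any $\bbeta$ on the sphere $\ltwonorm{\bbeta-\bbeta^*}=r$, a first-order expansion, the display above, and Cauchy--Schwarz give
\[
    \widetilde\ell(\bbeta) - \widetilde\ell(\bbeta^*) = \delta\widetilde\ell(\bbeta;\bbeta^*) + \nabla\widetilde\ell(\bbeta^*)^\top(\bbeta-\bbeta^*) \ge \frac{\kappa}{2}r^2 - G r = r\Bigl(\frac{\kappa}{2}r - G\Bigr) = \frac{4G^2}{\kappa} > 0,
\]
using $r=(4/\kappa)G$. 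Since $\widetilde\ell$ is convex and strictly exceeds $\widetilde\ell(\bbeta^*)$ everywhere on $\partial\cB(\bbeta^*,r)$, its global minimizer $\widehat\bbeta^\rb$ must fall in the interior of the ball (for a convex function, exceeding the center value on a sphere forces the minimum into the enclosed ball).

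Finally, with $\widehat\bbeta^\rb\in\cB(\bbeta^*,r)$ secured, I would close via the basic inequality. Optimality gives $\widetilde\ell(\widehat\bbeta^\rb)\le\widetilde\ell(\bbeta^*)$, whence $\delta\widetilde\ell(\widehat\bbeta^\rb;\bbeta^*) \le -\nabla\widetilde\ell(\bbeta^*)^\top(\widehat\bbeta^\rb-\bbeta^*) \le G\,\ltwonorm{\widehat\bbeta^\rb-\bbeta^*}$; combining with $\frac{\kappa}{2}\ltwonorm{\widehat\bbeta^\rb-\bbeta^*}^2 \le \delta\widetilde\ell(\widehat\bbeta^\rb;\bbeta^*)$ and cancelling one factor of $\ltwonorm{\widehat\bbeta^\rb-\bbeta^*}$ yields $\ltwonorm{\widehat\bbeta^\rb-\bbeta^*}\le(2/\kappa)G$. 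Substituting the bound on $G$ from Theorem~\ref{thm:glm_reboot_gradient} reproduces exactly the claimed rate. The main obstacle is the localization bookkeeping—confirming the tolerance term is dominated so that Corollary~\ref{cor:_reboot_glr_lrsc} delivers genuine $\kappa$-strong convexity over a ball of the right radius, and choosing $t$ so the probabilities aggregate to the stated figure—rather than any fresh analytic estimate, since the delicate $m$-free bias term was already absorbed into Theorem~\ref{thm:glm_reboot_gradient}.
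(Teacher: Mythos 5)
Your proposal follows essentially the same route as the paper: both combine the local strong convexity of $\widetilde\ell$ (Corollary \ref{cor:_reboot_glr_lrsc} applied with $t = 32\log n$, hence failure probability $2n^{-4}$) with the gradient bound of Theorem \ref{thm:glm_reboot_gradient}, and your probability accounting $1-(22m+26)n^{-4}$ matches the paper's exactly. The only real difference is the localization device: the paper introduces the interpolated point $\widehat\bbeta^\rb_\eta = \bbeta^* + \eta(\widehat\bbeta^\rb-\bbeta^*)$, applies the basic inequality to it (using convexity of $\widetilde\ell$, just as you do, to ensure $\widetilde\ell(\widehat\bbeta^\rb_\eta)\le\widetilde\ell(\bbeta^*)$), and keeps the corollary's tolerance term additively as $2r/3$ with the choice $r = 3\kappa^{-1}\times$(gradient bound); you instead run the ``exceeds the center value on the sphere'' argument with $r = 4G/\kappa$ and demand the tolerance factor be dominated by $\kappa_0/(4K^2)$. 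These are two equivalent standard localizations.

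One claim in your writeup is wrong as stated, although it does not sink the proof. The tolerance term in Corollary \ref{cor:_reboot_glr_lrsc} is proportional to $r^2$, not to $\ltwonorm{\bbeta-\bbeta^*}^2$; so after dominating it you obtain $\delta\widetilde\ell(\bbeta;\bbeta^*) \ge \kappa\ltwonorm{\bbeta-\bbeta^*}^2 - (\kappa/2)r^2$, which equals $(\kappa/2)r^2$ on the sphere $\ltwonorm{\bbeta-\bbeta^*}=r$ but is \emph{not} bounded below by $(\kappa/2)\ltwonorm{\bbeta-\bbeta^*}^2$ at interior points (it is negative whenever $\ltwonorm{\bbeta-\bbeta^*} < r/\sqrt{2}$). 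Consequently your final basic-inequality step --- invoking $(\kappa/2)\ltwonorm{\widehat\bbeta^\rb-\bbeta^*}^2 \le \delta\widetilde\ell(\widehat\bbeta^\rb;\bbeta^*)$ at the interior point $\widehat\bbeta^\rb$ and cancelling a factor to conclude $\ltwonorm{\widehat\bbeta^\rb-\bbeta^*}\le(2/\kappa)G$ --- is not justified by the corollary. Fortunately that step is redundant: your sphere argument (which uses the corollary only at $\ltwonorm{\bbeta-\bbeta^*}=r$, where the bound is valid) already confines $\widehat\bbeta^\rb$ to $\cB(\bbeta^*,r)$ with $r = 4G/\kappa \lesssim \kappa^{-1}G$, and substituting the bound on $G$ from Theorem \ref{thm:glm_reboot_gradient} yields exactly the claimed rate. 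So simply delete the last step, or rerun it with the correct interior lower bound, i.e.\ $\kappa\ltwonorm{\widehat\bbeta^\rb-\bbeta^*}^2 - (\kappa/2)r^2 \le Gr$, which gives $\ltwonorm{\widehat\bbeta^\rb-\bbeta^*} \le \sqrt{12}\,G/\kappa$; either way the proof is complete.
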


The most salient advantage of the ReBoot estimator relative to the other one-shot distributed approaches is its sharper rate of systematic bias, i.e., $O\{(\frac{\max(p, \log n)}{n}) ^ 2\}$. Here the systematic bias refers to the error that is independent of the number of subsamples $m$; it thus persists however many subsamples we have and can be viewed as the statistical bottleneck of a distributed estimator. When the dimension $p$ is fixed, the systematic bias of the na\"ive averaging approach is well known to be $O(n ^ {-1})$ \citep{Zhang2012Comunication, battey2018distributed, rosenblatt2016optimality}. Besides, \cite{Zhang2012Comunication} proposed the SAVGM estimator based on bootstrap subsampling with systematic bias of order $O(n ^ {-3 / 2})$. Our ReBoot estimator further sharpens this rate to be $O(n ^ {-2}\log n)$. 
An important benefit of  small systematic bias is that it allows more data splits to maintain the full-sample statistical accuracy. For instance, with $p$ fixed, some algebra yields that whenever $m = O\big\{\big(\frac{N}{\log N}\big) ^ {3 / 4}\big\} $, $\ltwonorm{\widehat\bbeta ^ \mathrm{rb} - \bbeta ^ *} = O_{\mathbb{P}}\big\{\big(\frac{\log n}{N}\big) ^ {1 / 2}\big\}$, which is nearly the full-sample rate (up to a logarithmic factor). In contrast, the na\"ive averaging estimator and the SAVGM estimator require $m = O(N ^ {1 / 2})$ and $m = O(N ^ {2 / 3})$ respectively to yield the full-sample statistical rate; both requirements are more restrictive than that of ReBoot. 

Finally, we point out that the high-order bias of ReBoot can also be achieved by other distributed algorithms with, however, higher communication cost. For instance, \cite{huang2019distributed} proposed a distributed estimator that incurs two rounds of communication:  the local servers first send the local estimators to the central server for averaging, and after receiving the averaged estimator $\overline\bbeta$, they send the local gradient and Hessian matrices evaluated at $\overline\bbeta$ to the central server to perform a global Newton-Raphson step from $\overline\bbeta$. \cite{huang2019distributed} showed that the systematic bias of the resulting estimator is of order $O(n ^ {-2})$. The communication cost of this algorithm is of order $O(m(p + p ^ 2))$, while that of ReBoot is of order $O(mp)$.

We now relax the assumption that the feature distribution $f_\bx (\cdot)$ is known when we draw the bootstrap samples $\{\widetilde \bx ^ {(k)}\}_{k = 1} ^ m$. Suppose that we draw the bootstrap sample from a misspecified or estimated feature distribution $f_{\widetilde \bx}(\cdot)$, which deviates from the unknown true feature distribution $f_\bx (\cdot)$. In the following, we investigate the statistical consequence.

\begin{cor}
\label{thm:misspecified}
Suppose we draw bootstrap feature vectors from $\cN(\widetilde \bmu, \widetilde \bS)$, while the true feature vector follow $\cN(\bmu, \bS)$ with $\|\bmu - \widetilde \bmu\|_2 \leq 1$ and $\|\bS - \widetilde \bS\|_2 \leq 1$. Under the same conditions as in Theorem \ref{thm:glm_reboot_gradient}, we have with probability at least $1 - 16 (m + 1) n ^ {-4}$ that
\begin{equation}
\begin{aligned}
	\|\widehat \bbeta ^ \rb - \bbeta^ *\|_2
	\lesssim \kappa ^ {-1} \bigg\{&C_{\kappa, \phi, M, K, \bSigma ^ {-1}}' \bigg(\frac{p \vee \log n}{mn}\bigg) ^ {1 / 2} + C_{\kappa, \phi, M, K, \bSigma ^ {-1}}'' \bigg(\frac{p \vee \log n}{n}\bigg) ^ {2} \\
	&\quad\quad+ C_{\kappa, \phi, M, K, \bbeta ^ *, \bSigma ^ {-1}} \big(\|\widetilde \bS -\bS\|_2 ^ {1/2} + \|\widetilde \bmu - \bmu\|_2\big) \bigg(\frac{p \vee \log n}{n}\bigg) \bigg\}, 
\end{aligned}
\end{equation}
for some polynomial function $C_{\kappa, \phi, M, K, \bbeta ^ *, \bSigma ^ {-1}}$ of $\kappa$, $\phi$, $M$, $K$, $\|\bbeta ^ *\|_2$, $\ltwonorm{\bSigma ^ {-1}}$, and $C_{\kappa, \phi, M, K, \bSigma ^ {-1}}'$ and $C_{\kappa, \phi, M, K, \bSigma ^ {-1}}''$ are the same as in Theorem \ref{thm:glm_reboot_gradient}.
\end{cor}

Corollary \ref{thm:misspecified} demonstrates that the bias rate of ReBoot remains sharp if the true feature distribution and the one used for bootstrap are not far from each other. For instance, when the feature vector $\bx$ follows normal distribution $\cN(\bmu, \bSigma)$ with unknown mean $\bmu$ and covariance $\bSigma $, we use the average estimators $\bar \bmu$ and $\bar \bSigma$ to construct $f_{\widetilde \bx}$, where $\|\bar \bmu  - \bmu\|_2 = O_{\PP}((p/N) ^ {-1/2})$ and $\|\bar \bSigma  - \bSigma\|_2 = O_{\PP}((p/N) ^ {-1/2})$. Under the same conditions as in Theorem \ref{thm:glm_reboot_gradient}, we can still achieve that  $\ltwonorm{\widehat \bbeta ^ \rb - \bbeta^ *} = O_{\mathbb{P}}\big\{\big(\frac{p \vee \log n}{mn}\big) ^ {1 / 2} + \big(\frac{p \vee \log n}{n}\big) ^ 2\big\}$.

\subsection{Noisy phase retrieval}

\begin{figure}[t]
	\centering
	\includegraphics[scale=.36]{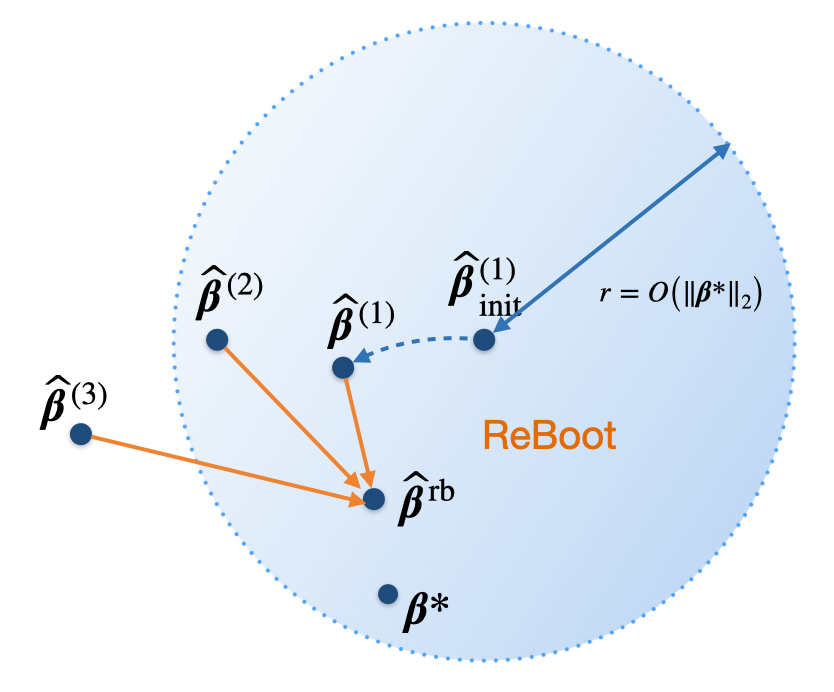}
	\caption{ The ReBoot procedure for the noisy phase retrieval problem. The blue circle represents the Euclidean ball within which we minimize the ReBoot loss $\widetilde \ell(\bbeta)$. The dashed blue arrow corresponds to the step of refining $\widehat\bbeta ^ {(1)}_{\mathrm{init}}$, implemented by gradient descent starting from $\widehat\bbeta ^ {(k)}_{\mathrm{init}}$ without any specification of the local neighborhood (see Algorithm \ref{alg:wirtinger_flow} for details). The orange arrows correspond to the ReBoot procedure.} 
	\label{fig: pr_reboot} 
\end{figure} 

In this subsection, we analyze ReBoot for the noisy phase retrieval problem. The distribution of the sensing vector $\bx$ is determined in advance. Therefore, we do not consider the misspecification of feature distribution in this section. Let $(\bx, Y)$ follow the phase retrieval model \eqref{eq:noisy_phase_retrieval} with the following condition. 
\begin{con}
\label{con:gaussian_pr}
    $\bx \sim \cN(\bzero_p, \bI_p)$ and $\varepsilon \sim \cN(0, 1)$. 
\end{con}
Condition \ref{con:gaussian_pr} assumes that the design vector $\bx$ follows the standard Gaussian distribution. The Gaussian tail of  $\varepsilon$ is imposed for technical simplicity and can be generalized to sub-Gaussian tails. As mentioned in Section \ref{sec:Problem setup}, we choose the loss function to be the square loss, that is, 
\[
\ell(\bbeta; (\bx, Y)) = \{Y - (\bx ^ \top \bbeta) ^ 2\} ^ 2.
\]

Now we introduce a two-stage approach to estimate $\bbeta ^ *$ for local sites. On each subsample $\cD ^ {(k)}$, we first use the spectral method to obtain an initial estimator $\widehat\bbeta_{\mathrm{init}} ^ {(k)}$, i.e., 
\[
    \widehat{\bbeta}_{\rm{init}} ^ {(k)} := (\widehat{\lambda} ^ {(k)} / 3)^{1 / 2} \widehat{\bv} ^ {(k)},
\]
where $\widehat{\lambda} ^ {(k)}$ and $\widehat{\bv} ^ {(k)}$ are the leading eigenvalue and eigenvector of $\frac{1}{n} \sum_{i=1}^{n} y ^ {(k)} _i \bx_i ^ {(k)} \bx_{i}^{(k)\top}$. We then refine $\widehat\bbeta_{\mathrm{init}} ^ {(k)}$ by solving a constrained least squares problem within a small Euclidean ball around $\widehat\bbeta_{\mathrm{init}} ^ {(k)}$. Specifically, let 
\begin{equation}
\label{eq:pr_local_refine}
	\widehat \bbeta ^ {(k)} := \argmin _{\bbeta \in \cB(\widehat\bbeta ^ {(k)} _ {\rm{init}}, \|\bbeta ^ *\|_2/26)} \ell ^ {(k)} (\bbeta). 
\end{equation}
Once we obtain the local estimators $\big\{\widehat \bbeta ^ {(k)}\big\}_{k \in [m]}$, we derive the ReBoot estimator $\widehat \bbeta ^ \rb$ as follows:
\begin{equation}
    \label{eq:pr_rb}
	\widehat \bbeta ^ \rb := \argmin _{\bbeta \in \cB(\widehat \bbeta ^ {(1)}, \|\bbeta ^ *\|_2/26)} \widetilde \ell(\bbeta). 
\end{equation}
Figure \ref{fig: pr_reboot} illustrates the local refinement step \eqref{eq:pr_local_refine} as well as the ReBoot aggregation step \eqref{eq:pr_rb}. Note that the local neighborhood radius $\ltwonorm{\bbeta ^ *} / 26$ in these two steps is set only to facilitate theoretical analysis and does not need to be specified in practice. 

The distance between the initial estimator $\widehat \bbeta_{\rm{init}}^{(k)}$ and the true parameter $\bbeta^*$ is shown to be reasonably small in \cite{candes2015phase} and \cite{ma2018implicit}, justifying the validity of the refinement step in \eqref{eq:pr_local_refine}. We also provide the statistical rate of $\widehat \bbeta_{\rm{init}} ^ {(k)}$ in Supplementary material.
Then we establish the root-$n$ rate of the restricted least squares estimator $\widehat \bbeta ^ {(k)}$ as follows.
\begin{prop}
\label{prop:pr2}
Under Condition \ref{con:gaussian_pr}, there exists a universal positive constant $C$ such that whenever $n \ge C\max(p ^ 2, \log n)$, for any $k \in [m]$, we have
\[
	\|\widehat \bbeta ^ {(k)} - \bbeta ^ *\| _ 2 \lesssim \|\bbeta ^ *\|_2 ^ {-1} \bigg(\frac{p \vee \log n}{n}\bigg) ^ {1 / 2}
\]
 with probability at least $1 - 42 n ^ {-2}$. Moreover, $\EE \|\widehat \bbeta ^ {(k)} - \bbeta ^ *\|_2 ^ 2 
		\lesssim \|\bbeta ^ *\| _ 2 ^ {-2} (p/n)$.
\end{prop}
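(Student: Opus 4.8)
The plan is to follow the canonical two-ingredient recipe for $M$-estimators: establish that the empirical loss $\ell^{(k)}$ is strongly convex on a neighborhood of $\bbeta^*$ (the local restricted strong convexity of property (ii)), control the size of the gradient $\nabla\ell^{(k)}(\bbeta^*)$, and combine the two through the basic inequality afforded by the constrained optimality of $\widehat\bbeta^{(k)}$. Because property (i) places the feasible ball inside $\cB(\bbeta^*,\ltwonorm{\bbeta^*}/13)$ and Proposition~\ref{prop:pr1} places $\bbeta^*$ itself in the feasible ball, both $\widehat\bbeta^{(k)}$ and $\bbeta^*$ lie in the region where strong convexity holds, so the combination is immediate once the two ingredients are in hand.

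For the gradient, note that at $\bbeta^*$ the model gives $y_i=(\bx_i^\top\bbeta^*)^2+\varepsilon_i$, so by \eqref{eq:pr_grad_hessian},
\[
\nabla\ell^{(k)}(\bbeta^*)=-\frac1n\sum_{i=1}^n \varepsilon_i(\bx_i^\top\bbeta^*)\bx_i,
\]
which has mean zero by independence of $\varepsilon_i$ and $\bx_i$. I would bound its norm by $\sup_{\ltwonorm{\bu}=1}\frac1n\sum_i\varepsilon_i(\bx_i^\top\bbeta^*)(\bx_i^\top\bu)$ via a net over $\cS^{p-1}$. Each summand is a product of three Gaussian factors, hence sub-Weibull, and a Bernstein-type bound controls the fixed-$\bu$ average; the Wick identity $\EE[(\bx^\top\bbeta^*)^2(\bx^\top\bu)^2]=\ltwonorm{\bbeta^*}^2+2(\bbeta^{*\top}\bu)^2\le 3\ltwonorm{\bbeta^*}^2$ pins the per-term variance at order $\ltwonorm{\bbeta^*}^2$. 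Union-bounding over a net of cardinality $e^{O(p)}$ then yields $\ltwonorm{\nabla\ell^{(k)}(\bbeta^*)}\lesssim \ltwonorm{\bbeta^*}(p/n)^{1/2}$ with probability $1-O(n^{-2})$ once $n\gtrsim\log n$.

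The crux is the uniform strong convexity. At the population level, the Gaussian identity $\EE[(\bx^\top\bbeta)^2\bx\bx^\top]=\ltwonorm{\bbeta}^2\bI_p+2\bbeta\bbeta^\top$ gives
\[
\EE\,\nabla^2\ell^{(k)}(\bbeta)=(3\ltwonorm{\bbeta}^2-\ltwonorm{\bbeta^*}^2)\bI_p+6\bbeta\bbeta^\top-2\bbeta^*\bbeta^{*\top},
\]
whose smallest eigenvalue, for $\bbeta\in\cB(\bbeta^*,\ltwonorm{\bbeta^*}/13)$, is bounded below by a positive multiple of $\ltwonorm{\bbeta^*}^2$; writing $\bbeta=\bbeta^*+\bh$ with $\ltwonorm{\bh}\le\ltwonorm{\bbeta^*}/13$ and noting that the rank-two perturbation contributes at least $-6\ltwonorm{\bbeta^*}\ltwonorm{\bh}$ is exactly what forces the radius constant $1/13$. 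The real work is to upgrade this to the empirical Hessian uniformly over the ball: the fluctuation $\nabla^2\ell^{(k)}(\bbeta)-\EE\nabla^2\ell^{(k)}(\bbeta)$ is a sum of the heavy-tailed (degree-four) matrices $\{3(\bx_i^\top\bbeta)^2-y_i\}\bx_i\bx_i^\top$, and I would bound its operator norm uniformly through a truncation-plus-net argument — a net over $\cS^{p-1}$ for the quadratic form and a second net over the ball for $\bbeta$, using Lipschitzness of $\bbeta\mapsto\nabla^2\ell^{(k)}(\bbeta)$ to pass between net points. It is precisely the fourth-moment concentration of these quartic forms that demands the sample-size condition $n\gtrsim p^2$; this uniform spectral control is the main obstacle. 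The outcome is $\lambda_{\min}(\nabla^2\ell^{(k)}(\bbeta))\gtrsim\ltwonorm{\bbeta^*}^2=:\mu$ for all $\bbeta$ in the ball, with probability $1-O(n^{-2})$.

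With both ingredients, strong monotonicity of $\nabla\ell^{(k)}$ on the ball together with the constrained optimality inequality $\langle\nabla\ell^{(k)}(\widehat\bbeta^{(k)}),\widehat\bbeta^{(k)}-\bbeta^*\rangle\le 0$ gives $\mu\ltwonorm{\widehat\bbeta^{(k)}-\bbeta^*}^2\le\langle-\nabla\ell^{(k)}(\bbeta^*),\widehat\bbeta^{(k)}-\bbeta^*\rangle\le\ltwonorm{\nabla\ell^{(k)}(\bbeta^*)}\ltwonorm{\widehat\bbeta^{(k)}-\bbeta^*}$, whence $\ltwonorm{\widehat\bbeta^{(k)}-\bbeta^*}\le\mu^{-1}\ltwonorm{\nabla\ell^{(k)}(\bbeta^*)}\lesssim\ltwonorm{\bbeta^*}^{-1}((p\vee\log n)/n)^{1/2}$, the stated high-probability rate. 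For the second-moment bound I would split on the good event above and its complement (probability at most $42n^{-2}$): on the good event the squared error is $\lesssim\ltwonorm{\bbeta^*}^{-2}(p/n)$, while on the complement the constraint $\widehat\bbeta^{(k)}\in\cB(\widehat\bbeta^{(k)}_{\mathrm{init}},\ltwonorm{\bbeta^*}/26)$ bounds the error by $\ltwonorm{\bbeta^*}/26+\ltwonorm{\widehat\bbeta^{(k)}_{\mathrm{init}}-\bbeta^*}$, and a crude fourth-moment bound on the spectral initializer (via $\widehat\lambda^{(k)}\le\opnorm{n^{-1}\sum_i y_i\bx_i\bx_i^\top}$) shows this contribution is $O(n^{-2})$ up to signal-strength factors, hence dominated by $\ltwonorm{\bbeta^*}^{-2}(p/n)$ when $n\gtrsim p^2$.
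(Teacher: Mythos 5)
Your proposal is correct in outline and follows the same overall skeleton as the paper (gradient bound $+$ local curvature $+$ basic inequality, plus a good-event/bad-event split for the moment bound), but your key ingredient -- the local strong convexity -- is established by a genuinely different route. You propose uniform operator-norm control of $\nabla^2\ell^{(k)}(\bbeta)$ over the entire ball $\cB(\bbeta^*,\ltwonorm{\bbeta^*}/13)$ via truncation plus a double net (one net for the quadratic form, one net in $\bbeta$). The paper never puts a net on $\bbeta$: it Taylor-expands $\delta\ell^{(k)}(\bbeta^*+\bDelta;\bbeta^*)=\tfrac12\bDelta^\top\nabla^2\ell^{(k)}(\bbeta^*+v\bDelta)\bDelta$ and splits the quadratic form into four pieces -- the signal term $\tfrac1n\sum_i(\bx_i^\top\bbeta^*)^2(\bx_i^\top\bDelta)^2$, a cubic cross term, a nonnegative quartic term that is simply dropped, and a noise term -- each controlled \emph{uniformly in $\bDelta$} by events involving only fixed empirical quantities: operator-norm concentration of $\tfrac1n\sum_i(\bx_i^\top\bbeta^*)^2\bx_i\bx_i^\top$ (Lemma \ref{lem:pr_center_hess}), the bound $\tfrac1n\|\bX\|_{2\rightarrow 4}^4\le 10/3$ obtained by Gaussian comparison (Lemma \ref{PRlem6}), and operator-norm concentration of $\tfrac1n\sum_i\varepsilon_i\bx_i\bx_i^\top$ (Lemma \ref{lem:pr_center_veps}); the cubic term is then killed by H\"older's inequality through $\|\bX\|_{2\rightarrow 4}$. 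What the paper's trick buys is precisely the step you flag as the main obstacle: a naive union bound over an $e^{O(p)}$-point net in $\bbeta$ with Adamczak-type tails for the quartic forms (which are only $\psi_{1/2}$) produces a $t^2/n$ deviation term that, with $t\asymp p\log p$, is not $o(1)$ under the stated assumption $n\gtrsim p^2$ alone -- so in your route the truncation is not a refinement but essential, and carrying it out (\`a la the local-curvature lemmas in the Wirtinger Flow literature) is where the real work lies. Conversely, your treatment of the second-moment bound is actually more careful than the paper's: on the bad event the paper bounds the squared error deterministically by $(\ltwonorm{\bbeta^*}/13)^2$, which is not justified because the constraint ball is centered at the \emph{random} initializer $\widehat\bbeta^{(k)}_{\mathrm{init}}$, whereas you correctly observe that a crude moment bound on the spectral initializer is needed to control that contribution. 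One shared caveat: feasibility of $\bbeta^*$ (needed for your variational inequality and for the paper's inequality $\ell^{(k)}(\widehat\bbeta^{(k)})\le\ell^{(k)}(\bbeta^*)$) requires $\ltwonorm{\widehat\bbeta^{(k)}_{\mathrm{init}}-\bbeta^*}\le\ltwonorm{\bbeta^*}/26$ rather than the $\ltwonorm{\bbeta^*}/13$ stated in Proposition \ref{prop:pr1}; this constant mismatch is present in the paper itself and is harmless since the proof of Proposition \ref{prop:pr1} yields $2\gamma\ltwonorm{\bbeta^*}$ for any fixed $\gamma$.
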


Finally, we establish the statistical rate of the ReBoot estimator $\widehat \bbeta ^ \rb$.
\begin{thm}
\label{thm:reboot_phase_retrieval}
	 Suppose that $n \ge C \max(p ^ 2, \log ^ 5 n)$ for some universal positive constant $C$. Then under Condition \ref{con:gaussian_pr}, we have
\begin{equation}
    \|\widehat \bbeta ^ \rb - \bbeta ^ *\|_2
	\lesssim \|\bbeta ^ *\|_2 ^ {-1} \bigg(\frac{p \vee \log n}{mn}\bigg) ^ {1/2} 
    + \big(\|\bbeta ^ *\|_2 ^ {-1}+ \|\bbeta ^ *\|_2 ^ {-3}\big)\bigg(\frac{p \vee \log n}{n}\bigg) ,
\end{equation}
with probability at least $1 - (46 m + 6)n ^ {-2}$.
\end{thm}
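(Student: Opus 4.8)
The plan is to follow the same two-ingredient recipe used for the GLM analysis: establish local strong convexity of the ReBoot objective $\widetilde\ell$ near $\bbeta^*$, bound the gradient $\|\nabla\widetilde\ell(\bbeta^*)\|_2$, and combine the two via the standard basic inequality for $M$-estimators. First I would check feasibility: by Proposition \ref{prop:pr2} applied to $\widehat\bbeta^{(1)}$, $\|\widehat\bbeta^{(1)}-\bbeta^*\|_2\le\|\bbeta^*\|_2/26$ with high probability once $n\gtrsim\max(p^2,\log n)$, so $\bbeta^*\in\cB(\widehat\bbeta^{(1)},\|\bbeta^*\|_2/26)\subset\cB(\bbeta^*,\|\bbeta^*\|_2/13)$. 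On this ball I would show $\lambda_{\min}\{\nabla^2\widetilde\ell(\bbeta)\}\gtrsim\|\bbeta^*\|_2^2$ uniformly. Conditioning on the local estimators, the conditional mean of the Bootstrap Hessian at $\bbeta$ is $\frac1m\sum_k\EE_{\bx}[\{3(\bx^\top\bbeta)^2-(\bx^\top\widehat\bbeta^{(k)})^2\}\bx\bx^\top]$; when $\bbeta,\widehat\bbeta^{(k)}$ are close to $\bbeta^*$ this is close to $2(\|\bbeta^*\|_2^2\bI_p+2\bbeta^*\bbeta^{*\top})$, whose least eigenvalue is $2\|\bbeta^*\|_2^2$. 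A uniform Gaussian concentration argument over the small ball (as in the local analysis behind Proposition \ref{prop:pr2}) then transfers positive definiteness from the conditional mean to $\nabla^2\widetilde\ell$ itself, giving strong-convexity constant $\mu\gtrsim\|\bbeta^*\|_2^2$.

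The crux is bounding $\|\nabla\widetilde\ell(\bbeta^*)\|_2$, which I would split, conditionally on $\{\widehat\bbeta^{(k)}\}_{k\in[m]}$, into a Bootstrap-concentration piece and a systematic-bias piece,
\[
\nabla\widetilde\ell(\bbeta^*)=\Bigl\{\nabla\widetilde\ell(\bbeta^*)-\tfrac1m\textstyle\sum_k g(\widehat\bbeta^{(k)})\Bigr\}+\tfrac1m\textstyle\sum_k g(\widehat\bbeta^{(k)}),
\]
where $g(\bu):=\EE_{\bx}[\{(\bx^\top\bbeta^*)^2-(\bx^\top\bu)^2\}(\bx^\top\bbeta^*)\bx]$ is the conditional mean gradient (the fresh Bootstrap noise drops out). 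Using the Gaussian identity $\EE[(\bx^\top\bu)^2(\bx^\top\bbeta^*)\bx]=\|\bu\|_2^2\bbeta^*+2(\bu^\top\bbeta^*)\bu$ one finds $g(\bbeta^*)=\bzero$ and, writing $\bDelta^{(k)}:=\widehat\bbeta^{(k)}-\bbeta^*$,
\[
g(\widehat\bbeta^{(k)})=\underbrace{-4(\bbeta^{*\top}\bDelta^{(k)})\bbeta^*-2\|\bbeta^*\|_2^2\bDelta^{(k)}}_{\text{linear}}\underbrace{-\|\bDelta^{(k)}\|_2^2\bbeta^*-2(\bDelta^{(k)\top}\bbeta^*)\bDelta^{(k)}}_{\text{quadratic}}.
\]
The concentration piece is an average of $m\widetilde n$ conditionally centered polynomials in the Gaussian design; a covering-plus-truncation argument (the source of the $\log^5 n$ requirement, needed to tame the heavy polynomial tails) gives $\lesssim\|\bbeta^*\|_2\bigl(\tfrac{p\vee\log n}{m\widetilde n}\bigr)^{1/2}\le\|\bbeta^*\|_2\bigl(\tfrac{p\vee\log n}{mn}\bigr)^{1/2}$ using $\widetilde n\ge n$.

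For the bias piece I expect the real work. The quadratic terms are immediately controlled by $\frac1m\sum_k\|\bDelta^{(k)}\|_2^2$, which is $O(\|\bbeta^*\|_2^{-2}\,\tfrac{p}{n})$ by the second-moment bound in Proposition \ref{prop:pr2}, contributing an $m$-free term of order $\|\bbeta^*\|_2^{-1}\tfrac{p}{n}$. The linear terms, however, are a linear functional of $\overline\bDelta:=\frac1m\sum_k\bDelta^{(k)}$ with coefficient of size $\|\bbeta^*\|_2^2$, so a naive bound $\|\overline\bDelta\|_2\le\frac1m\sum_k\|\bDelta^{(k)}\|_2=O(\|\bbeta^*\|_2^{-1}(\tfrac{p}{n})^{1/2})$ would produce a spurious $(\tfrac{p}{n})^{1/2}$ bias and must be avoided by exploiting cancellation. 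To do so I would expand each local error as $\bDelta^{(k)}=-\bH^{-1}\nabla\ell^{(k)}(\bbeta^*)+\br^{(k)}$, with $\bH$ the population Hessian and $\br^{(k)}$ a higher-order remainder of order $\tfrac{p}{n}$ — the phase-retrieval analogue of the decomposition in Proposition \ref{prop:glm_decompn} — so that the leading terms are conditionally mean-zero and independent across $k$. Averaging them yields fluctuation $O(\|\bbeta^*\|_2^{-1}(\tfrac{p}{mn})^{1/2})$, which merges into the concentration term, while the averaged remainders give the remaining $m$-free $\tfrac{p}{n}$ bias. Collecting both pieces gives $\|\nabla\widetilde\ell(\bbeta^*)\|_2\lesssim\|\bbeta^*\|_2\bigl(\tfrac{p\vee\log n}{mn}\bigr)^{1/2}+(\|\bbeta^*\|_2+\|\bbeta^*\|_2^{-1})\tfrac{p\vee\log n}{n}$.

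Finally I would combine the ingredients: since $\bbeta^*$ is feasible and $\widehat\bbeta^\rb$ minimizes $\widetilde\ell$ over the ball, the basic inequality $\widetilde\ell(\widehat\bbeta^\rb)\le\widetilde\ell(\bbeta^*)$ together with local strong convexity yields $\tfrac{\mu}{2}\|\widehat\bbeta^\rb-\bbeta^*\|_2^2\le\|\nabla\widetilde\ell(\bbeta^*)\|_2\,\|\widehat\bbeta^\rb-\bbeta^*\|_2$ with $\mu\gtrsim\|\bbeta^*\|_2^2$, hence $\|\widehat\bbeta^\rb-\bbeta^*\|_2\lesssim\mu^{-1}\|\nabla\widetilde\ell(\bbeta^*)\|_2$; dividing the gradient bound by $\|\bbeta^*\|_2^2$ reproduces the stated rate. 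The probability $1-(78m+46)n^{-2}$ is assembled by a union bound over the $m$ local events from Propositions \ref{prop:pr1} and \ref{prop:pr2} (the $78m$ contribution, under which each $\widehat\bbeta^{(k)}$ is also consistently sign-aligned to $\bbeta^*$ rather than $-\bbeta^*$, which is what makes the averaging meaningful) together with the Bootstrap concentration and strong-convexity events. The hardest part, as flagged, is the bias analysis: engineering the cancellation in the linear term through a high-order expansion of the non-convex local estimator while correctly tracking every power of $\|\bbeta^*\|_2$.
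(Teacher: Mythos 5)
Your proposal is correct and keeps the paper's skeleton --- feasibility of $\bbeta^*$ via Propositions \ref{prop:pr1} and \ref{prop:pr2}, local strong convexity of $\widetilde\ell$ on $\cB(\bbeta^*,\ltwonorm{\bbeta^*}/13)$ with constant $\asymp\ltwonorm{\bbeta^*}^2$, the basic inequality, and a gradient bound whose fluctuation is the full-sample score and whose $m$-free bias comes from second-order remainders of the local errors --- but your treatment of $\nabla\widetilde\ell(\bbeta^*)$ is genuinely different. The paper expands both $\nabla\widetilde\ell^{(k)}(\widehat\bbeta^{(k)})$ and the local score equation $\nabla\ell^{(k)}(\widehat\bbeta^{(k)})=\bzero$ at $\bbeta^*$ to third order, subtracts, and bounds six empirical terms: the bootstrap-side terms by concentration on the bootstrap design, and the biased term $T_4=\frac{1}{m}\sum_k(\nabla^2\ell^{(k)}(\bbeta^*)-\bSigma)\bDelta^{(k)}$ (its mean is nonzero because the phase-retrieval Hessian involves the responses) by Cauchy--Schwarz on its expectation, using the moment bounds in Proposition \ref{prop:pr2} and Lemma \ref{lem:pr_center_hess}, plus Hoeffding for its fluctuation. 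You instead integrate out the bootstrap sample exactly (Wick identity), leaving one conditionally centered noise piece plus the closed-form conditional mean $g$; substituting $\bDelta^{(k)}=-\bSigma^{-1}\nabla\ell^{(k)}(\bbeta^*)+\br^{(k)}$ into the linear part $-\bSigma\overline{\bDelta}$ recovers exactly the paper's $T_1+T_4+T_6$, so the two decompositions agree term by term, with the quadratic part of $g$ playing the role of the conditional means of the paper's bootstrap terms. What your route buys: no empirical higher-derivative bounds are needed on the bootstrap side, and --- more substantively --- conditional centering across all $m\widetilde n$ bootstrap draws gives the bootstrap noise at rate $\ltwonorm{\bbeta^*}\{(p\vee\log n)/(m\widetilde n)\}^{1/2}$, whereas the paper bounds its $T_2$ by the triangle inequality over $k$, which yields only $\ltwonorm{\bbeta^*}\{(p\vee\log n)/\widetilde n\}^{1/2}$ and, under the stated assumption $\widetilde n\ge n$ alone, does not reduce to the theorem's rate; your conditional-centering argument is what actually closes that step. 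What it costs: the closed form of $g$ (and of the strong-convexity conditional mean) leans on the exactly Gaussian design, while the paper's empirical expansion mirrors the GLM proof structure. The only pieces you still need to write out are the claimed expansion itself --- it is not a standalone proposition in the paper, but follows from $\bSigma\bDelta^{(k)}=-\nabla\ell^{(k)}(\bbeta^*)-(\nabla^2\ell^{(k)}(\bbeta^*)-\bSigma)\bDelta^{(k)}-\nabla^3\ell^{(k)}(\bar\bbeta^{(k)})(\bDelta^{(k)}\otimes\bDelta^{(k)})$, whose last two terms are of order $(\ltwonorm{\bbeta^*}+\ltwonorm{\bbeta^*}^{-1})(p\vee\log n)/n$ on the paper's events --- together with the observation (which you make) that these remainders have genuinely nonzero mean due to the Hessian-noise/error correlation, so no further cancellation is available and the bias floor is $(p\vee\log n)/n$: precisely why the phase-retrieval bias is $O(n^{-1})$ rather than the GLM's $O(n^{-2})$.
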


Theorem \ref{thm:reboot_phase_retrieval} implies that whenever $m = O\big\{\big(\frac{N}{p \vee \log N}\big) ^ {1/2}\big\} $, $\ltwonorm{\widehat\bbeta ^ \mathrm{rb} - \bbeta ^ *} = O_{\mathbb{P}}\big\{\big(\frac{p \vee \log n}{N}\big) ^ {1/2}\big\}$, which achieves nearly the full-sample rate. The proof strategy is analogous to that for Theorem \ref{thm:glm_reboot_mse}. We first show that under Condition \ref{con:gaussian_pr}, $\ltwonorm{\nabla \widetilde \ell (\bbeta ^ *)} = O_{\PP}\big\{\big(\frac{p}{mn}\big) ^ {1 / 2} + \big(\frac{p \vee \log n}{n}\big)\big\}$. The root-$N$ rate corresponds to the concentration term $\ltwonorm{\nabla\widetilde \ell(\bbeta ^ *) - \EE \nabla\widetilde \ell(\bbeta ^ *)}$, resembling the counterpart rate in Theorem \ref{thm:glm_reboot_gradient}. The $O(\frac{p \vee \log n}{n})$ term corresponds to the bias term $\ltwonorm{\EE \nabla \widetilde \ell(\bbeta ^ *)}$. Note that this bias rate is slower than that in the GLM, which is due to the difference in the Hessian structure between the two setups. Unlike the GLM setup, the local Hessian $\nabla ^ 2 \ell ^ {(k)} (\bbeta)$ in the noisy phase retrieval problem depends on the responses $(y^{(k)}_i)_{i \in [n]}$, so that some term in the high-order decomposition of $\nabla \widetilde \ell(\bbeta ^ *)$ is not unbiased any more. The new bias then gives the rate $O\big(\frac{p \vee \log n}{n}\big)$ of $\ltwonorm{\EE \nabla \widetilde \ell(\bbeta ^ *)}$.

\section{Implementation of ReBoot}
\label{sec:Implementation}

In this section, we provide a practical implementation of the ReBoot algorithm using mini-batch stochastic gradient descent (SGD) \citep{robbins1951stochastic, bottou2010large}. At its core, the ReBoot framework first constructs the loss function $\widetilde \ell (\bbeta)$ based on the bootstrap samples drawn from all local models, and then obtains the ReBoot estimator $\widehat \bbeta ^ {\mathrm{rb}}$ by minimizing this loss function. It is noteworthy that
\[
    \frac{1}{m \widetilde n} \sum_{k = 1} ^ m \sum_{i = 1} ^ {\widetilde n} \ell \big(\bbeta \, ;\,  \big(\widetilde \bx_{i} ^ {(k)}, \widetilde Y_{i} ^ {(k)}\big)\big) 
    \rightarrow \widetilde \ell (\bbeta),
    \ \ \text{as }
    \widetilde n \rightarrow \infty,
\]
where $\widetilde \bx_i ^ {(k)}$ is drawn from $f_{\bx}(\cdot)$, and $\widetilde Y_i ^ {(k)}$ is subsequently drawn from $f_{Y|\bx} \big(\cdot|\,\widetilde \bx_i ^ {(k)}, \widehat \bbeta ^ {(k)}\big)$. In practice, we directly apply SGD to minimize the ReBoot loss, instead of generating a large bootstrap sample at once to form the ReBoot loss. Specifically, in each iteration, we compute a mini-batch stochastic gradient based on small bootstrap samples drawn from the local models, perform the gradient descent and then discard these bootstrap samples. The memory thus only needs to hold the bootstrap samples generated at the current iteration. 

\IncMargin{1em}
\begin{algorithm}[t]
    \SetAlgoNlRelativeSize{-1}
    \SetNlSty{textbf}{}{:}
    \caption{Refitting bootstrap samples (ReBoot) via mini-batch SGD}   
    \label{alg:reboot}
    
    \Indentp{-1.35em}
    \KwIn{the local estimators $\big\{\widehat{\bbeta} ^ {(k)}\big\}_{k=1} ^ m$, the initial estimator $\bbeta_0$, the iteration number $T$, the fraction $L$, the local batch size $B$, the learning rate $\mu$}
    \Indentp{1.35em}
    
    \For{$t = 0, \ldots, T - 1$}{
    	Draw a random set $\cM_t \subseteq [m]$ with $|\cM_t| = \lfloor Lm\rfloor$\;
    	\For{$k \in \cM_t$}{
    		Draw samples $\widetilde{\bx}_{i,t} ^ {(k)}$ from $f_{\bx}(\cdot), i \in [B]$\;
    		Draw responses $\widetilde{Y}_{i,t} ^ {(k)}$ from $f_{Y | \bx} \big(\cdot \big|\widetilde{\bx}_{i,t} ^ {(k)}, \widehat{\bbeta} ^ {(k)}\big), i \in [B]$\;
    		$\widetilde \cD_t ^ {(k)} \leftarrow \big\{\big(\widetilde \bx_{i,t} ^ {(k)}, \widetilde Y_{i,t} ^ {(k)}\big)\big\}_{i \in [B]}$\;
    	}
    	$\widetilde \cD_t \leftarrow \bigcup_{k \in \cM_t} \widetilde \cD_t ^ {(k)}$\;
    	$\bbeta_{t + 1} \leftarrow \bbeta_t - \mu \nabla \ell_{\widetilde \cD_t} (\bbeta_t)$\;
    }
    \Indentp{-1.35em}
    \KwOut{$\bbeta_T$}
    \Indentp{1.35em}
\end{algorithm} 
\DecMargin{1em}

In Algorithm \ref{alg:reboot}, we present the pseudocode of the ReBoot algorithm. During the $t$-th iteration, the ReBoot algorithm begins by selecting a random subset $\cM_t \subseteq [m]$ with a size of $|\cM_t| = \lfloor Lm\rfloor$ (line 2). For each $k \in \cM_t$, the central server generates a bootstrap sample, $\widetilde \cD_t ^ {(k)} = \big\{\big(\widetilde \bx_{i, t} ^ {(k)}, \widetilde Y_{i, t} ^ {(k)}\big)\big\}_{i=1} ^ B$, of size $B$, given the feature distribution $f_{\bx}$, the conditional distribution $f_{Y|\bx}$, and the local estimator $\widehat\bbeta^{(k)}$ (lines 4 - 6). Subsequently, it pools all bootstrap samples $\{\widetilde \cD_t ^ {(k)}\}_{k \in \cM_t}$ together to form a mixed bootstrap sample $\widetilde \cD_t$ (line 8). Then it updates the current estimator $\bbeta_t$ using the gradient of the loss function evaluated on $\widetilde \cD_t$ (line 9).

\section{Numerical studies}
\label{sec:Numerical studies}

\begin{figure*}[t]
    \centering
    \includegraphics[scale = 0.71]{./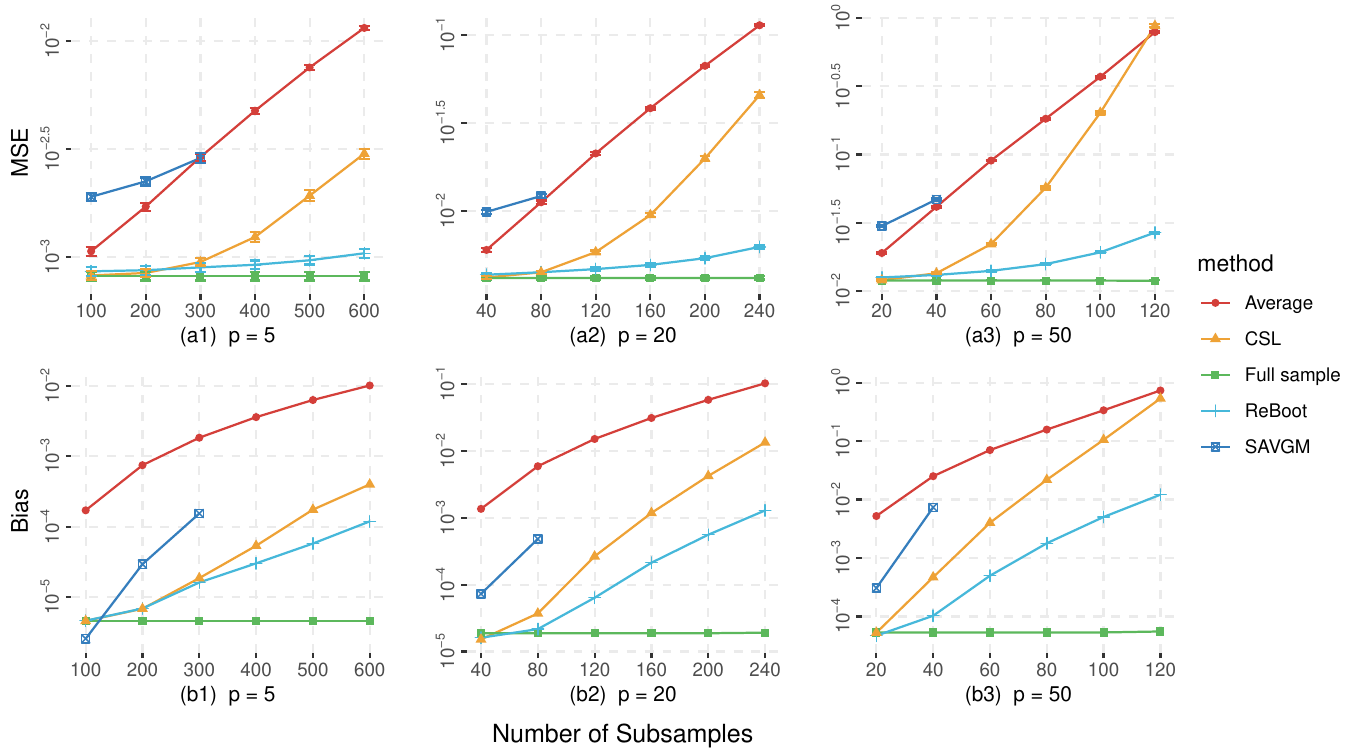}
    \caption{
    MSE (panels (a1), (a2) and (a3)) and bias (panels (b1), (b2) and (b3)) versus the subsample number $m$ under logistic regression. In panels (a1) and (b1), $p = 5$; in panels (a2) and (b2), $p = 20$; in panels (a3) and (b3), $p = 50$.}
    \label{fig:1}
\end{figure*}

In this section, we conduct simulation and real data analysis to illustrate the performance of our ReBoot method. We first compare the estimation MSE and bias of ReBoot with those of the na\"ive averaging, SAVGM \citep{Zhang2012Comunication} and CSL \citep{jordan2018communication} with one round of gradient communication under logistic regression, Poisson regression and noisy phase retrieval. Then we investigate the sensitivity of ReBoot with respect to misspecification of the design distribution under logistic regression, which is inevitable in practice. Finally, we compare the averaging method and ReBoot in terms of aggregating multiple subsample-based convolutional neural networks (CNNs) on the Fashion-MNIST \citep{xiao2017fashion} dataset. 

\subsection{MSE and bias comparison}

In this section, we focus on comparing the MSE and bias of ReBoot, averaging, SAVGM, CSL under logistic regression, Poisson regression and noisy phase retrieval. 

\subsubsection{Logistic regression}

In each Monte Carlo experiment, we first generate $N = 24,000$ independent observations $\{(\bx_i, Y_i)\}_{i \in [N]}$ of $(\bx, Y)$ satisfying that $\bx \sim \cN(\mathbf{0}_{p}, \bI_p)$ and that $Y | \bx \sim \mathrm{Ber}\big((1 + e ^ {-\bx ^ \top \bbeta ^ *}) ^ {-1}\big)$, where $\bbeta^ * = 0.2\times\bone_p$. We then divide $N$ observations into $m$ subsamples, each of which has $n = N / m$ observations, and solve for a local MLE of $\bbeta ^ *$ on each subsample. Finally, we apply ReBoot, averaging, SAVGM and CSL with one rounds of gradient communication to estimate $\bbeta ^ *$. In ReBoot, we correctly specify $f_{\bx}$ to be the probability density function (PDF) of $\cN(\bzero_p, \bI_p)$ and set $L = 1$, $B = 1$, $T = 1000$ and $\mu = 0.1$. In SAVGM, we choose the subsampling rate $r = 0.5$; a smaller $r$ causes failure of convergence of the MLE procedure, while a larger $r$ gives worse estimation accuracy. Figure \ref{fig:1} compares the MSE and bias in terms of estimating $\bbeta ^ *$ of all these methods based on $200$ independent Monte Carlo experiments with $N$ fixed and $m$ growing. We have the following observations: 
\begin{enumerate}   
    \item The MSE and bias of all the investigated approaches tend to increase as $m$ increases.
    \item When $m$ is large, ReBoot yields significantly less bias and MSE than CSL. In particular, their performance gap increases as $m$ grows. This is consistent with the high-order dependence of the MSE and bias of ReBoot on the subsample size $n$ in Theorem \ref{thm:glm_reboot_mse}.
    \item {When $p$ grows from $5$ to $50$}, the superiority of ReBoot becomes more substantial in terms of both bias and MSE.  
    \item Averaging and SAVGM are much worse than CSL and ReBoot in all the cases.
\end{enumerate}

\subsubsection{Poisson regression}

\begin{figure*}[t]
    \centering
    \includegraphics[scale = 0.71]{./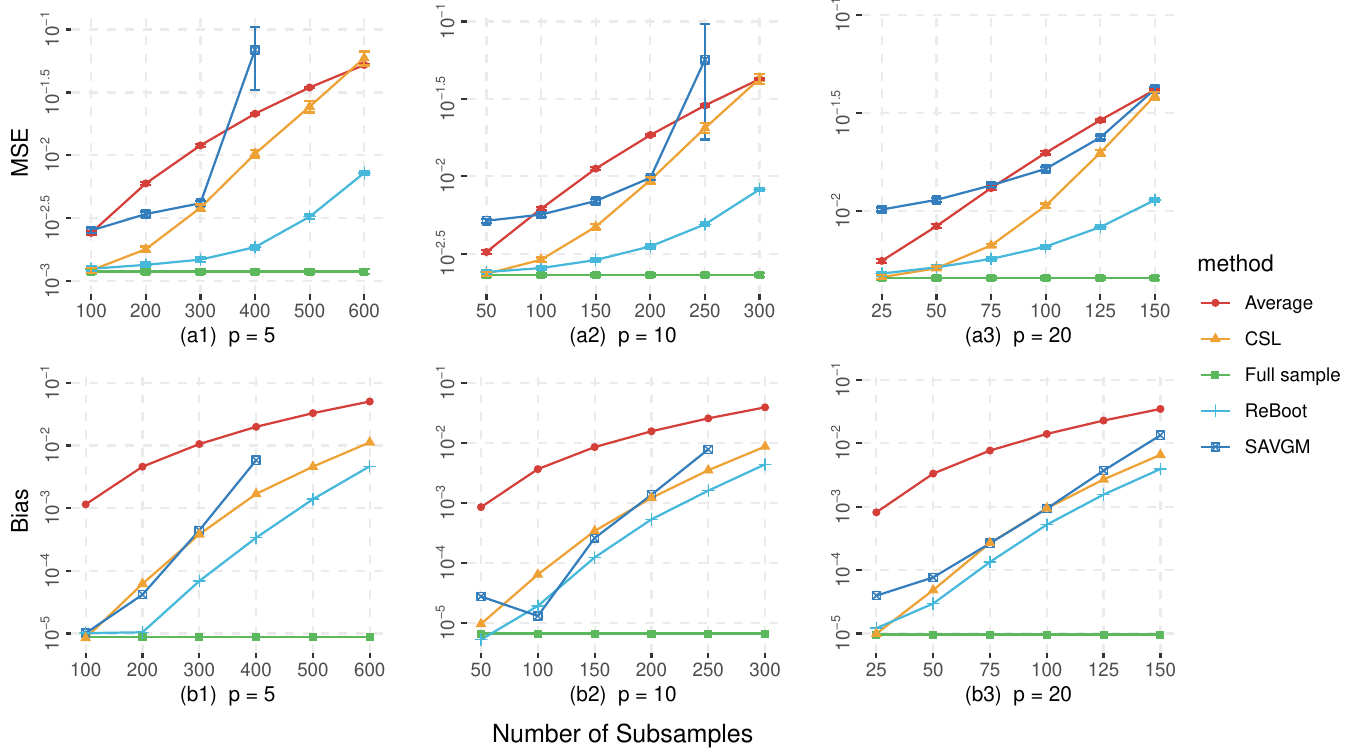}
    \caption{
    MSE (panels (a1), (a2) and (a3)) and bias (panels (b1), (b2) and (b3)) versus the subsample number $m$ under Poisson regression. In panels (a1) and (b1), $p = 5$; in panels (a2) and (b2), $p = 10$; in panels (a3) and (b3), $p = 20$.}
    \label{fig:2}
\end{figure*}

In each Monte Carlo experiment, we first generate $N = 12,000$ independent observations $\{(\bx_i, Y_i)\}_{i \in [N]}$ of $(\bx, Y)$, where the first element of $\bx$ uniformly distributed over $[0, 1]$ and the other elements uniformly distributed over $[-1, 1] ^ {p - 1}$, and $Y | \bx \sim \mathrm{Poisson}\big(e ^ {\bx ^ \top \bbeta ^ *}\big)$ where $\bbeta^ * = 0.1\times\bone_p$. Similarly to the previous section, we divide $N$ observations into $m$ subsamples, each of size $n = N / m$ observations, and compute a local MLE of $\bbeta ^ *$ on each subsample. Finally, we apply ReBoot, averaging, SAVGM and CSL to estimate $\bbeta ^ *$. In ReBoot, we correctly specify $f_{\bx}$ and set $L = 1$, $B = 1$, $T = 1000$ and $\mu = 0.01$. In SAVGM, we choose the subsampling rate $r = 0.5$, an oracular choice that minimizes MSE. Figure \ref{fig:2} compares the MSE and bias of all these approaches based on $200$ independent Monte Carlo experiments when $N$ is fixed and $m$ grows. We have essentially similar observations as in the case of logistic regression. Specifically, 
\begin{enumerate}
    \item The MSE and bias of all the approaches tend to increase as $m$ increases.
    \item {When $m$ is large (the subsample size is small)}, ReBoot yields significantly less MSE than CSL. 
    \item Averaging and SAVGM yield much higher MSE than ReBoot {in all the cases}.
    \item When $p$ grows from $5$ to $20$, the MSE of CSL and averaging grows substantially, while ReBoot maintains relatively low MSE.      
\end{enumerate}

\begin{figure*}[!t]
    \centering
    \includegraphics[scale = 0.71]{./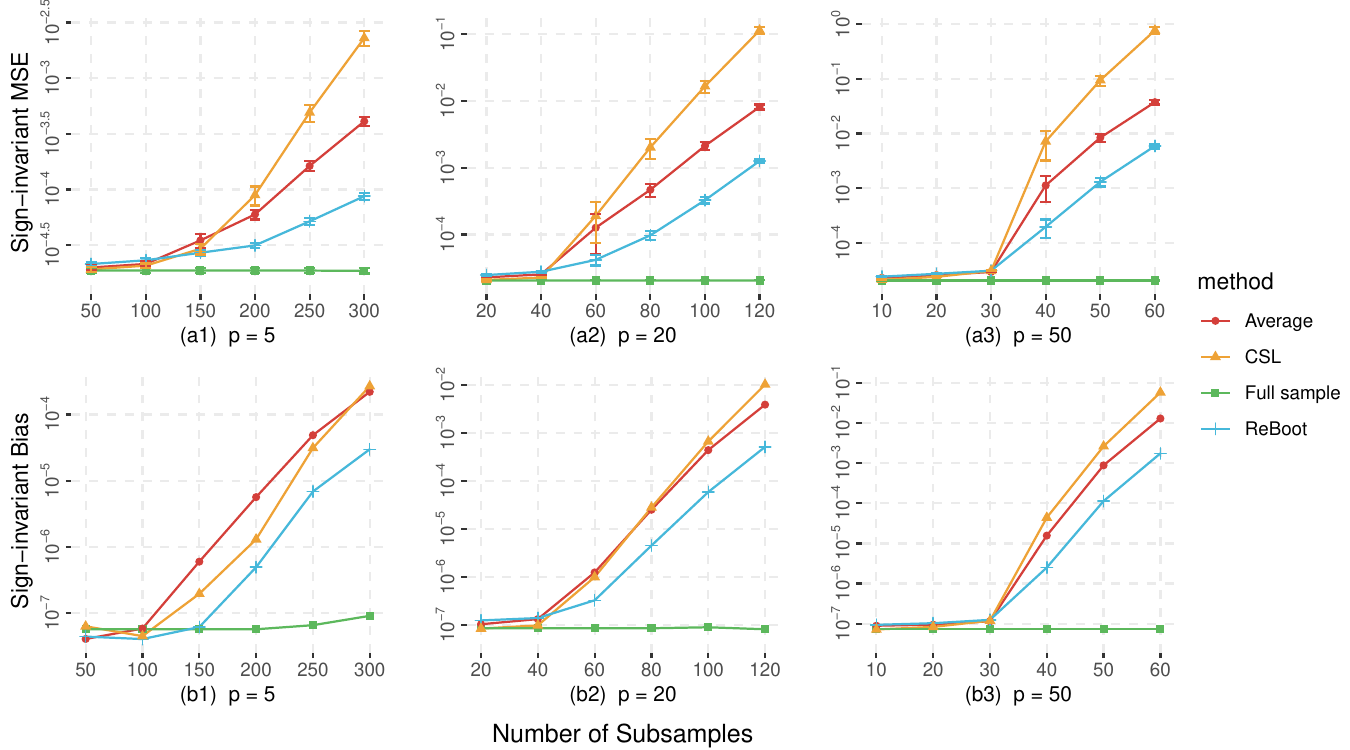}
    \caption{
    MSE (panels (a1), (a2) and (a3)) and bias (panels (b1), (b2) and (b3)) versus the subsample number $m$ under noisy phase retrieval. In panels (a1) and (b1), $p = 5$; in panels (a2) and (b2), $p = 20$; in panels (a3) and (b3), $p = 50$.}
    \label{fig:3}
\end{figure*}

\subsubsection{Noisy phase retrieval}

In our simulation, we first generate $N = 12,000$ independent observations $\{(\bx_i, Y_i)\}_{i \in [N]}$ of $(\bx, Y)$ from the noisy phase retrieval model (\ref{eq:noisy_phase_retrieval}) with $\bx \sim \cN(\mathbf{0}_{p}, \bI_p)$, $\varepsilon \sim \cN(0,1)$ and $\bbeta^ * = \bone_p$. Similarly, we uniformly split $N$ observations into $m$ subsamples, each having $n = N / m$ observations. On each subsample, we use the Wirtinger Flow algorithm (Algorithm \ref{alg:wirtinger_flow}), which is essentially a combination of spectral initialization and gradient descent, to derive a local estimator of $\bbeta ^ *$. We set $T = 10,000$, $\mu = 0.005$ for $p = 5$, $\mu = 0.001$ for $p = 20$ and $\mu = 0.0001$ for $p = 50$ in Wirtinger Flow algorithm. Finally, we apply ReBoot, averaging and CSL on these subsamples to estimate $\bbeta ^ *$. In ReBoot, we set $f_{\bx}$ to be the PDF of $\cN(\bzero_p, \bI_p)$, and set $L = 1$, $B = 1$, $T = 1000$, and $\mu = 0.01$ for $p = 5$, $\mu = 0.005$ for $p = 20$ and $\mu = 0.001$ for $p = 50$. In averaging, to ensure the sign consistency across all the local estimators for averaging, we calibrate all the local estimators to have the same sign in their first entries.
Given the identifiability issue of model \eqref{eq:noisy_phase_retrieval} due to the sign of $\bbeta ^ *$, we consider the sign-invariant versions of MSE and bias: $$\mathrm{MSE} ^ \dagger(\widehat \bbeta) := \EE \left\{\min\left(\ltwonorm{\widehat\bbeta - \bbeta ^ *} ^ 2, \ltwonorm{\widehat\bbeta + \bbeta ^ *} ^ 2\right)\right\},$$ $$\mathrm{bias} ^ \dagger(\widehat \bbeta) := \Big\|\EE \Big\{\Big(2 \times 1_{\{\ltwonorm{\widehat\bbeta  - \bbeta ^ *} < \ltwonorm{\widehat\bbeta  - \bbeta ^ *}\}} - 1\Big)\widehat\bbeta \Big\} -  \bbeta ^ *\Big\|_2.$$    
These new definitions always adjust the sign of $\widehat \bbeta$ to better align $\widehat \bbeta$ with $\bbeta ^ *$, thereby being invariant with respect to the sign of $\widehat\bbeta$. 
Figure \ref{fig:3} compares the $\mathrm{MSE} ^ {\dagger}$ and $\mathrm{bias} ^ {\dagger}$ of all these approaches based on $200$ independent Monte Carlo experiments with $N$ fixed and $m$ growing. We have the following observations: 
\begin{enumerate}
    \item ReBoot is overall the best estimator in terms of $\mathrm{MSE} ^ {\dagger}$ and $\mathrm{bias} ^ {\dagger}$ among all the investigated methods, especially when $m$ is large;
    \item The performance gap between ReBoot and averaging is substantially smaller than that in the previous GLMs. This is consistent with Theorem \ref{thm:reboot_phase_retrieval} that ReBoot does not yield as a sharp bias rate as in the GLM setup. 
\end{enumerate}

\IncMargin{1em}
\begin{algorithm}[!t]
    \SetAlgoNlRelativeSize{-1}
    \SetNlSty{textbf}{}{:}
    \caption{Wirtinger Flow Algorithm \cite{candes2013phaselift}}  
    \label{alg:wirtinger_flow}
    \Indentp{-1.35em}
    \KwIn{the dataset $\cD$, the iteration number $T$, the learning rate $\mu$}
    \Indentp{1.35em}
  	Calculate the leading eigenvalue $\widehat\lambda$ and eigenvector $\widehat \bv$ of $\bY = \frac{1}{N} \sum_{i=1}^{N} y_i \bx_i\bx_i^\top$\;
    $\widehat \bbeta _ {\rm{init}} \leftarrow (\widehat \lambda/3) ^ {1 / 2} \widehat\bv$\;
    $\bbeta_0 \leftarrow \widehat\bbeta _ {\mathrm{init}}$\;
  	\For{$t = 0,1, \ldots, T-1$}{
  	    $\bg_t \leftarrow \frac{1}{N} \sum_{i=1}^{N}\big\{\big(\bx_i^\top \bbeta_{t}\big)^2 - y_i\big\}\big(\bx_i^\top\bbeta_{t}\bigr) \bx_i$\;
    	$\bbeta_{t + 1} \leftarrow \bbeta_{t} - \mu\bg_t$\;
    }
    
    \Indentp{-1.35em}
    \KwOut{$\bbeta_T$}
    \Indentp{1.35em}
\end{algorithm} 
\DecMargin{1em}

\subsection{Misspecified or estimated feature distribution in ReBoot}

In this section, we assess the performance of ReBoot when the feature distribution $f_\bx$ is unknown and thus estimated or misspecified. In each Monte Carlo experiment, we first draw an independent sample $\{(\bx_i, Y_i)\}_{i \in [N]}$ of size $N = 12, 000$ of $(\bx, Y)$ that follows a logistic regression with autoregressive design. Specifically, we set $p = 20$, $\bbeta^ * = 0.2 \times \bone_p$, $Y | \bx \sim \mathrm{Bern}\big((1 + e ^ {-\bx ^ \top \bbeta ^ *}) ^ {-1}\big)$, $\bx \sim \cN(\mathbf{0}_{p}, \bSigma)$ in (a1), (a2), (a3) or $\bx \sim t_3(\mathbf{0}_{p}, \bSigma)$ in (b1), (b2), (b3)  with $\Sigma_{i j} = \rho ^ {|i-j|}$ for any $i, j \in [p]$. We consider $\rho \in\{0.1, 0.3, 0.5\}$ and set $L = 1$, $B = 1$, $T = 2000$ and $\mu = 0.05$ in ReBoot algorithm. We then split $N$ observations into $m$ subsamples of size $n = N/m$ observations each. To investigate the effect of error of estimating $f_\bx$, we apply two versions of ReBoot on the simulated data:
\begin{enumerate}
	\item Parametric ReBoot: The ReBoot algorithm generates bootstrap samples of the features from $\cN(\widetilde \bmu, \widetilde \bS)$, where $\widetilde \bmu$ and $ \widetilde \bS$ are the averages of locally estimated means and covariances of $\bx$.
	\item Isotropic ReBoot: The ReBoot algorithm generates bootstrap samples of the features from $\cN(\mathbf{0}_p, \bI_p)$, which misspecifies the true feature distribution.
\end{enumerate}
We also run averaging and CSL for performance comparison. Figure \ref{fig: effect} compares the MSE in terms of estimating $\bbeta ^ *$ of all these methods based on $200$ independent Monte Carlo experiments with $N$ fixed and $m$ growing.
We have the following observations: 
\begin{enumerate}
    \item In (a1), (a2), and (a3), when the feature distribution family is correctly identified, Parametric ReBoot and Isotropic ReBoot outperform both averaging and CSL. This aligns with the theoretical finding presented in Corollary \ref{thm:misspecified}.
    \item In (b1), (b2) and (b3), when the feature distribution family is misspecified, Parametric ReBoot and Isotropic ReBoot demonstrate robustness in comparison with averaging and CSL as $m$ increases.
    \item When the covariance of $\bx$ is relatively small and easily estimated, Parametric ReBoot yields smaller MSE compared to Isotropic ReBoot, as illustrated in (a1), (a2), (a3), and (b1). Conversely, Isotropic ReBoot outperforms Parametric ReBoot when dealing with a larger covariance. For instance, in (b3), where  $\cov(\bx) = 3 \bSigma$ with $\Sigma_{i j} = 0.5 ^ {|i-j|}$ for any $i, j \in [p]$, Isotropic ReBoot achieves a significantly lower MSE.
    \item The poor performances of averaging may affect the performances of CSL, as observed in (a3) and (b3) when $m = 120$ .
\end{enumerate}

\begin{figure*}[t]
    \centering
    \includegraphics[scale = 0.71]{./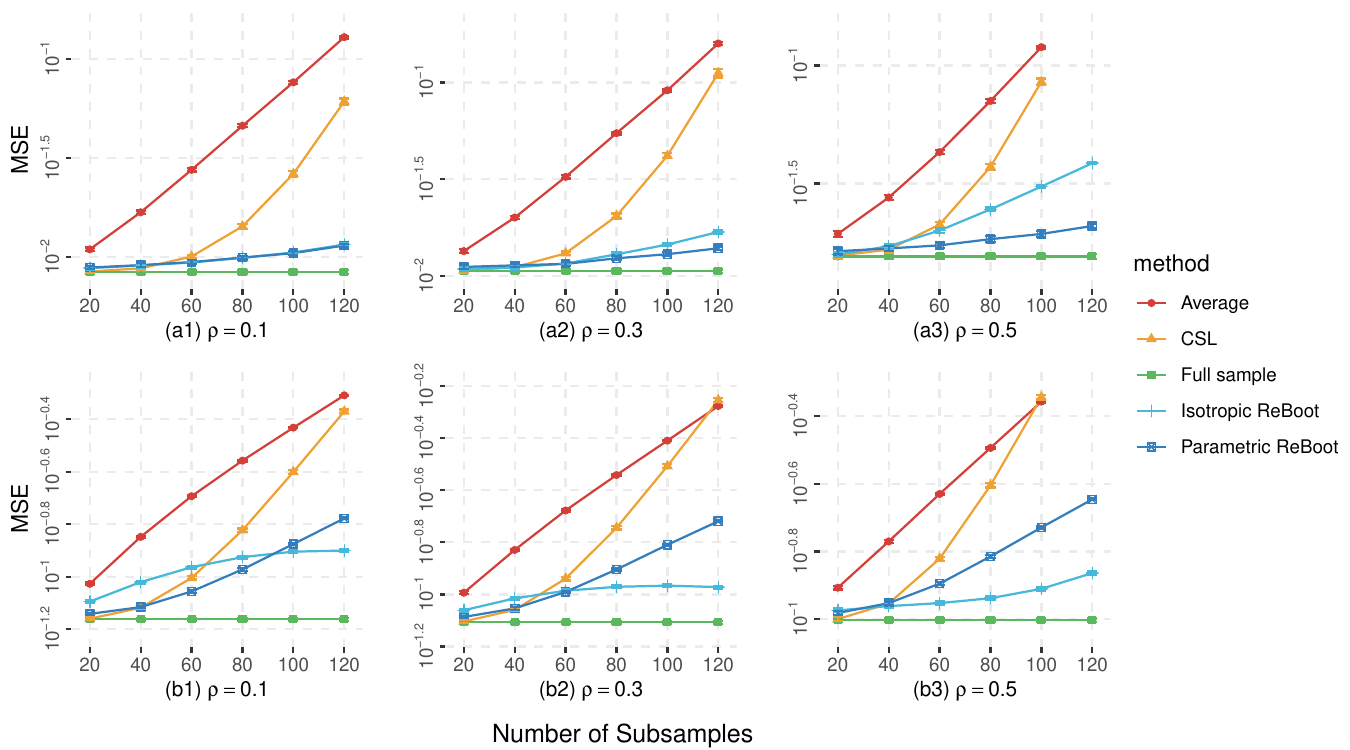}
    \caption{
    MSE versus the subsample number $m$ under logistic regression. In panels (a1), (a2) and (a3), $\bx$ is generated from $\cN(\mathbf{0}_{p}, \bSigma)$; in panels (b1), (b2) and (b3), $\bx$ is generated from $t_3(\mathbf{0}_{p}, \bSigma)$. In panels (a1) and (b1), $\rho = 0.1$; in panels (a2) and (b2), $\rho = 0.3$; in panels (a3) and (b3), $\rho = 0.5$.}
    \label{fig: effect}
\end{figure*}

\section{Real data analysis}
\label{sec:Real data}

In this section, we consider learning a convolutional neural network (CNN) for label prediction on the Fashion-MNIST dataset \citep{xiao2017fashion} in a distributed fashion. The dataset has in total 70,000 images of 10 categories of fashion articles: T-shirt, trouser, pullover, dress, coat, sandal, shirt, sneaker, bag, ankle boot. 

\begin{figure*}[h]
	\centering
	\includegraphics[scale=.48]{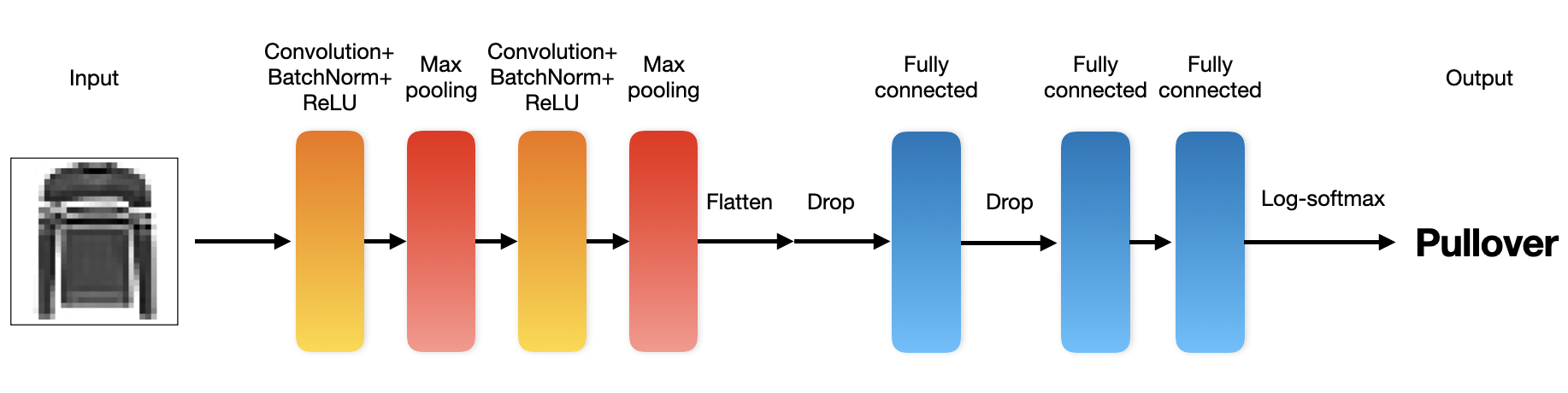}
	\caption{The CNN architecture.}
	\label{fig: cnn_procedure} 
\end{figure*}

We first split the entire dataset into the following four parts: (i) training dataset: 10,000 images; (ii) validation dataset: 5,000 images; (iii) testing dataset: 5,000 images; (iv) auxiliary dataset: 50,000 images with the labels blinded.
Here the unlabeled auxiliary data is reserved for ReBoot to retrain, the details of which are deferred to the end of this paragraph. To simulate the setup of decentralized data, we uniformly split the training dataset into $m=10$ sub-datasets $\big\{\cD_\mathrm{train} ^ {(k)}\big\}_{k \in [10]}$ of size 1,000 each. We do a similar uniform splitting of the validation dataset, yielding $\big\{\cD_{\mathrm{validation}}^ {(k)}\big\}_{k \in [10]}$. For each $k \in [10]$, we independently fit a CNN model of the same architecture (see Figure \ref{fig: cnn_procedure} for details) based on $\cD_{\mathrm{train}} ^  {(k)}$ and $\cD_{\mathrm{validation}} ^ {(k)}$, which serve as the training and validation data respectively. 
We compare two distributed learning algorithms to aggregate the ten subsample-based CNNs: na\"ive averaging and ReBoot. Na\"ive averaging means to average all the parameters across the ten CNNs respectively to construct a new CNN of the same architecture. ReBoot here needs a slight twist: Given the difficulty of modeling such images through a parameterized distribution, we directly use the unlabeled images in the auxiliary dataset to substitute for the bootstrap sample of the feature distribution at each local server. In other words, $\{\widetilde \bx_{i, t} ^{(k)}\}_{i \in [\widetilde n], t \in [T]}$ in Algorithm \ref{alg:reboot} is replaced with the auxiliary dataset for all $k \in [10]$. To summarize, ReBoot asks all the local CNNs to label the images in the auxiliary dataset and refits a CNN of the same architecture based on these labeled data. For each image, all the ten labels from the local CNNs take equal weights in the loss function of the refitting step, regardless of whether they conflict with each other or not.

\begin{table*}[h]
	\centering
	\caption{
	Accuracy(\%) of different methods on the testing dataset.}	
	\begin{tabular}{cccccc}
		\hline\hline
	             & Full-sample & Subsample (mean) & Subsample (max) & Averaging & ReBoot    \\ \hline
		Accuracy & 89.68 & 83.66     & 85.44    & 85.92 & 87.56  \\ \hline\hline
	\end{tabular}
	\label{tab: fashion-MNIST}
\end{table*}

Table \ref{tab: fashion-MNIST} reports the prediction accuracy of the full-sample-based CNN, subsample-based CNNs, averaged CNN and ReBoot CNN on the testing dataset. To characterize the overall performance of the ten subsample-based CNNs, we report their mean and maximum prediction accuracy. Table \ref{tab: fashion-MNIST} shows that averaging and ReBoot can both give CNNs that outperform the best local CNN. More importantly, the ReBoot CNN exhibits superior prediction accuracy over the averaged CNN, suggesting that ReBoot is a more powerful CNN aggregator than na\"ive averaging. 

\subsection{Federated ReBoot}
Motivated by the advantage of ReBoot over averaging in terms of aggregating local CNNs, we further propose the FedReBoot algorithm (Algorithm \ref{alg:fedreboot}), which basically replaces the averaging of model parameters in FedAvg \citep{mcmahan2017communication} with ReBoot. Specifically, FedReBoot allows the local sites to update the estimator iteratively, thus reducing the computation burden on local sites.

\IncMargin{1em}
\begin{algorithm}[h]
    \SetAlgoNlRelativeSize{-1}
    \SetNlSty{textbf}{}{:}
    \caption{Federated ReBoot (FedReBoot)}  
    \label{alg:fedreboot}
    
    \Indentp{-1.35em}
    \KwIn{the iteration number $T$, the number of local epochs $E$}
    \Indentp{1.35em}
    Initialize $\widehat\bbeta ^ {\mathrm{rb}}_0$ with some specific value; \\
    
    \For{$t = 0, \ldots, T-1$}{
        \For{$k = 1, \ldots, m$ {\bf in parallel}}{
            $\widehat\bbeta_t ^ {(k)} \leftarrow \widehat \bbeta ^ {\mathrm{rb}}_t$\;
            Server $k$ runs $E$ epochs of training locally to update $\widehat\bbeta_t ^ {(k)}$ to be $\widehat\bbeta_{t+1} ^ {(k)}$\;
        }
        Transmit these local estimators $\big\{\widehat{\bbeta}_{t+1} ^ {(k)}\big\}_{k=1}^{m}$ to the central server\;
        $\widehat\bbeta^{\mathrm{rb}}_{t+1} \leftarrow \mathrm{ReBoot}\big(\{\widehat\bbeta ^ {(k)}_{t+1}\}_{k=1}^{m}\big)$\;

    }
    \Indentp{-1.35em}
    \KwOut{$\widehat\bbeta^{\mathrm{rb}}_{T}$}
    \Indentp{1.35em}
\end{algorithm} 
\DecMargin{1em}

Similarly, we uniformly split the training dataset into $m=20$ sub-datasets $\big\{\cD_\mathrm{train} ^ {(k)}\big\}_{k \in [20]}$ of size 500 each to simulate the regime of decentralized data. In each local server, we adopt the same model architecture (see Figure \ref{fig: cnn_procedure} for details). In the ReBoot step of Algorithm \ref{alg:fedreboot}, we use the auxiliary dataset as the bootstrap sample of the features for each local server, and retrain a CNN on this dataset with labels given by all the local CNNs. Table \ref{tab: fed-learning} compares the testing accuracy of FedAvg and FedReBoot as the number of communication rounds grows. One can see that FedReBoot consistently outperforms FedAvg, especially when the number of communication rounds is small. This further demonstrates the statistical advantage of ReBoot over averaging in terms of aggregating complicated models. The performance of the two algorithms tends to match when the number of communication rounds is sufficiently large.

\begin{table}[h]
 	\centering
 	\caption{Accuracy(\%) of different methods on the testing dataset.}	
 	\begin{tabular}{ccccccc}
 	    \hline\hline
        \multicolumn{2}{c}{Communication Rounds} & 1 & 2 & 3 & 4 & 5 \\
        \hline
        \multirow{2}*{E=10} & FedAvg & 80.66 & 83.90 & 86.04 & 86.98 & 87.42  \\
                              & FedReBoot & 83.28 & 86.66 & 87.58 & 87.52 & 87.94 \\
        \multirow{2}*{E=20} & FedAvg & 81.40 & 84.92 & 86.52 & 87.32 &  87.82 \\
                              & FedReBoot & 84.02 & 86.94 & 87.84 & 88.66 & 88.28 \\                   
 	    \hline\hline
 	\end{tabular}
 	\label{tab: fed-learning}
 \end{table} 

\section{Discussions}
\label{sec:Discussions}

In this paper, we propose a general one-shot distributed learning algorithm called ReBoot to aggregate subsample-based estimators. Specifically, ReBoot draw bootstrap samples from local models and pool them to evaluate the ReBoot loss function. 
In addition, we extend ReBoot to a multi-round approach named FedReBoot for deep learning on decentralized data. We establish statistical guarantee for ReBoot under distributed GLMs and noisy phase retrieval. Numerical experiments and real data analysis confirm statistical advantages of ReBoot. In the following, we list three important questions that we wish to address in our future research: 
\begin{enumerate}
\item How should ReBoot adapt to the semi-parametric model? 

\item How should ReBoot adapt to the high-dimensional setup? 

\item How should ReBoot handle heterogeneity across subsamples? 
\end{enumerate}

\bibliographystyle{ims}
\bibliography{ref.bib}

\newpage

\appendix
\begin{center} 
    \LARGE Supplementary material for ``ReBoot: Distributed statistical learning via refitting bootstrap samples''
\end{center}

\section{Auxiliary Results for Section \ref{sec:Statistical analysis}}

\subsection{Generalized linear models}

\begin{figure*}[h]
	\centering
	\includegraphics[scale=.52]{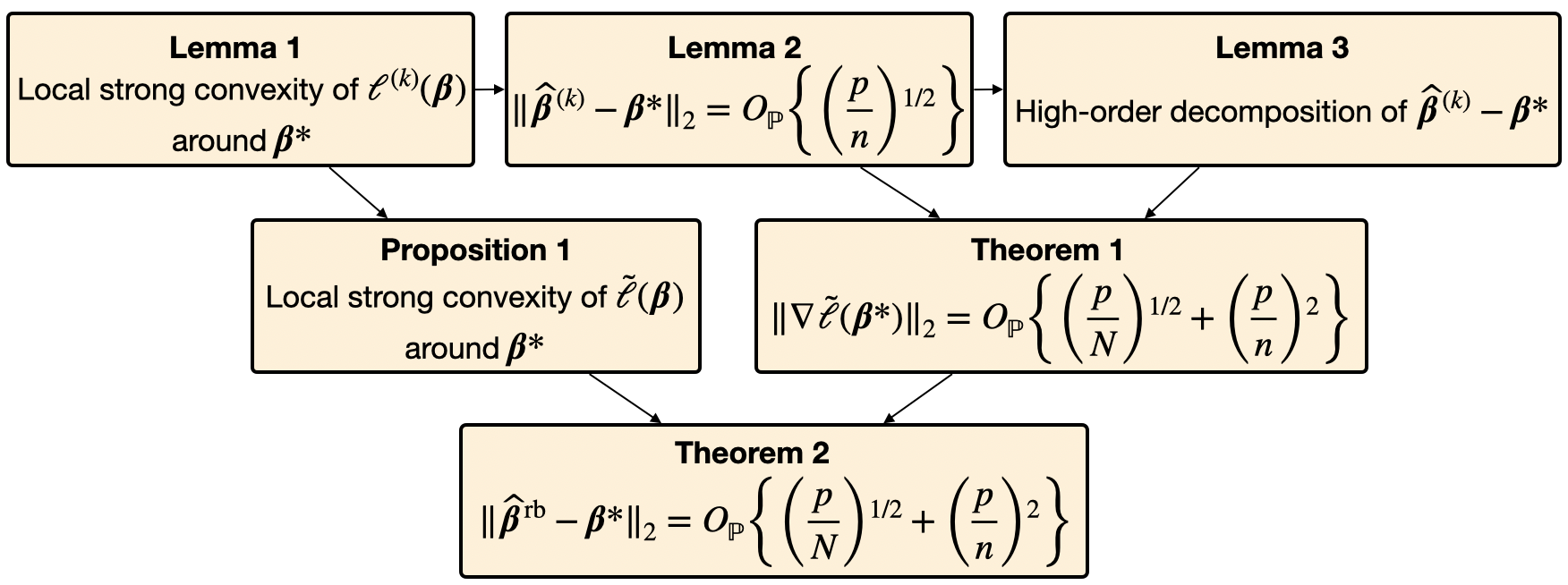}
	\caption{The proof roadmap to derive the statistical error of ReBoot.}
	\label{fig:roadmap_reboot} 
\end{figure*}

Figure \ref{fig:roadmap_reboot} presents the roadmap we follow to establish the statistical rate of $\widehat\bbeta ^ {\mathrm{rb}}$. 
Regarding Proposition \ref{cor:_reboot_glr_lrsc}, a somewhat surprising observation that underpins the proof is that for any $\bbeta \in \RR ^ p$, $\nabla ^ 2 \widetilde \ell(\bbeta)$ is independent of the local MLEs $\{\widehat\bbeta ^ {(k)}\}_{k \in [m]}$: Specifically, 
\begin{equation}
 \label{eq:hessian}
    \nabla ^ 2 \widetilde \ell(\bbeta) := \frac{1}{m} \sum_{k = 1} ^ m b''( \widetilde \bx ^ {(k)\top} \bbeta) \widetilde \bx ^ {(k)}\widetilde \bx ^ {(k)\top}.   
\end{equation}
Therefore, Proposition \ref{cor:_reboot_glr_lrsc} immediately follows Lemma \ref{lem:glr_lrsc} on the local strong convexity of $\ell ^ {(k)}(\bbeta)$, which can be obtained through standard argument. The major technical challenge lies in establishing Theorem \ref{thm:glm_reboot_gradient}: Since the bootstrap response $\widetilde Y_i ^ {(k)}$ is drawn from $f_{Y | \bx}\big(\cdot| \widetilde{\bx}_{i}^{(k)}; \widehat{\bbeta}^{(k)}\big)$ rather than $f_{Y | \bx}\big(\cdot| \widetilde{\bx}_{i}^{(k)}; {\bbeta}^*\big)$, $\EE \nabla \widetilde \ell(\bbeta ^ *)$ is not zero. We show that $\ltwonorm{\EE \nabla \widetilde \ell(\bbeta ^ *)} = O\big\{\big(\frac{p \vee \log n}{n}\big) ^ 2\big\}$, which corresponds to the machine-number-free ($m$-free) term in the rate of $\ltwonorm{\nabla \widetilde \ell(\bbeta ^ *)}$ in Theorem \ref{thm:glm_reboot_gradient}, and which characterizes the bottleneck of ReBoot that cannot be mitigated by increasing $m$. Accomplishing such a bound for $\EE \nabla \widetilde \ell(\bbeta ^ *)$ hinges on a high-order decomposition of the errors of the local estimators (Lemma \ref{prop:glm_decompn}) together with the closeness between the local estimators and the true parameter (Lemma \ref{prop:glm_local_mse}). 

Lemma \ref{lem:glr_lrsc} establishes the local strong convexity of the loss function $\ell ^{(k)}(\bbeta)$ on the $k$th subsample over $\bbeta \in \cB(\bbeta ^ \ast, r)$ with a tolerance term.
\begin{lem}
\label{lem:glr_lrsc}
Let $\alpha := 2\log(64K ^ 2 / \kappa_0)$. Under Conditions \ref{con:distribution} and \ref{con:b_double_prime}, for any $0 < r < 1$ and $t > 0$, we have with probability at least $1 - 2e ^ {-t/8}$ that
\begin{equation}
\begin{aligned}
    \delta \ell ^{(k)} (\bbeta; \bbeta ^ *) 
	&\geq \frac{\tau(K \alpha ^ {1/2} + K \alpha ^ {1/2} \ltwonorm{\bbeta ^ *})}{2}\bigg[\frac{\kappa_0}{2}\|\bbeta - \bbeta ^ *\| _ 2 ^ 2 
    - \underbrace{K^ 2 r ^ 2\bigg\{\alpha\bigg(\frac{t}{n}\bigg) ^ {1/2} + 16\bigg(\frac{2\alpha p}{n}\bigg) ^ {1 / 2}\bigg\}}_{\text{tolerance term}}\bigg],
\end{aligned}
\end{equation}
for any $\bbeta \in \cB(\bbeta ^ *, r)$, where function $\tau$ is defined in Condition \ref{con:b_double_prime}. 
\end{lem}
From the lemma above, one can see that $\ell ^ {(k)}(\bbeta)$ enjoys local strong convexity around $\bbeta ^ *$ when the tolerance term is small. To control the tolerance term, we can let local radius $r$ decay at an appropriate rate. In the proof of Lemma \ref{prop:glm_local_mse}, we apply Lemma \ref{lem:glr_lrsc} with $r$ of order $(p / n) ^ {1 / 2}$ to prevent the tolerance term from contaminating the desired statistical rate of $\widehat\bbeta ^ {(k)}$. 

As illustrated in Figure \ref{fig:roadmap_reboot}, we then derive the statistical rate of the local MLEs and then establish a high-order decomposition of their errors. We present the results in the two Lemmas below. Define $\bSigma := \EE\{b''(\bx ^ \top \bbeta ^ *)\bx\bx ^ \top\}$ and $\bTheta := \EE\{b'''(\bx ^ \top \bbeta ^ *)\bx \otimes \bx \otimes \bx\}$.  

\begin{lem}
\label{prop:glm_local_mse}
Under Conditions \ref{con:distribution} and \ref{con:b_double_prime}, there exists a universal constant $C > 0$ such that whenever $n \ge C\kappa_0 ^ {-2} K^ 4 \max(\alpha ^ 2 \log n, \alpha p)$, for any $k \in [m]$,
\[
	\PP \biggl\{\ltwonorm{\widehat\bbeta ^ {(k)} - \bbeta ^ *} \ge 2 \kappa ^ {-1} (\phi M) ^ {1 / 2} K \biggl(\frac{p \vee \log n}{n}\biggr) ^ {1 / 2}\biggr\} \le 4 n ^ {-4},
\]
where $\kappa = \kappa_0 \tau(K \alpha ^ {1/2} + K \alpha ^ {1/2} \ltwonorm{\bbeta ^ *}) / 4$ and $\alpha$ is the same as in Lemma \ref{lem:glr_lrsc}.
\end{lem}

Lemma \ref{prop:glm_local_mse} is a standard result that establishes the root-$n$ rate of the local MLE under the low-dimensional setup. Next comes the high-order decomposition of the error of the local MLE, which serves as the backbone of the analysis of ReBoot. 

\begin{lem}
\label{prop:glm_decompn}
Under Conditions \ref{con:distribution}, \ref{con:b_double_prime} and \ref{con:b_four_prime}, there exists a universal constant $C > 0$ such that whenever $n \ge C \max(\kappa_0 ^ {-2} K^ 4\alpha ^ 2 \log n, \kappa_0 ^ {-2} K^ 4 \alpha p, p ^ 2)$ with the same $\kappa$ and $\alpha$ in Lemma \ref{prop:glm_local_mse}, we have
\begin{equation}
\begin{aligned}
	\widehat\bbeta ^ {(k)} - \bbeta ^ *
    =&- \bSigma^{-1} \nabla \ell^{(k)} (\bbeta^ *) - \bSigma^{-1} \big(\nabla^2 \ell^{(k)} (\bbeta^ *) - \bSigma\big) \bSigma^{-1} \nabla \ell^{(k)} (\bbeta^ *) \\
	 &- \bSigma^{-1} \bTheta \big(\bSigma^{-1} \nabla \ell^{(k)} (\bbeta^ *)\big) \otimes \big(\bSigma^{-1} \nabla \ell^{(k)} (\bbeta^ *)\big)
		+\be, 
\end{aligned}
\end{equation}
where $\be$ satisfies with probability at least $1 - 12 n ^ {-4}$ that
\begin{equation}
	\|\be\|_2 \lesssim C_{\kappa, \phi, M, K, \bSigma ^ {-1}} \bigg(\frac{p \vee \log n}{n}\bigg) ^ {3/2} 
\end{equation}
for some polynomial function $C_{\kappa, \phi, M, K, \bSigma ^ {-1}}$ of $\kappa, \phi, M, K, \ltwonorm{\bSigma ^ {-1}}$.
\end{lem}

Similar high-order decomposition of the local MLE appears in Lemma 12 of \cite{Zhang2012Comunication}. The difference here is that we explicitly derive the dependence of the high-order error $\ltwonorm{\be}$ on $p$. 

\subsection{Noisy phase retrieval}

The gradient and Hessian of $\ell_\cD (\bbeta)$ are respectively
\begin{equation}
\label{eq:pr_grad_hessian}
\begin{aligned}
	&\nabla \ell_\cD (\bbeta) = \frac{1}{N} \sum_{i=1}^{N} \{(\bx_i ^ \top \bbeta) ^ 2 - y_i\} (\bx_i ^ \top \bbeta)\bx_i, \\
	&\nabla ^ 2 \ell_\cD (\bbeta) = \frac{1}{N} \sum_{i=1}^{N} \{3(\bx_i ^ \top \bbeta) ^ 2 - y_i\} \bx_i \bx_i ^ \top.
\end{aligned}
\end{equation}

Lemma \ref{prop:pr1} shows that the initial estimator $\widehat \bbeta_{\rm{init}} ^ {(k)}$ is reasonably close to $\bbeta ^ *$, justifying the validity of the refinement step in  \eqref{eq:pr_local_refine}. Similar results can be found in \cite{candes2015phase}, \cite{ma2018implicit}.

\begin{lem}
\label{prop:pr1}
Suppose that  $n \geq C p ^2$ for some positive constant $C$. Under Condition \ref{con:gaussian_pr}, for any $k \in [m]$, we have
\begin{equation}
	\|\widehat \bbeta_{\rm{init}} ^ {(k)} - \bbeta ^ *\|_2 \le \frac{1}{13} \|\bbeta ^ *\|_2,
\end{equation}
with probability at least $1 - 18 n ^ {-2}$.
\end{lem}

\section{Proof of technical results}
\numberwithin{lem}{section}

\subsection{Proof of Lemma \ref{lem:glr_lrsc}}
\begin{proof}
For simplicity, we omit ``${(k)}$'' in the superscript in the following proof. Define a contraction map
\[
	\phi(x; \theta) = x ^ 2 \mathbbm{1}_{\{|x| \leq \theta\}} + (x - 2 \theta) ^ 2 \mathbbm{1}_{\{\theta < x \leq 2 \theta\}} + (x + 2 \theta) ^ 2 \mathbbm{1}_{\{-2 \theta \leq x < -\theta\}}.
\]
One can verify that $\phi(x; \theta) \leq x ^ 2$ for any $\theta$. Given any $\bDelta \in \cB (\bzero, r)$, by Taylor's expansion, we can find $v \in (0,1)$ such that for any $\alpha_1, \alpha_2 > 0$, 
\begin{equation}
\label{eq:B.1_1}
	\begin{aligned}
		\delta \ell (\bbeta ^ * + \bDelta; \bbeta ^ *)
		&= \ell (\bbeta ^ * + \bDelta) - \ell (\bbeta ^ *) - \nabla \ell (\bbeta ^ *) ^ \top \bDelta 
		= \int_0 ^ 1 \frac{1}{2} \bDelta ^ \top \nabla ^ 2 \ell (\bbeta ^ * + v \bDelta) \bDelta dv\\
		&= \int_0 ^ 1 \frac{1}{2n} \sum_{i=1}^{n} b'' (\bx_i ^ \top (\bbeta ^ * + v \bDelta)) (\bDelta ^ \top \bx_i) ^ 2 dv\\
		&\geq \int_0 ^ 1 \frac{1}{2n} \sum_{i=1}^{n} b^{\prime \prime} (\bx_i ^ \top(\bbeta ^ * + v \bDelta)) \phi (\bDelta ^ \top \bx_i; \alpha_1 r) \mathbbm{1}_{\{|\bx_i  ^ \top \bbeta ^ *| \leq \alpha_2\}} dv\\
		&\geq \frac{\tau (\omega)}{2 n} \sum_{i=1}^{n} \phi(\bDelta ^ \top \bx_i; \alpha_1 r) \mathbbm{1}_{\{|\bx_i  ^ \top \bbeta ^ *| \leq \alpha_2\}}, 
	\end{aligned}
\end{equation}
where we choose $\omega = \alpha_1 + \alpha_2 > \alpha_1 r + \alpha_2$. For any $i \in [n]$, define two events $\cA_i := \{|\bDelta ^ \top \bx_i| \leq \alpha_1 r\}$ and $\cB_i := \{|\bx_i  ^ \top \bbeta ^ *| \leq \alpha_2\}$. Then we obtain that
\begin{equation}
    \label{eq:E_phi_lower}
	\begin{aligned}		
		\EE \{\phi (\bDelta ^ \top \bx_i; \alpha_1 r) \mathbbm{1}_{\cB_i}\} 
		&\geq \EE \{(\bDelta ^ \top \bx_i) ^ 2 \mathbbm{1}_{\cA_i \cap \cB_i}\} 
		= \bDelta^{\top} \EE (\bx_i \bx_i ^ \top \mathbbm{1}_{\cA_i \cap \cB_i}) \bDelta \\
		&\geq \bDelta ^ \top \EE (\bx_i \bx_i ^ \top) \bDelta - \bDelta ^ \top \EE (\bx_i \bx_i ^ \top \mathbbm{1}_{\cA_i ^ c \cup \cB_i ^ c}) \bDelta \\
		&\geq \kappa_0 \|\bDelta\| _ 2 ^ 2 - \sqrt{\EE\{(\bDelta ^ \top \bx_i) ^ 4\} \{\PP (\cA_i ^ c) + \PP (\cB_i ^ c)\}}.
	\end{aligned}
\end{equation}
Given that $\forall i \in [n], \|\bx_i\|_{\psi_2} \le K$, by Proposition 2.5.2 in \cite{vershynin2010introduction}, we have
\[
\begin{aligned}
	\PP (\cA_i ^ c) \leq \exp\bigg(-\frac{\alpha_1 ^ 2}{K ^ 2}\bigg), ~
	\PP (\cB_i ^ c) \leq \exp \bigg(- \frac{\alpha_2 ^ 2}{K ^ 2  \|\bbeta ^ *\| _ 2 ^ 2 }\bigg)
	~\text{and}~
	[\EE\{(\bDelta ^ \top \bx_i) ^ 4\}] ^ {1 / 4} \le 4K\ltwonorm{\bDelta}. 
\end{aligned}
\]
Choose $\alpha_1 = K \alpha ^ {1/2}$ and $\alpha_2 = K \alpha ^ {1/2} \ltwonorm{\bbeta ^ *}$. We can then deduce from \eqref{eq:E_phi_lower} that 
\begin{equation}
\label{eq:B.1_2}
	\EE \{\phi (\bDelta ^ \top \bx_i; \alpha_1 r) \mathbbm{1}_{\cB_i}\} 
	\geq \frac{\kappa_0}{2} \|\bDelta\| _ 2 ^ 2.
\end{equation}
Define 
\[
    Z_i := \phi(\bDelta^\top \bx_i; \alpha_1 r) \mathbbm{1}_{\cB_i} = \phi(\bDelta^\top \bx_i \mathbbm{1}_{\cB_i}; \alpha_1 r), \quad\forall i \in [n]
\]
and 
\[
    \Gamma_r := \sup _{\|\bDelta\|_2 \leq r} \bigg|n^{-1} \sum_{i=1}^{n} (Z_i - \EE Z_i)\bigg|.
\]
An application of Massart's inequality \citep{massart2000constants} yields that
\begin{equation}
\label{eq:B.1_3}
	\PP \bigg\{|\Gamma_r - \EE \Gamma_r| \geq \alpha_1 ^ 2 r ^ 2 \bigg(\frac{t}{n}\bigg) ^ {1 / 2}\bigg\} \leq 2 \exp\biggl(-\frac{t}{8}\biggr). 
\end{equation}
Next we derive the order of $\EE \Gamma_r$. Note that $|\phi (x_1; \theta) - \phi(x_2; \theta)| \leq 2 \theta |x_1 - x_2|$ for any $x_1, x_2 \in \RR$. By the symmetrization argument and then the Ledoux--Talagrand contraction principle (Theorem 4.12 in  \cite{ledoux2013probability}), for a sequence of independent  Rademacher variables $\{\gamma_i\}_{i=1}^{n}$, 
\[
\begin{aligned}
	\EE \Gamma_r 
	& \leq 2 \EE \sup _{\|\bDelta\|_2 \leq r} \bigg| \frac{1}{n} \sum_{i=1}^{n} \gamma_i Z_i \bigg| 
	\leq 8 \alpha_1 r \EE \sup _{\|\bDelta\|_2 \leq r} \bigg| \bigg\langle \frac{1}{n} \sum_{i=1}^{n} \gamma_i \bx_i \mathbbm{1}_{\cB_i},  \bDelta \bigg\rangle \bigg| \\
	& \leq 8 \alpha_1 r ^ 2 \EE \bigg\|\frac{1}{n} \sum_{i=1}^{n} \gamma_i \bx_i \mathbbm{1}_{\cB_i}\bigg\|_2 
	\leq 8 \alpha_1 r ^ 2 \bigg( \EE \bigg\|\frac{1}{n} \sum_{i=1}^{n} \gamma_i \bx_i \mathbbm{1}_{\cB_i} \bigg\| _ 2 ^ 2\bigg) ^ {1/2} \\
	& \leq 8 \alpha_1 r ^ 2 \bigg(\frac{1}{n^2} \sum_{i=1}^{n} \EE \|\bx_i\| _ 2 ^ 2 \bigg)^{1/2} 
		\le 16\sqrt 2\alpha_1 r ^ 2 K \bigg(\frac{p}{n}\bigg)^{1/2}, 
\end{aligned}
\]
where the penultimate inequality is due to the fact that $\EE(\gamma_i\gamma_j\bx^{\top}_i\bx_j\mathbbm{1}_{\cB_i \cap \cB_j}) = 0, \forall i, j \in [n], i \neq j$, and where the last inequality is due to the fact that $\|\bx_i\|_{\psi_2} \le K, \forall i \in [n]$. Combining this bound with (\ref{eq:B.1_1}), (\ref{eq:B.1_2}) and (\ref{eq:B.1_3}) yields that for any $t > 0$, with probability at least $1 - 2 e ^ {-t/8}$, for all $\bDelta \in \RR ^ p$ such that $\|\bDelta\|_2 \leq r$,
\begin{equation}
\label{eq:B.1_4}
	\delta \ell (\bbeta; \bbeta ^ *) 
	\geq \frac{\tau(\omega)}{2} \bigg\{\frac{\kappa_0}{2}\|\bDelta\| _ 2 ^ 2 - \alpha^2_1 r ^ 2\bigg(\frac{t}{n}\bigg) ^ {1/2} - 16\sqrt 2 K\alpha_1 r ^ 2 \bigg(\frac{p}{n}\bigg) ^ {1/2}\bigg\}.
\end{equation}
\end{proof}

\subsection{Proof of Proposition \ref{cor:_reboot_glr_lrsc}}
\begin{proof}
For simplicity, let $\widetilde \bz ^ {(k)} = (\widetilde \bx ^ {(k)}, \widetilde Y ^ {(k)})$. For any fixed $\bbeta$, the Hessian matrix $\nabla ^ 2 \widetilde \ell (\bbeta)$ only relies on $\bbeta$ and $\{\widetilde \bx ^ {(k)}\}_{k \in [m]}$, thus does not depend on $\{\widetilde Y ^ {(k)}\}_{k \in [m]}$ and $\{\widehat \bbeta ^ {(k)}\}_{k \in [m]}$. Consequently,
\[
\begin{aligned}
	\nabla ^ 2 \widetilde \ell (\bbeta)
	&= \frac{1}{m} \sum_{k = 1} ^ m \int_{\RR ^ {p + 1}} \nabla ^ 2 \ell \big(\bbeta \, ;\,  \widetilde \bz ^ {(k)}\big)\,  f_{\widetilde \bz ^ {(k)}}\big(\widetilde \bz ^ {(k)} \,\big|\,  \widehat \bbeta ^ {(k)}\big) d \, \widetilde \bz ^ {(k)} \\
	&= \frac{1}{m} \sum_{k = 1} ^ m \int_{\RR ^ {p + 1}} \big\{b'' \big(\widetilde \bx ^ {(k)\top} \bbeta\big) \widetilde \bx ^ {(k)} \widetilde \bx ^ {(k)\top}\big\} f_{\widetilde \bz ^ {(k)}}\big(\widetilde \bz ^ {(k)} \,\big|\,  \widehat \bbeta ^ {(k)}\big) d \, \widetilde \bz ^ {(k)} \\
	&= \frac{1}{m} \sum_{k = 1} ^ m \int_{\RR ^ p} \big\{b'' \big(\widetilde \bx ^ {(k)\top} \bbeta\big) \widetilde \bx ^ {(k)} \widetilde \bx ^ {(k)\top}\big\} f_\bx\big(\widetilde \bx ^ {(k)}\big) d \, \widetilde \bx ^ {(k)}
	= \EE \big\{b'' \big(\widetilde \bx ^ {(1)\top} \bbeta\big) \widetilde \bx ^ {(1)} \widetilde \bx ^ {(1)\top}\big\}.
\end{aligned}
\]
Given any $\bDelta \in \cB (\bzero, 1)$, by Taylor's expansion, we can find $v \in (0,1)$ such that
\[
\begin{aligned}
	\widetilde \ell (\bbeta ^ * + \bDelta) - \widetilde \ell (\bbeta ^ *) - \nabla \widetilde \ell (\bbeta ^ *) ^ \top \bDelta 
	&= \frac{1}{2} \int_0 ^ 1 \bDelta ^ \top \nabla ^ 2 \widetilde \ell (\bbeta ^ * + v \bDelta) \bDelta dv \\
	&= \frac{1}{2} \int_0 ^ 1 \EE \big\{b'' \big(\widetilde \bx ^ {(1)\top} \bbeta ^ * + v \widetilde \bx ^ {(1)\top}\bDelta\big) \big(\widetilde \bx ^ {(1)\top}\bDelta\big) ^ 2\big\} dv
\end{aligned}
\] 
Following the same proof strategy in Lemma \ref{lem:glr_lrsc}, we establish a similar lower bound that
\begin{equation}
\label{eq:reboot_glr_lrsc}
	\widetilde \ell (\bbeta ^ * + \bDelta) - \widetilde \ell (\bbeta ^ *) - \nabla \widetilde \ell (\bbeta ^ *) ^ \top \bDelta 
	\ge  \frac{\kappa_0 \tau(K \alpha ^ {1/2} + K \alpha ^ {1/2} \ltwonorm{\bbeta ^ *})}{4} \|\bDelta\|_2 ^ 2 = \kappa \|\bDelta\|_2 ^ 2,
\end{equation}
where $\kappa := \kappa_0 \tau(K \alpha ^ {1/2} + K \alpha ^ {1/2} \ltwonorm{\bbeta ^ *}) / 4$.
\end{proof}

\subsection{Proof of Lemma \ref{prop:glm_local_mse}}
\begin{proof}
For simplicity, we omit ``${(k)}$'' in the superscript in the following proof.
Construct an intermediate estimator $\widehat \bbeta _ \eta$ between $\widehat \bbeta $ and $\bbeta^ *$:
\[
	\widehat \bbeta _ \eta = \bbeta ^ * + \eta \big(\widehat \bbeta - \bbeta ^ *\big), 
\]
where $\eta=1$ if $\|\widehat \bbeta - \bbeta ^ *\|_2 \leq r$ and $\eta = r /\|\widehat \bbeta - \bbeta ^ *\|_2$ if $\|\widehat \bbeta  - \bbeta ^ *\|_2 > r$. Write $\widehat \bbeta _ \eta - \bbeta ^ *$ as $\bDelta _ \eta$. By Lemma \ref{lem:glr_lrsc}, we have with probability at least $1 - 2 e ^ {-t/8}$ that
\[
\begin{aligned}
	\frac{\tau(K \alpha ^ {1/2} + K \alpha ^ {1/2} \ltwonorm{\bbeta ^ *})}{2}\bigg[\frac{\kappa_0}{2}\|\bDelta_{\eta}\| _ 2 ^ 2 & - K^ 2 r ^ 2\bigg\{\alpha\bigg(\frac{t}{n}\bigg) ^ {1/2} + 16\bigg(\frac{2\alpha p}{n}\bigg) ^ {1/2}\bigg\}\bigg]. \\
	&\leq \delta \ell (\bbeta _ \eta; \bbeta ^ *) 
		\leq - \nabla \ell (\bbeta ^ *) ^ \top \bDelta _ \eta 
		\leq \|\nabla \ell (\bbeta ^ *)\|_2 \|\bDelta _ \eta\|_2.
\end{aligned}
\]
 Write $\kappa = \kappa_0\tau(K \alpha ^ {1/2} + K \alpha ^ {1/2} \ltwonorm{\bbeta ^ *}) / 4$. Some algebra yields that
\begin{equation}
\label{eq:grad}
	\|\bDelta _ \eta\| _ 2 
	\leq \frac{\|\nabla \ell(\bbeta ^ *)\|_2}{\kappa} + \frac{2Kr}{\sqrt \kappa_0} \bigg\{\sqrt \alpha \bigg(\frac{t}{n}\bigg) ^ {1/4} + 4\bigg(\frac{2\alpha p}{n}\bigg) ^ {1/4} \bigg\}.
\end{equation}
Now we derive the rate of $\|\nabla \ell(\bbeta ^ *)\|_2$.
\[
\begin{aligned}
	\|\nabla \ell(\bbeta ^ *)\|_2
	&= \bigg\|\frac{1}{n} \sum_{i=1}^{n} \bx_i \big\{ Y_i - b ^ \prime (\bx_i ^ \top \bbeta ^ *) \big\} \bigg\| _ 2 
	= \max_{\|\bu\|_2 = 1} \bigg\langle \frac{1}{n} \sum_{i=1}^{n} \bx_i \big\{ Y_i - b ^ \prime (\bx_i ^ \top \bbeta ^ *) \big\}, \bu \bigg\rangle \\
	&\leq 2 \max_{\|\bu\|_2 \in \cN(1/2)} \bigg\langle \frac{1}{n} \sum_{i=1}^{n} \bx_i \big\{ Y_i - b ^ \prime (\bx_i ^ \top \bbeta ^ *) \big\}, \bu \bigg\rangle.
\end{aligned}
\]
The last inequality holds because
\[
\begin{aligned}
	&\max_{\|\bu\|_2 = 1} \bigg\langle \frac{1}{n} \sum_{i=1}^{n} \bx_i \big\{ Y_i - b ^ \prime (\bx_i ^ \top \bbeta ^ *) \big\}, \bu \bigg\rangle 
	\leq \max_{\bv \in \cN(1/2)} \bigg\langle \frac{1}{n} \sum_{i=1}^{n} \bx_i \big\{ Y_i - b ^ \prime (\bx_i ^ \top \bbeta ^ *) \big\}, \bv \bigg\rangle \\
	&\quad+ \max_{\|\bu\|_2 = 1, \bv \in \cN(1/2)} \bigg\langle \frac{1}{n} \sum_{i=1}^{n} \bx_i \big\{ Y_i - b ^ \prime (\bx_i ^ \top \bbeta ^ *) \big\}, \bu-\bv \bigg\rangle \\
	&\leq \max_{\bv \in \cN(1/2)} \bigg\langle \frac{1}{n} \sum_{i=1}^{n} \bx_i \big\{ Y_i - b ^ \prime (\bx_i ^ \top \bbeta ^ *) \big\}, \bv \bigg\rangle 
	+ \frac{1}{2} \max_{\|\bu\|_2 = 1} \bigg\langle \frac{1}{n} \sum_{i=1}^{n} \bx_i \big\{ Y_i - b ^ \prime (\bx_i ^ \top \bbeta ^ *) \big\}, \bu \bigg\rangle, 
\end{aligned}
\]
where for the first step we choose $\bv \in \cN(1/2)$ that approximates $\bu$ so that $\|\bu - \bv\|_2 \leq 1/2$. 
Lemma 2.7.6 in \cite{vershynin2010introduction} delivers that
\[
	\big\|\big\{Y - b' (\bx ^ \top \bbeta ^ *) \big\} (\bx ^ \top \bu)\big\|_{\psi_1}
	\leq \|Y - b'(\bx ^ \top \bbeta ^ *)\|_{\psi_2} \|\bx_i ^ \top \bu\|_{\psi_2} 
	\lesssim (\phi M) ^ {1 / 2} K.
\]
Thus, by Bernstein's inequality, we obtain that for any $t > 0$,
\[  
    \PP\biggl(\biggl|\frac{1}{n} \sum_{i=1}^{n} \big\{ Y_i - b ^ \prime (\bx_i ^ \top \bbeta ^ *) \big\} (\bx_i ^ \top \bu)\biggr| \ge \gamma\biggr) \le 2\exp\bigg\{-c \min\bigg(\frac{n\gamma ^ 2}{\phi M K ^ 2}, \frac{n\gamma}{(\phi M) ^ {1 / 2}K} \bigg)\bigg\},
\]
where $c > 0$ is a universal constant. Then we deduce that
\[
\begin{aligned}	
	\PP (\|\nabla \ell(\bbeta ^ *)\|_2 \geq \gamma) 
	& \leq \PP \bigg(2 \max_{\bu \in \cN(1/2)} \bigg\langle \frac{1}{n} \sum_{i=1}^{n} \bx_i \big\{ Y_i - b ^ \prime (\bx_i ^ \top \bbeta ^ *) \big\}, \bu \bigg\rangle > \gamma\bigg) \\
	& \leq \sum_{\bu \in \cN(1/2)} \PP \bigg( \frac{1}{n} \sum_{i=1}^{n} \big\{ Y_i - b ^ \prime (\bx_i ^ \top \bbeta ^ *) \big\} (\bx_i ^ \top \bu)  > \gamma / 2\bigg) \\
	& \leq 2 \exp\bigg\{p \log 6 - c \min\bigg(\frac{n \gamma ^ 2}{4 \phi M K ^ 2}, \frac{n\gamma}{2(\phi M) ^ {1 / 2}K} \bigg)\bigg\}.
\end{aligned}
\]
This implies that
\beq
    \label{ineq:grad_upper_bound}
    \PP\biggl\{\ltwonorm{\nabla\ell(\bbeta ^ *)} \ge 2 (\phi M) ^ {1 / 2} K \max\biggl(\biggl(\frac{\gamma}{n}\biggr) ^ {1 / 2}, \frac{\gamma}{n}\biggr)\biggr\} \le 2e ^ {-(\gamma - p\log 6)}. 
\eeq
Let $t = 32\log n$ in \eqref{eq:grad}. When $n \ge \kappa_0 ^ {-2} K^ 4 \max(32 \times 3 ^ 4\alpha ^ 2 \log n, 2 \times 12 ^ 4 \alpha p)$, we deduce from \eqref{eq:grad} that with probability at least $1 - 2 n ^ {-4}$ that
\beq
    \label{ineq:delta_eta}
    \|\bDelta _ \eta\| _ 2 
	\leq \frac{\|\nabla \ell(\bbeta ^ *)\|_2}{\kappa} + \frac{2r}{3}.
\eeq
Choose 
\[
    r = 6\kappa ^ {-1} (\phi M) ^ {1 / 2} K \max\biggl(\biggl(\frac{\gamma}{n}\biggr) ^ {1 / 2}, \frac{\gamma}{n}\biggr). 
\]
Then by \eqref{ineq:grad_upper_bound} and \eqref{ineq:delta_eta}, with probability at least $1 - 2 n ^ {-4} - 2e ^ {-(\gamma - p\log 6)}$, we have $r > \ltwonorm{\bDelta^{(1)}_\eta}$, which further implies that $\bDelta = \bDelta_{\eta}$ according to the construction of $\bDelta_{\eta}$. Substitute $\xi = \gamma - p\log 6$ into the bound with positive $\xi$. Choose $\xi = 4 \log n$ the conclusion thus follows.
\end{proof}

\subsection{Proof of Lemma \ref{prop:glm_decompn}}
\begin{proof}
For simplicity, we omit ``${(k)}$'' in the superscript in the following proof. We write $\bDelta = \widehat \bbeta - \bbeta ^ *$ for convenience. Since $\nabla \ell (\widehat \bbeta) = 0$, by high order Taylor's expansion we have
\[
    \begin{aligned}
        0 &= \nabla \ell (\bbeta ^ *) + \nabla ^ 2 \ell (\bbeta ^ *) \bDelta + \nabla ^ 3 \ell (\bbeta ^ *) (\bDelta \otimes \bDelta) + \bR_4 (\bDelta \otimes \bDelta \otimes \bDelta) \\
        &= \nabla \ell (\bbeta ^ *) 
		 + \bSigma \bDelta
		 + (\nabla ^ 2 \ell (\bbeta ^ *) - \bSigma) \bDelta
         + \nabla ^ 3 \ell (\bbeta ^ *) (\bDelta \otimes \bDelta) 
        + \bR_4 (\bDelta \otimes \bDelta \otimes \bDelta),
    \end{aligned}
\]
where $\bR_4 = \int_0 ^ 1 \nabla ^ 4 \ell (\bbeta ^ *+ v (\widehat \bbeta - \bbeta ^ *)) dv$. Some algebra yields that
\[
\begin{aligned}
	\bDelta
    = &- \bSigma ^ {-1} \nabla \ell(\bbeta ^ *) - \bSigma ^ {-1} (\nabla ^ 2 \ell (\bbeta ^ *) - \bSigma) \bDelta 
    - \bSigma ^ {-1} \nabla ^ 3 \ell (\bbeta ^ *) (\bDelta \otimes \bDelta) 
    -\bSigma ^ {-1} \bR_4 (\bDelta \otimes \bDelta \otimes \bDelta).
\end{aligned}
\]
 Define the event 
 \[
 	\cE := \bigg\{\ltwonorm{\bDelta} \leq C_1 \kappa ^ {-1} (\phi M) ^ {1 / 2} K \bigg(\frac{p \vee \log n}{n}\bigg) ^ {1/2}\bigg\},
 \]
where $C_1$ is a constant. An application of Lemma \ref{prop:glm_local_mse} delivers that $\PP(\cE ^ c) \leq 4 n ^ {-4}$. Then we assume $\cE$ holds, and we will consider the failure probability of $\cE$ at the end of the proof.
First, we claim with probability at least $1 - 4 n ^ {-4}$ that
\begin{equation} 
\label{eq:lem1_claim1}
	\bSigma ^ {-1} (\nabla ^ 2 \ell (\bbeta ^ *) - \bSigma) \bDelta
	= - \bSigma ^ {-1} (\nabla ^ 2 \ell (\bbeta ^ *) - \bSigma) \bSigma ^ {-1} \nabla \ell(\bbeta ^ *) + \be_1,
\end{equation}
where $\be_1$ satisfies that  
\[
\begin{aligned}
	\|\be_1\|_2 
	\lesssim \{\kappa ^ {-2} \phi M ^ 3 K ^ 7  + \kappa ^ {-1} \phi ^ {1 / 2} M ^ {5 / 2} K ^ 5\} \|\bSigma ^ {-1} \|_ 2 ^ 2 \bigg(\frac{p \vee \log n}{n}\bigg) ^ {3/2}.
\end{aligned}
\]
To show this, by Taylor's expansion, we have
\[
	\begin{aligned}
		\nabla \ell (\widehat \bbeta) 
		= \nabla \ell (\bbeta ^ *) + \bR_2 \bDelta 
		= \nabla \ell (\bbeta ^ *) + \bSigma \bDelta + (\bR_2 - \bSigma)  \bDelta=0,
	\end{aligned}
\]
where $\bR_2 = \int_0 ^ 1 \nabla ^ 2 \ell(\bbeta ^ * + v (\widehat \bbeta - \bbeta ^ *)) dv$. This implies that
\begin{equation}
\label{eq:lem1_1}
	\bDelta = - \bSigma ^ {-1} \nabla \ell (\bbeta ^ *) - \bSigma ^ {-1} (\bR_2 - \bSigma) \bDelta.
\end{equation}
Then we have
\[
\begin{aligned}
	\bSigma ^ {-1} (\nabla ^ 2 \ell (\bbeta ^ *) - \bSigma) \bDelta
	= &- \bSigma ^ {-1} (\nabla ^ 2 \ell (\bbeta ^ *) - \bSigma) \bSigma ^ {-1} \nabla \ell (\bbeta ^ *) 
	- \bSigma ^ {-1} (\nabla ^ 2 \ell (\bbeta ^ *) - \bSigma) \bSigma ^ {-1} (\bR_2 - \bSigma) \bDelta.
\end{aligned}
\]
Let $\be_1 := - \bSigma ^ {-1} (\nabla ^ 2 \ell (\bbeta ^ *) - \bSigma) \bSigma ^ {-1} (\bR_2 - \bSigma) \bDelta$. By Lemma \ref{lem:center_hessian}, with probability at least $1 - 4 n ^ {-4}$, $\be_1$ satisfies that 
\[
\begin{aligned}
	\|\be_1\|_2 
	&\leq \|\bSigma ^ {-1}\big\|_ 2 ^ 2 \|\nabla ^ 2 \ell (\bbeta ^ *) - \bSigma\| _ 2 \|\bR_2 - \bSigma\|_2 \|\bDelta\| _ 2 \\
	&\lesssim \kappa ^ {-1} \phi ^ {1 / 2} M ^ {3 / 2} K ^ 3\{\kappa ^ {-1} \phi ^ {1 / 2} M ^ {3 / 2} K ^ 4  + M K ^ 2\} \|\bSigma ^ {-1} \|_ 2 ^ 2 \bigg(\frac{p \vee \log n}{n}\bigg) ^ {3/2}.
\end{aligned}
\]
Then we claim with probability at least $1 - 2 n ^ {-4}$ that
\begin{equation} 
\label{eq:lem1_claim2}
    \bSigma ^ {-1} \nabla ^ 3 \ell (\bbeta ^ *) (\bDelta \otimes \bDelta )
    = \bSigma ^ {-1} \bTheta (\bSigma ^ {-1} \nabla \ell (\bbeta ^ *)) \otimes (\bSigma ^ {-1} \nabla \ell (\bbeta ^ *)) + \be_2,
\end{equation}
where $\be_2$ satisfies that 
\[
\begin{aligned}
	\|\be_2\|_2 
	\lesssim \{\kappa ^ {-2} \phi ^ {3 / 2} M ^ {7 / 2} K ^ 9 + \kappa ^ {-1} \phi M ^ 3 K ^ 7\} \|\bSigma ^ {-1}\|_2 ^ 3 \bigg(\frac{p \vee \log n}{n}\bigg) ^ {3/2}
	+ \kappa ^ {-2} \phi M ^ 2 K ^ 5 \|\bSigma ^ {-1}\|_2 \bigg(\frac{p \vee \log n}{n}\bigg) ^ {3/2}.
\end{aligned}
\]
Some algebra gives that
\[
	\bSigma ^ {-1} \nabla ^ 3 \ell (\bbeta ^ *) (\bDelta \otimes \bDelta) 
    = \bSigma ^ {-1} \bTheta (\bDelta \otimes \bDelta) 
      + \bSigma ^ {-1} (\nabla ^ 3 \ell (\bbeta ^ *) - \bTheta) (\bDelta \otimes \bDelta).
\]
Combining Lemmas \ref{lem:two_delta} and \ref{lem:b_three_prime}, we reach the desired result.
Next, let $\be_3 := \bR_4(\bDelta \otimes \bDelta \otimes \bDelta)$. Lemma \ref{lem:b_four_prime} yields that
\begin{equation}
\label{eq:lem1_claim3}
	\begin{aligned}
		\|\be_3\|_2 \leq \|\bR_4\|_2 \|\bDelta\| _ 2 ^ 3
	 	\lesssim \kappa ^ {-3} \phi ^ {3 / 2} M ^ {5 / 2} K ^ 7 \bigg(\frac{p \vee \log n}{n}\bigg) ^ {3/2},
	\end{aligned}
\end{equation}
with probability at least $1 - 2 n ^ {-4}$. 
Combining the bounds in (\ref{eq:lem1_claim1}), (\ref{eq:lem1_claim2}) and (\ref{eq:lem1_claim3}) and the failure probability of $\cE$, we find that with probability at least $1 - 12 n ^ {-4}$ that
\[
\begin{aligned}
	\bDelta =& - \bSigma ^ {-1} \nabla \ell (\bbeta ^ *) - \bSigma ^ {-1} (\nabla ^ 2 \ell (\bbeta ^ *) - \bSigma) \bSigma ^ {-1} \nabla \ell (\bbeta^ *) 
	- \bSigma ^ {-1} \bTheta (\bSigma ^ {-1} \nabla \ell  (\bbeta ^ *)) \otimes (\bSigma ^ {-1} \nabla \ell (\bbeta ^ *)) + \be,
\end{aligned}
\]
where $\be := \be_1 + \be_2 + \be_3$ satisfying that 
\[
	\|\be\|_2 \leq \|\be_1\|_2 + \|\be_2\|_2 + \|\be_3\|_2 
	\lesssim C_{\kappa, \phi, M, K, \bSigma ^ {-1}} \bigg(\frac{p \vee \log n}{n}\bigg) ^ {3/2},
\]
with 
\[
\begin{aligned}
	C_{\kappa, \phi, M, K, \bSigma ^ {-1}} 
	&:= \kappa ^ {-3} \phi ^ {3 / 2}  M ^ {5 / 2} K ^ 7 
		+ \kappa ^ {-2} \phi M ^ 2 K ^ 5 \|\bSigma ^ {-1}\|_2 \\
	&\quad + \{\kappa ^ {-2} \phi M ^ 3 K ^ 7  + \kappa ^ {-1} \phi ^ {1 / 2} M ^ {5 / 2} K ^ 5\} \|\bSigma ^ {-1} \|_ 2 ^ 2 \\
	&\quad + \{\kappa ^ {-2} \phi ^ {3 / 2} M ^ {7 / 2} K ^ 9 + \kappa ^ {-1} \phi M ^ 3 K ^ 7\} \|\bSigma ^ {-1}\|_2 ^ 3.
\end{aligned}
\]

\end{proof}
\subsection{Proof of Theorem \ref{thm:glm_reboot_gradient}}

\begin{proof}
For simplicity, let $\widetilde \bz ^ {(k)} = (\widetilde \bx ^ {(k)}, \widetilde Y ^ {(k)})$. Applying the fourth-order Taylor expansion of $\nabla \ell (\widehat \bbeta ^ {(k)}; \widetilde \bz ^ {(k)})$ at $\bbeta ^ *$ yields that 
\[ 
	\begin{aligned}
		\nabla \ell (\widehat \bbeta ^ {(k)}; \widetilde \bz ^ {(k)})
		&= \nabla \ell (\bbeta ^ *; \widetilde \bz ^ {(k)}) 
		+ \nabla ^ 2 \ell (\bbeta ^ *; \widetilde \bz ^ {(k)}) \bDelta ^ {(k)}
		+ \nabla ^ 3 \ell (\bbeta ^ *; \widetilde \bz ^ {(k)}) (\bDelta ^ {(k)} \otimes \bDelta ^ {(k)}), \\
		&\quad+ \bigg\{\int_0 ^ 1 \nabla ^ 4 \ell (\bbeta ^ * + v (\widehat \bbeta ^ {(k)} - \bbeta ^ *), \widetilde \bz ^ {(k)}) dv\bigg\} (\bDelta ^ {(k)} \otimes \bDelta ^ {(k)} \otimes \bDelta ^ {(k)}). 
	\end{aligned}
\]
Recall that $\EE \big(\cdot \big|\, \widehat \bbeta ^ {(k)}\big)$ denote the conditional expectation given $\widehat \bbeta ^ {(k)}$. We have
\begin{equation} 
\label{eq:proof1_subtit1}
	\begin{aligned}
		0 &= \EE \big\{\nabla \ell (\widehat \bbeta ^ {(k)} ;\widetilde \bz ^ {(k)}) \big|\, \widehat \bbeta ^ {(k)}\big\} \\
		  &= \EE \big\{\nabla \ell (\bbeta ^ *; \widetilde \bz ^ {(k)}) \big|\, \widehat \bbeta ^ {(k)}\big\}
		  + \bSigma \bDelta ^ {(k)}
		  + \bTheta (\bDelta ^ {(k)} \otimes \bDelta ^ {(k)})+ \widetilde \bR_4 ^ {(k)} (\bDelta ^ {(k)} \otimes \bDelta ^ {(k)} \otimes \bDelta ^ {(k)}),
	\end{aligned}
\end{equation}
where $\widetilde \bR_4 ^ {(k)} = \EE \big\{\int_0 ^ 1 \nabla ^ 4 \ell (\bbeta ^ * + v (\widehat \bbeta ^ {(k)} - \bbeta ^ *), \widetilde \bz ^ {(k)}) dv |\, \widehat \bbeta ^ {(k)}\big\}$. Similarly, by the fourth-order Taylor expansion and the fact that $\nabla \ell ^ {(k)} (\widehat \bbeta ^ {(k)}) = 0$ , we have
\[
\begin{aligned}
	\nabla \ell ^ {(k)} (\widehat \bbeta ^ {(k)})
	&= \nabla \ell ^ {(k)} (\bbeta ^ *) + \bSigma \bDelta ^ {(k)} + \bTheta (\bDelta ^ {(k)} \otimes \bDelta ^ {(k)}) + (\nabla ^ 2 \ell ^ {(k)} (\bbeta ^ *) - \bSigma) \bDelta ^ {(k)} \\
	&\quad + (\nabla ^ 3 \ell^{(k)} (\bbeta ^ *) - \bTheta) (\bDelta ^ {(k)} \otimes \bDelta ^ {(k)})
  		+ \bR_4 ^ {(k)} (\bDelta ^ {(k)} \otimes \bDelta ^ {(k)} \otimes \bDelta ^ {(k)})
  	= 0, 
\end{aligned}
\]
where $\bR_4 ^ {(k)} := \int_0 ^ 1 \nabla ^ 4 \ell ^ {(k)} \bigl(\bbeta ^ * + v (\widehat \bbeta ^ {(k)} - \bbeta ^ *)\bigr) dv$. This can be rearranged as 
\beq
    \label{eq:substi_2}
    \begin{aligned}
        \bSigma \bDelta ^ {(k)} + \bTheta (\bDelta ^ {(k)} \otimes \bDelta ^ {(k)}) 
        &= - \nabla \ell ^ {(k)} (\bbeta ^ *) - (\nabla ^ 2 \ell ^ {(k)} (\bbeta ^ *) - \bSigma) \bDelta ^ {(k)} - (\nabla ^ 3 \ell ^ {(k)} (\bbeta ^ *) - \bTheta)  (\bDelta ^ {(k)} \otimes \bDelta ^ {(k)}) \\
        &\quad - \bR_4 ^ {(k)} (\bDelta ^ {(k)} \otimes \bDelta ^ {(k)} \otimes \bDelta ^ {(k)}).
    \end{aligned}
\eeq
Substituting \eqref{eq:substi_2} into \eqref{eq:proof1_subtit1}, we have
\[
	\begin{aligned}
		0
		&= \EE \big\{\nabla \ell (\bbeta ^ *; \widetilde \bz ^ {(k)}) \big|\, \widehat \bbeta ^ {(k)}\big\} 
		- \nabla \ell ^ {(k)} (\bbeta ^ *)  
		- (\nabla ^ 2 \ell ^ {(k)} (\bbeta ^ *) - \bSigma) \bDelta ^ {(k)}
		- (\nabla ^ 3 \ell ^ {(k)} (\bbeta ^ *) - \bTheta)  (\bDelta ^ {(k)} \otimes \bDelta ^ {(k)}) \\
		&\quad- \bR_4 ^ {(k)} (\bDelta ^ {(k)} \otimes \bDelta ^ {(k)} \otimes \bDelta ^ {(k)})
		+ \widetilde \bR_4 ^ {(k)} (\bDelta ^ {(k)} \otimes \bDelta ^ {(k)} \otimes \bDelta ^ {(k)}).
	\end{aligned}
\]
Note that $\nabla \widetilde \ell(\bbeta ^ *) = \frac{1}{m} \sum_{k=1} ^ m \EE \big\{\nabla \ell (\bbeta ^ *; \widetilde \bz ^ {(k)}) \big|\, \widehat \bbeta ^ {(k)}\big\}$. Then we have the following decomposition: 
\[
\begin{aligned}
	\nabla \widetilde \ell(\bbeta ^ *) 
	= &\underbrace{\nabla \ell (\bbeta ^ *)}_{T_1}
	+ \underbrace{\frac{1}{m} \sum_{k=1} ^ m (\nabla ^ 2 \ell ^ {(k)} (\bbeta ^ *) - \bSigma) \bDelta ^ {(k)}}_{T_2} 
	+ \underbrace{\frac{1}{m} \sum_{k=1} ^ m (\nabla ^ 3 \ell ^ {(k)} (\bbeta ^ *) - \bTheta) (\bDelta ^ {(k)} \otimes \bDelta ^ {(k)})}_{T_3} \\
	&+ \underbrace{\frac{1}{m} \sum_{k=1} ^ m \bR_4 ^ {(k)} (\bDelta ^ {(k)} \otimes \bDelta ^ {(k)} \otimes \bDelta ^ {(k)})}_{T_4} 
	- \underbrace{\frac{1}{m} \sum_{k=1} ^ m \widetilde \bR_4 ^ {(k)} (\bDelta ^ {(k)} \otimes \bDelta ^ {(k)} \otimes \bDelta ^ {(k)})}_{T_5}.
\end{aligned}	
\]
For simplicity, we define $\Upsilon_1 := \{(p \vee \log n) / n\} ^ {1 / 2}$. To study the appropriate threshold, we introduce the following events:
\[
\begin{aligned}
	\cE ^ {(k)} := 
		&\big\{\|\nabla \ell ^ {(k)} (\bbeta ^ *)\|_2
			\leq C_1 (\phi M) ^ {1 / 2} K \Upsilon_1\big\}\cap \\
		&\big\{\|\nabla ^ 2 \ell ^ {(k)} (\bbeta ^ *) - \bSigma\|_2 
			\leq C_2 M K ^ 2 \Upsilon_1\big\}\cap \\
		&\big\{\|\nabla ^ 3 \ell ^ {(k)} (\bbeta ^ *) - \bTheta\|_2
			\leq C_3 M K ^ 3 \Upsilon_1\big\} \cap \\
		&\big\{\|\bR_4 ^ {(k)}\|_2 \leq C_4 M K ^ 4 \big\}, \\
	\cF^{(k)} := 
		&\big\{\ltwonorm{\bDelta ^ {(k)}} \leq C_5 \kappa ^ {-1} (\phi M) ^ {1 / 2} K \Upsilon_1\big\}
\end{aligned}
\]
where $C_1, C_2, \ldots, C_5$ are constants. Define the intersection of all the above events by $\cA$. By Lemma \ref{prop:glm_local_mse}, Lemmas \ref{lem:center_Sigma}, \ref{lem:b_three_prime}, \ref{lem:b_four_prime} and \ref{lem:center_hessian}, we have $\PP (\cA^c) \leq 12 m n ^ {-4}$. We categorize these terms into variance and bias terms: $T_1$ is the variance term, and $T_2, T_3, T_4, T_5$ are the bias terms. Then we work on the bounds for $\{T_i\}_{i=1} ^ 5$ conditional on event $\cA$.

\paragraph*{\bf \underline{Variance terms}}

\paragraph*{Bound for $T_1$}
By \eqref{ineq:grad_upper_bound}, we have
\begin{equation}
\label{ineq:grad_upper_bound_full}
	\PP\biggl\{\ltwonorm{\nabla\ell(\bbeta ^ *)} \ge 2(\phi M) ^ {1/ 2}K \max\biggl(\bigg(\frac{t p}{mn}\bigg) ^ {1 / 2}, \frac{t p}{mn}\bigg)\bigg\} 
    \le 2e ^ {-(t - \log 6)p}. 
\end{equation}
Choose $t$ such that $(t - \log 6)p = 4 \log n$. This implies with probability at least $1 - 2 n ^ {-4}$ that
\[
	\ltwonorm{T_1} 
	\lesssim (\phi M) ^ {1 / 2} K \bigg(\frac{p \vee \log n}{mn}\bigg) ^ {1 / 2}.
\]

\paragraph*{\bf \underline{Bias terms}}

\paragraph*{Bound for $T_2$}

By Lemma \ref{prop:glm_decompn}, we have the decomposition:
\[
\begin{aligned}
	\frac{1}{m} \sum_{k=1} ^ m (\nabla ^ 2 \ell ^ {(k)} (\bbeta ^ *) - \bSigma) \bDelta ^ {(k)} 
	&= -\underbrace{\frac{1}{m} \sum_{k=1} ^ m (\nabla ^ 2 \ell ^ {(k)} (\bbeta ^ *) - \bSigma) \bSigma^{-1} \nabla \ell ^ {(k)} (\bbeta^ *)}_{T_{21}} \\
	&\quad- \underbrace{\frac{1}{m} \sum_{k=1} ^ m (\nabla ^ 2 \ell ^ {(k)} (\bbeta ^ *) - \bSigma) \bSigma^{-1} (\nabla ^ 2 \ell ^ {(k)} (\bbeta^ *) - \bSigma) \bSigma^{-1} \nabla \ell ^ {(k)} (\bbeta^ *)}_{T_{22}}  \\
	&\quad- \underbrace{\frac{1}{m} \sum_{k=1} ^ m \EE \Big\{(\nabla ^ 2 \ell ^ {(k)} (\bbeta ^ *) - \bSigma) \bSigma^{-1} \bTheta (\bSigma^{-1} \nabla \ell ^ {(k)} (\bbeta^ *)) \otimes (\bSigma ^ {-1} \nabla \ell ^ {(k)} (\bbeta^ *))\Big\}}_{T_{23}} \\
	&\quad - \frac{1}{m} \sum_{k=1} ^ m \Big[(\nabla ^ 2 \ell ^ {(k)} (\bbeta ^ *) - \bSigma) \bSigma^{-1} \bTheta (\bSigma^{-1} \nabla \ell ^ {(k)} (\bbeta^ *)) \otimes (\bSigma ^ {-1} \nabla \ell ^ {(k)} (\bbeta^ *)) \\
		&\quad\quad \underbrace{\quad\quad\quad - \EE \Big\{(\nabla ^ 2 \ell ^ {(k)} (\bbeta ^ *) - \bSigma) \bSigma^{-1} \bTheta (\bSigma^{-1} \nabla \ell ^ {(k)} (\bbeta^ *)) \otimes (\bSigma ^ {-1} \nabla \ell ^ {(k)}(\bbeta^ *))\Big\}\Big]}_{T_{24}} \\
	&\quad- \underbrace{\frac{1}{m} \sum_{k=1} ^ m (\nabla ^ 2 \ell ^ {(k)} (\bbeta ^ *) - \bSigma) \be ^ {(k)}}_{T_{25}}.
\end{aligned}	
\]

\noindent \textit{\underline{Bounds for $T_{21}$ and $T_{22}$}.}
Recall that $\EE \{\nabla \ell ^ {(k)} (\bbeta ^ *)\} = 0$. This implies that $T_{21}$ and $T_{22}$ are centered. For any $\bu \in \cS ^ {p - 1}$, we have
\[
\begin{aligned}
	\big|\bu ^ \top (\nabla ^ 2 \ell ^ {(k)} (\bbeta ^ *) - \bSigma) \bSigma^{-1} \nabla \ell ^ {(k)} (\bbeta^ *)\big| 
	&\leq \big\|\bSigma^{-1}\big\|_2 \big\|\nabla ^ 2 \ell ^ {(k)} (\bbeta ^ *) - \bSigma\big\|_2 \big\|\nabla \ell ^ {(k)} (\bbeta^ *)\big\|_2 \\
	&\lesssim \phi ^ {1 / 2} M ^ {3 / 2} K ^ 3  \|\bSigma^{-1}\|_2 \bigg(\frac{p \vee \log n}{n}\bigg).
\end{aligned}
\]
By Hoeffding's inequality, we have with probability at least $1 - 2 n ^ {-4}$ that
\begin{equation}
\begin{aligned}
\label{eq:bound_T31}
	\|T_{21}\|_2 &= \bigg\|\frac{1}{m} \sum_{k=1} ^ m (\nabla ^ 2 \ell ^ {(k)} (\bbeta ^ *) - \bSigma) \bSigma^{-1} \nabla \ell ^ {(k)} (\bbeta^ *)\bigg\|_2 
	\leq 2 \max_{\bu \in \cN(1/2)} \bigg|\frac{1}{m} \sum_{k=1} ^ m \bu ^ \top (\nabla ^ 2 \ell ^ {(k)} (\bbeta ^ *) - \bSigma) \bSigma^{-1} \nabla \ell ^ {(k)} (\bbeta^ *)\bigg| \\
	&\lesssim \phi ^ {1 / 2} M ^ {3 / 2} K ^ 3  \|\bSigma^{-1}\|_2 \bigg(\frac{p \vee \log n}{n}\bigg) \bigg(\frac{p \vee \log n}{m}\bigg) ^ {1/2} 
	\lesssim \phi ^ {1 / 2} M ^ {3 / 2} K ^ 3  \|\bSigma^{-1}\|_2 \bigg(\frac{p \vee \log n}{mn}\bigg) ^ {1/2},
\end{aligned}
\end{equation}
where we apply the fact that $n \geq C\max(\log n, p ^ 2)$ for some constant $C$. Similarly, we obtain that with probability at least $1 - 2 n ^ {-4}$ that
\begin{equation}
\label{eq:bound_T32}
	\|T_{22}\|_2 \lesssim  \phi ^ {1 / 2} M ^ {5 / 2} K ^ 5 \|\bSigma^{-1}\|_2 ^ 2 \bigg(\frac{p \vee \log n}{n}\bigg) ^ {1/2} \bigg(\frac{p \vee \log n}{mn}\bigg) ^ {1/2}.	
\end{equation}
\noindent \textit{\underline{Bound for $T_{23}$.}}
Define functions $U: \cX \rightarrow \RR ^ {p \times p}$ and $V: \cX \times \cY \rightarrow \RR ^ p$ as 
\[
	U(\bx) := b''(\bx ^ \top \bbeta ^ *) \bx\bx ^ \top - \bSigma
	~~\text{and}~~ 
	V(\bx, Y) := -  \bSigma ^ {-1} \bx (Y - b'(\bx ^ \top \bbeta ^ *)). 
\]
Then we have
\[
	\begin{aligned}
		T_{33} &= \EE (\nabla ^ 2 \ell ^ {(1)} (\bbeta ^ *) - \bSigma) \bSigma ^ {-1} \bTheta (\bSigma ^ {-1} \nabla \ell ^ {(1)} (\bbeta ^ *)) \otimes (\bSigma ^ {-1} \nabla \ell ^ {(1)} (\bbeta ^ *)) \\
		&= \frac{1}{n ^ 3} \EE \bigg\{\sum_{i=1}^n U(\bx_{i} ^ {(1)}) \bSigma ^ {-1} \bTheta V(\bx_{i} ^ {(1)}, Y_{i} ^ {(1)}) \otimes   V(\bx_{i} ^ {(1)}, Y_{i} ^ {(1)}) \bigg\} \\
		& = \frac{1}{n ^ 2} \EE \Big\{ U(\bx_1 ^ {(1)})  \bSigma ^ {-1} \bTheta V(\bx_1 ^ {(1)}, Y_1 ^ {(1)}) \otimes   V(\bx_1 ^ {(1)}, Y_1 ^ {(1)}) \Big\}.
	\end{aligned}
\]
By the Cauchy-Schwarz inequality and Jensen's inequality, we have
\[
\begin{aligned}
	\Big\| \EE \Big\{U(\bx_1 ^ {(1)}) \bSigma ^ {-1} \bTheta V(\bx_1 ^ {(1)}, Y_1 ^ {(1)}) \otimes V(\bx_1 ^ {(1)}, Y_1 ^ {(1)})\Big\} \Big\|_2 
	&\leq \EE \Big\| U(\bx_1 ^ {(1)}) \bSigma ^ {-1} \bTheta V(\bx_1 ^ {(1)}, Y_1 ^ {(1)}) \otimes  V(\bx_1 ^ {(1)}, Y_1 ^ {(1)}) \Big\|_2  \\
	&\leq \|\bSigma ^ {-1}\|_2 \|\bTheta\|_2 
			\Big\{\EE \big\|U(\bx_1 ^ {(1)})\big\|_2 ^ 2 
			~\EE \big\|V(\bx_1 ^ {(1)}, Y_1 ^ {(1)})\big\|_2 ^ 4 \Big\} ^ {1/2}\!\!.
\end{aligned}
\]
Note that
\[
	\EE \big\|U(\bx_1 ^ {(1)})\big\|_2 ^ 2 =\EE \big\|b''(\bx_1 ^ {(1)\top}  \bbeta ^ *) \bx_1 ^ {(1)} \bx_1 ^ {(1)\top} - \bSigma \big\|_2 ^ 2 \leq M ^ 2 K ^ 4 p ^ 2,
\]
and
\[
\begin{aligned}
	\EE \big\|V(\bx_1 ^ {(1)}, Y_1 ^ {(1)})\big\|_2  ^ 4
	&\leq \|\bSigma ^ {-1}\|_2 ^ 4 ~ \EE \big\| \bx_1 ^ {(1)} (Y_1 ^ {(1)} - b'(\bx_1 ^ {(1)\top} \bbeta ^ *)) \big\|_2 ^ 4 \\
	&\leq \Big\{\EE \big\|\bx_1 ^ {(1)}\big\|_2  ^ 8 ~\EE \big\| Y_1 ^ {(1)} - b'(\bx_1 ^ {(1)\top} \bbeta ^ *) \big\|_2 ^ 8 \Big\} ^ {1/2}
	\lesssim  (\phi M) ^ 2 K ^ 4 p ^ 2.
\end{aligned}
\]
Combining with the fact that $\|\bTheta\|_2 \lesssim M K^3$, we have
\begin{equation}
\label{eq:bound_T33}
	\|T_{23}\|_2 \lesssim \phi M ^ 3 K ^ 7 \|\bSigma ^ {-1} \|_2 ^ 3 \bigg(\frac{p}{n}\bigg) ^ 2.
\end{equation}
\noindent \textit{\underline{Bound for $T_{24}$.}}
Let 
\[
	\bw_k := (\nabla ^ 2 \ell ^ {(k)} (\bbeta ^ *) - \bSigma) \bSigma^{-1} \bTheta (\bSigma^{-1} \nabla \ell ^ {(k)} (\bbeta^ *)) \otimes (\bSigma ^ {-1} \nabla \ell ^ {(k)} (\bbeta^ *)).
\]
For any $\bu \in \cS ^ {p-1}$, $|\bu ^ \top(\bw_k - \EE \bw_k)| \lesssim \|\bw_k\|_2 \lesssim \phi M ^ 3 K ^ 7 \|\bSigma ^ {-1} \|_2 ^ 3 \Upsilon_1 ^ 3$. Similar to $T_{21}$, we have with probability at least $1 - 2 n ^ {-4}$ that
\begin{equation}
\label{eq:bound_T34}
	\|T_{24}\|_2 \lesssim \phi M ^ 3 K ^ 7 \|\bSigma ^ {-1} \|_2 ^ 3 \bigg(\frac{p \vee \log n}{n}\bigg) ^ {1/2} \bigg(\frac{p \vee \log n}{mn}\bigg) ^ {1 / 2}.
\end{equation}
\noindent \textit{\underline{Bound for $T_{25}$.}}
By the triangle inequality,
\begin{equation}
\label{eq:bound_T35}
	\begin{aligned}
		\|T_{25}\|_2  
		\leq \frac{1}{m} \sum_{k=1} ^ k \big\|(\nabla ^ 2 \ell ^ {(k)} (\bbeta ^ *) - \bSigma) \be ^ {(k)}\big\|_2 
		\lesssim C_{\kappa, \phi, M, K, \bSigma ^ {-1}} M K ^ 2 \bigg(\frac{p \vee \log n}{n}\bigg) ^ 2,
	\end{aligned}
\end{equation}
where $C_{\kappa, \phi, M, K, \bSigma ^ {-1}}$ is defined in Lemma \ref{prop:glm_decompn}.
Combining \eqref{eq:bound_T31}, \eqref{eq:bound_T32}, \eqref{eq:bound_T33} ,\eqref{eq:bound_T34} and \eqref{eq:bound_T35}, we have with the probability at least $1 - 6 n ^ {-4}$ that
\[
\begin{aligned}
		\|T_2\|_2 
		\lesssim &\phi ^ {1 / 2} M ^ {3 / 2} K ^ 3  \|\bSigma^{-1}\|_2 \bigg(\frac{p \vee \log n}{mn}\bigg) ^ {1 / 2} 
		 + C_{\kappa, \phi, M, K, \bSigma ^ {-1}}' \bigg(\frac{p \vee \log n}{n}\bigg) ^ 2,
\end{aligned}
\]
where $C_{\kappa, \phi, M, K, \bSigma ^ {-1}}' = \phi M ^ 3 K ^ 7 \|\bSigma ^ {-1} \|_2 ^ 3 + C_{\kappa, \phi, M, K, \bSigma ^ {-1}} M K ^ 2$.

\paragraph*{Bound for $T_3$}

Lemma \ref{lem:two_delta} delivers that
\[
\begin{aligned}
	\|T_3\|_2 
	\lesssim \bigg\|\frac{1}{m} \sum_{k=1} ^ m (\nabla ^ 3 \ell ^ {(k)} (\bbeta ^ *) - \bTheta) (\bSigma ^ {-1} \nabla \ell ^ {(k)}(\bbeta ^ *)) \otimes (\bSigma ^ {-1} \nabla \ell ^ {(k)}(\bbeta ^ *)) \bigg\|_2\!\!\!\! 
	+ C_{\kappa, \phi, M, K}' M K ^ 3 \|\bSigma ^ {-1}\|_2 ^ 2 \bigg(\frac{p \vee \log n}{n}\bigg) ^ 2 \!\!,
\end{aligned}
\]
where $C_{\kappa, \phi, M, K}' = \kappa ^ {-2} \phi ^ {3 / 2} M ^ {5 / 2} K ^ 6 + \kappa ^ {-1} \phi M ^ 2 K ^ 4$ which is defined in Lemma \ref{lem:two_delta}. Applying the same technique used in $T_{23}$ and $T_{24}$, we have with probability at least $1 - 2 n ^ {-4}$ that
\[
\begin{aligned}
	\bigg\|\frac{1}{m} \sum_{k=1} ^ m (\nabla ^ 3 \ell ^ {(k)} (\bbeta ^ *) &- \bTheta) (\bSigma ^ {-1} \nabla \ell ^ {(k)}(\bbeta ^ *)) \otimes (\bSigma ^ {-1} \nabla \ell ^ {(k)}(\bbeta ^ *)) \bigg\|_2 \\
	&\lesssim \phi M ^ 2 K ^ 5 \|\bSigma ^ {-1} \|_2 ^ 2 \bigg(\frac{p \vee \log n}{n}\bigg) ^ {3 / 2} \bigg(\frac{p \vee \log n}{m}\bigg) ^ {1 / 2}
	+ \phi M ^ 2 K ^ 5 \|\bSigma ^ {-1} \|_2 ^ 2 \bigg(\frac{p}{n}\bigg) ^ 2.
\end{aligned}
\]
Therefore, we have with probability at least $1 - 2 n ^ {-4}$ that
\[
	\|T_3\|_2
	\lesssim \phi M ^ 2 K ^ 5 \|\bSigma ^ {-1} \|_2 ^ 2 \bigg(\frac{p \vee \log n}{n}\bigg) ^ {\!\!1 / 2} \!\! \bigg(\frac{p \vee \log n}{mn}\bigg) ^ {\!\!1 / 2} \!\!\!\!\!\!
		+ C_{\kappa, \phi, M, K, \bSigma ^ {-1}}''  \bigg(\frac{p \vee \log n}{n}\bigg) ^ 2,	
\]
where $C_{\kappa, \phi, M, K, \bSigma ^ {-1}}'' = \phi M ^ 2 K ^ 5 \|\bSigma ^ {-1} \|_2 ^ 2 + C_{\kappa, \phi, M, K}' M K ^ 3 \|\bSigma ^ {-1}\|_2 ^ 2$.

\paragraph*{Bound for $T_4$}
Similarly, Lemma \ref{lem:two_delta} delivers that
\[
\begin{aligned}
	\|T_4\|_2 
	\lesssim &\bigg\|\frac{1}{m} \sum_{k=1} ^ m \bR_4 ^ {(k)} (\bSigma^{-1} \nabla \ell ^ {(k)}(\bbeta ^ *)) \otimes (\bSigma^{-1} \nabla \ell ^ {(k)}(\bbeta ^ *)) \otimes (\bSigma^{-1} \nabla \ell ^ {(k)}(\bbeta ^ *))\bigg\|_2 \\
	&+ C_{\kappa, \phi, M, K}'' M K ^ 4 \|\bSigma ^ {-1}\|_2 ^ 3 \bigg(\frac{p \vee \log n}{n}\bigg) ^ {2},
\end{aligned}
\]
where $C_{\kappa, \phi, M, K}'' = \kappa ^ {-2} \phi ^ 2 M ^ 3 K ^ 7  + \kappa ^ {-1} \phi ^ {3 / 2} M ^ {5 / 2} K ^ 5$ which is defined in Lemma \ref{lem:two_delta}. For convenience, let
\[
	\bs_k := \bR_4 ^ {(k)} (\bSigma^{-1} \nabla \ell ^ {(k)}(\bbeta ^ *)) \otimes (\bSigma^{-1} \nabla \ell ^ {(k)}(\bbeta ^ *))\otimes (\bSigma^{-1} \nabla \ell ^ {(k)}(\bbeta ^ *)).
\]
Applying the same technique used in $T_{24}$, we have with probability at least $1 - 2 n ^ {-4}$ that
\[
\begin{aligned}
	\Big\|\frac{1}{m} \sum_{k = 1} ^ m \bs_k\bigg\|_2
	&\leq \bigg\|\frac{1}{m} \sum_{k = 1} ^ m (\bs_k - \EE \bs_k)\bigg\|_2 + \bigg\|\frac{1}{m} \sum_{k = 1} ^ m \EE \bs_k\bigg\|_2 \\
	&\leq \phi ^ {3 / 2} M ^ {5 / 2} K ^ 7 \|\bSigma ^ {-1}\|_2 ^ 3 \bigg(\frac{p \vee 4 \log n}{n}\bigg) ^ {1 / 2} \bigg(\frac{p \vee 4 \log n}{mn}\bigg) ^ {1 / 2} + \|\EE \bs_1\|_2.
\end{aligned}
\]
Note that
\[
\begin{aligned}
	\|\EE \bs_1\|_2 
	&= \bigg\|\EE \bigg\{\int_0 ^ 1 \nabla ^ 4 \ell ^ {(1)} (\bbeta ^ * + v (\widehat \bbeta ^ {(1)} - \bbeta ^ *)) (\bSigma^{-1} \nabla \ell ^ {(1)}(\bbeta ^ *)) \otimes (\bSigma^{-1} \nabla \ell ^ {(1)}(\bbeta ^ *))\otimes (\bSigma^{-1} \nabla \ell ^ {(1)}(\bbeta ^ *)) dv\bigg\}\bigg\|_2 \\
	&= \sup_{\|\bu\|_2 = 1} \bigg|\EE \bigg\{\int_0 ^ 1 \frac{1}{n} \sum_{i=1} ^ n b''''(\bbeta ^ * + v (\widehat \bbeta ^ {(1)} - \bbeta ^ *)) (\bx_i ^ {(1)\top} \bSigma^{-1} \nabla \ell ^ {(1)}(\bbeta ^ *)) ^ 3 (\bx_i ^ {(1)\top} \bu) dv\bigg\} \bigg| \\
	&= \sup_{\|\bu\|_2 = 1} \bigg|\frac{1}{n} \sum_{i=1} ^ n \EE \bigg\{\int_0 ^ 1 b''''(\bbeta ^ * + v (\widehat \bbeta ^ {(1)} - \bbeta ^ *)) dv (\bx_i ^ {(1)\top} \bSigma^{-1} \nabla \ell ^ {(1)}(\bbeta ^ *)) ^ 3 (\bx_i ^ {(1)\top} \bu)\bigg\} \bigg|.
\end{aligned}
\]
By H\"older's inequality and Jensen's inequality, we have
\[
\begin{aligned}
	\EE \bigg\{\int_0 ^ 1 b''''(\bbeta ^ * &+ v (\widehat \bbeta ^ {(1)} - \bbeta ^ *)) dv (\bx_i ^ {(1)\top} \bSigma^{-1} \nabla \ell ^ {(1)}(\bbeta ^ *)) ^ 3 (\bx_i ^ {(1)\top} \bu)\bigg\} \\
	&\leq \bigg[\EE \bigg\{\int_0 ^ 1 b''''(\bbeta ^ * + v (\widehat \bbeta ^ {(1)} - \bbeta ^ *)) dv (\bx_i ^ {(1)\top} \bSigma^{-1} \nabla \ell ^ {(1)}(\bbeta ^ *)) ^ 3\bigg\} ^ 2\bigg] ^ {1 / 2} \Big\{\EE (\bx_i ^ {(1)\top} \bu) ^ 2\Big\} ^ {1 / 2} \\
	&\leq \bigg[\EE \int_0 ^ 1 \Big\{b''''(\bbeta ^ * + v (\widehat \bbeta ^ {(1)} - \bbeta ^ *))\Big\} ^ 2 dv (\bx_i ^ {(1)\top} \bSigma^{-1} \nabla \ell ^ {(1)}(\bbeta ^ *)) ^ 6\bigg] ^ {1 / 2} \Big\{\EE (\bx_i ^ {(1)\top} \bu) ^ 2\Big\} ^ {1 / 2} \\
	&\leq M \Big\{\EE (\bx_i ^ {(1)\top} \bSigma^{-1} \nabla \ell ^ {(1)}(\bbeta ^ *)) ^ 6\Big\} ^ {1 / 2} \Big\{\EE (\bx_i ^ {(1)\top} \bu) ^ 2\Big\} ^ {1 / 2}.
\end{aligned}
\]
By the Cauchy-Schwarz inequality, we have
\[
	\|\EE \bs_1\|_2
	\leq M\sup_{\|\bu\|_2 = 1} \bigg[\bigg\{\frac{1}{n} \sum_{i=1} ^ n \EE (\bx_i ^ {(1)\top} \bSigma^{-1} \nabla \ell ^ {(1)}(\bbeta ^ *)) ^ 6 \bigg\} \bigg\{\frac{1}{n} \sum_{i=1} ^ n \EE (\bx_i ^ {(1)\top} \bu) ^ 2\bigg\}\bigg] ^ {1 / 2}.
\]
Applying the same technique used in $T_{23}$ and the fact that $\bx ^ \top \bSigma ^ {-1} \bx \leq \|\bSigma ^ {-1}\|_2 \|\bx\|_2 ^ 2$, we have
\[
\begin{aligned}
	\EE (\bx_i ^ {(1)\top} \bSigma^{-1} \nabla \ell ^ {(1)}(\bbeta ^ *)) ^ 6
	&\leq  \frac{\EE \Big\{\bx_1 ^ {(1)\top} \bSigma ^ {-1} \bx_1 ^ {(1)} (Y_1 ^ {(1)} - b'(\bx_1 ^ {(1)\top} \bbeta ^ *))\Big\} ^ 6}{n ^ 5} 
	\lesssim \phi ^ 3 M ^ 3 K ^ {12} \|\bSigma ^ {-1}\|_2 ^ 6 \bigg(\frac{p ^ 6}{n ^ 5}\bigg).
\end{aligned}
\]
By Condition \ref{con:distribution}, we have $\EE (\bx_i ^ \top \bu) ^ 2 \lesssim K ^ 2$. Recall the fact that $n \geq C\max(\log n, p ^ 2)$ for some constant $C$. This implies that
\[
\begin{aligned}
	\|\EE \bs_1\|_2 \lesssim \phi ^ {3 / 2} M ^ {5 / 2} K ^ 7 \|\bSigma ^ {-1}\|_2 ^ 3 \bigg(\frac{p}{n}\bigg) ^ 2.
\end{aligned}
\]
It thus follows that with probability at least $1 - 2 n ^ {-4}$ that
\[
	\|T_4\|_2 
	\lesssim \phi ^ {3 / 2} M ^ {5 / 2} K ^ 7 \|\bSigma ^ {-1}\|_2 ^ 3 \bigg(\frac{p \vee \log n}{n}\bigg) \bigg(\frac{p \vee \log n}{mn}\bigg) ^ {1 / 2}
	+ C_{\kappa, \phi, M, K, \bSigma ^ {-1}}''' \bigg(\frac{p \vee \log n}{n}\bigg) ^ 2,
\]
where $C_{\kappa, \phi, M, K, \bSigma ^ {-1}}''' = \phi ^ {3 / 2} M ^ {5 / 2} K ^ 7 \|\bSigma ^ {-1}\|_2 ^ 3 + C_{\kappa, \phi, M, K}'' M K ^ 4\|\bSigma ^ {-1}\|_2 ^ 3$.

\paragraph*{Bound for $T_5$}
Similar to $T_5$, we derive with probability at least $1 - 2 n ^ {-4}$ that
\[
	\|T_5\|_2
	\lesssim \phi ^ {3 / 2} M ^ {5 / 2} K ^ 7 \|\bSigma ^ {-1}\|_2 ^ 3 \bigg(\frac{p \vee \log n}{n}\bigg) \bigg(\frac{p \vee \log n}{mn}\bigg) ^ {1 / 2}
	+ C_{\kappa, \phi, M, K, \bSigma ^ {-1}}''' \bigg(\frac{p \vee \log n}{n}\bigg) ^ 2.
\]

Combining the bounds for $\{T_i\}_{i=1} ^ 5$ conditional on event $\cA$, we have with probability at least $1 - 14 n ^ {-4}$ that 
\begin{equation}
\label{eq:bound_grad}
\begin{aligned}
	\|\nabla \widetilde \ell(\bbeta ^ *)\|_2
	&\lesssim \check C_{\kappa, \phi, M, K, \bSigma ^ {-1}}' \bigg(\frac{p \vee 4 \log n}{mn}\bigg) ^ {1 / 2}
	+ \check C_{\kappa, \phi, M, K, \bSigma ^ {-1}}'' \bigg(\frac{p \vee 4 \log n}{n}\bigg) ^ 2, 
\end{aligned}
\end{equation}
where $\check C_{\kappa, \phi, M, K, \bSigma ^ {-1}}' = (\phi M) ^ {1 / 2} K + \kappa ^ {-1} \phi ^ {1 / 2} M ^ {3 / 2} K ^ 3 + \phi ^ {1 / 2} M ^ {3 / 2} K ^ 3  \|\bSigma^{-1}\|_2$ and $\check C_{\kappa, \phi, M, K, \bSigma ^ {-1}}'' = C_{\kappa, \phi, M, K, \bSigma ^ {-1}}' + C_{\kappa, \phi, M, K, \bSigma ^ {-1}}'' + C_{\kappa, \phi, M, K, \bSigma ^ {-1}}'''$. Consider the failure probability of $\cA$, we derive the desired result.	
\end{proof}

\subsection{Proof of Theorem \ref{thm:glm_reboot_mse}}

\begin{proof}
For simplicity, we use $\widetilde \bDelta$ to denote $\widehat \bbeta ^ \rb - \bbeta ^ *$. Construct an intermediate estimator $\widetilde \bbeta_\eta$ between $\widetilde \bbeta$ and $\bbeta^ *$:
\begin{equation}
	\widehat \bbeta ^ \rb _ \eta = \bbeta^ * + \eta (\widehat \bbeta ^ \rb - \bbeta^ *)
\end{equation}
where $\eta = 1$ if $\|\widehat \bbeta ^ \rb - \bbeta ^ *\|_2 \leq 1$ and $\eta = 1 /\|\widehat \bbeta ^ \rb - \bbeta ^ *\|_2$ if $\|\widehat \bbeta ^ \rb - \bbeta ^ *\|_2 > 1$. Let $\widetilde \bDelta _ \eta := \widehat \bbeta ^ \rb_\eta - \bbeta ^ *$. By \eqref{eq:reboot_glr_lrsc} in Proposition \ref{cor:_reboot_glr_lrsc}, we have \[
\begin{aligned}
	\kappa \|\widetilde \bDelta_\eta\|_2 ^ 2 
	&\leq \widetilde \ell (\bbeta ^ * + \widetilde \bDelta_\eta) - \widetilde \ell (\bbeta ^ *) - \nabla \widetilde \ell (\bbeta ^ *) ^ \top \widetilde \bDelta_\eta 
	\leq - \nabla \widetilde\ell (\bbeta ^ *) ^ \top \widetilde \bDelta_\eta 
	\leq \|\nabla \widetilde \ell (\bbeta ^ *)\|_2 \|\widetilde \bDelta_\eta\|_2.
\end{aligned}
\]		
which implies that
\begin{equation}
\label{eq:proof1_delta1}
	\|\widetilde \bDelta _ \eta\| _ 2 
	\leq \kappa ^ {-1} \|\nabla \widetilde \ell(\bbeta ^ *)\|_2 
	\leq \kappa ^ {-1} \bigg\{\check C_{\kappa, \phi, M, K, \bSigma ^ {-1}}' \bigg(\frac{p \vee \log n}{mn}\bigg) ^ {1 / 2}
		+ \check C_{\kappa, \phi, M, K, \bSigma ^ {-1}}'' \bigg(\frac{p \vee \log n}{n}\bigg) ^ 2\bigg\},
\end{equation}
where $\check C_{\kappa, \phi, M, K, \bSigma ^ {-1}}'$ and $\check C_{\kappa, \phi, M, K, \bSigma ^ {-1}}''$ is defined in Theorem \ref{thm:glm_reboot_gradient}. Then with probability at least $1 - (12 m + 14) n ^ {-4}$, we have $\|\widetilde \bDelta _ \eta\|_2 < 1$, which further implies that $ \widetilde \bDelta = \widetilde \bDelta _ \eta$ according to the construction of $\widetilde \bDelta _ \eta$. The conclusion thus follows.
\end{proof}

\subsection{Proof of Corollary \ref{thm:misspecified}}
Applying the fourth-order Taylor expansion of $\nabla \ell (\widehat \bbeta ^ {(k)}; \widetilde \bz ^ {(k)})$ at $\bbeta ^ *$ yields that 
\[ 
\begin{aligned}
	\nabla \ell (\widehat \bbeta ^ {(k)}; \widetilde \bz ^ {(k)})
	&= \nabla \ell (\bbeta ^ *; \widetilde \bz ^ {(k)}) 
	+ \nabla ^ 2 \ell (\bbeta ^ *; \widetilde \bz ^ {(k)}) \bDelta ^ {(k)}
	+ \nabla ^ 3 \ell (\bbeta ^ *; \widetilde \bz ^ {(k)}) (\bDelta ^ {(k)} \otimes \bDelta ^ {(k)}), \\
	&\quad+ \bigg\{\int_0 ^ 1 \nabla ^ 4 \ell (\bbeta ^ * + v (\widehat \bbeta ^ {(k)} - \bbeta ^ *), \widetilde \bz ^ {(k)}) dv\bigg\} (\bDelta ^ {(k)} \otimes \bDelta ^ {(k)} \otimes \bDelta ^ {(k)}). 
\end{aligned}
\]
By taking the conditional expectation given $\widehat \bbeta ^ {(k)}$, we have
\begin{equation} 
\label{eq:proof2_subtit1}
	\begin{aligned}
		0 &= \EE \big\{\nabla \ell (\widehat \bbeta ^ {(k)} ;\widetilde \bz ^ {(k)}) \big|\, \widehat \bbeta ^ {(k)}\big\}
		  + \widetilde \bSigma \bDelta ^ {(k)}
		  + \widetilde \bTheta (\bDelta ^ {(k)} \otimes \bDelta ^ {(k)})+ \widetilde \bR_4 ^ {(k)} (\bDelta ^ {(k)} \otimes \bDelta ^ {(k)} \otimes \bDelta ^ {(k)}),
	\end{aligned}
\end{equation}
where $\widetilde \bR_4 ^ {(k)} = \EE \big\{\int_0 ^ 1 \nabla ^ 4 \ell (\bbeta ^ * + v (\widehat \bbeta ^ {(k)} - \bbeta ^ *), \widetilde \bz ^ {(k)}) dv \big|\, \widehat \bbeta ^ {(k)}\big\}$. Similarly, by the fourth-order Taylor expansion and the fact that $\nabla \ell ^ {(k)} (\widehat \bbeta ^ {(k)}) = 0$ , we have
\[
\begin{aligned}
	\nabla \ell ^ {(k)} (\widehat \bbeta ^ {(k)})
	&= \nabla \ell ^ {(k)} (\bbeta ^ *) + \bSigma \bDelta ^ {(k)} + \bTheta (\bDelta ^ {(k)} \otimes \bDelta ^ {(k)}) + (\nabla ^ 2 \ell ^ {(k)} (\bbeta ^ *) - \bSigma) \bDelta ^ {(k)} \\
	&\quad + (\nabla ^ 3 \ell^{(k)} (\bbeta ^ *) - \bTheta) (\bDelta ^ {(k)} \otimes \bDelta ^ {(k)})
  		+ \bR_4 ^ {(k)} (\bDelta ^ {(k)} \otimes \bDelta ^ {(k)} \otimes \bDelta ^ {(k)})
  	= 0, 
\end{aligned}
\]
where $\bR_4 ^ {(k)} := \int_0 ^ 1 \nabla ^ 4 \ell ^ {(k)} \bigl(\bbeta ^ * + v (\widehat \bbeta ^ {(k)} - \bbeta ^ *)\bigr) dv$. This can be rearranged as 
\beq
    \label{eq:substi_3}
    \begin{aligned}
        \bSigma \bDelta ^ {(k)} + \bTheta (\bDelta ^ {(k)} \otimes \bDelta ^ {(k)}) 
        &= - \nabla \ell ^ {(k)} (\bbeta ^ *) - (\nabla ^ 2 \ell ^ {(k)} (\bbeta ^ *) - \bSigma) \bDelta ^ {(k)} - (\nabla ^ 3 \ell ^ {(k)} (\bbeta ^ *) - \bTheta)  (\bDelta ^ {(k)} \otimes \bDelta ^ {(k)}) \\
        &\quad - \bR_4 ^ {(k)} (\bDelta ^ {(k)} \otimes \bDelta ^ {(k)} \otimes \bDelta ^ {(k)}).
    \end{aligned}
\eeq
Subtract \eqref{eq:substi_3} from \eqref{eq:proof2_subtit1}, we have
\[
	\begin{aligned}
		0
		&= \EE \big\{\nabla \ell (\widehat \bbeta ^ {(k)} ;\widetilde \bz ^ {(k)}) \big|\, \widehat \bbeta ^ {(k)}\big\} 
		- \nabla \ell ^ {(k)} (\bbeta ^ *)  
		- (\nabla ^ 2 \ell ^ {(k)} (\bbeta ^ *) - \bSigma) \bDelta ^ {(k)}
		- (\nabla ^ 3 \ell ^ {(k)} (\bbeta ^ *) - \bTheta)  (\bDelta ^ {(k)} \otimes \bDelta ^ {(k)}) \\
		&\quad
		- \bR_4 ^ {(k)} (\bDelta ^ {(k)} \otimes \bDelta ^ {(k)} \otimes \bDelta ^ {(k)})
		+ \widetilde \bR_4 ^ {(k)} (\bDelta ^ {(k)} \otimes \bDelta ^ {(k)} \otimes \bDelta ^ {(k)})
		+ (\widetilde \bSigma - \bSigma) \bDelta ^ {(k)}
		+ (\widetilde \bTheta - \bTheta)  (\bDelta ^ {(k)} \otimes \bDelta ^ {(k)}).
	\end{aligned}
\]
Note that $\nabla \widetilde \ell(\bbeta ^ *) = \frac{1}{m} \sum_{k=1} ^ m \EE \big\{ \nabla \ell (\bbeta ^ *; \widetilde \bz ^ {(k)})|\, \widehat \bbeta ^ {(k)}\big\}$. Then we have the following decomposition: 
\[
\begin{aligned}
	\nabla \widetilde \ell(\bbeta ^ *) 
	= &\underbrace{\nabla \ell (\bbeta ^ *)}_{T_1}
	+ \underbrace{\frac{1}{m} \sum_{k=1} ^ m (\nabla ^ 2 \ell ^ {(k)} (\bbeta ^ *) - \bSigma) \bDelta ^ {(k)}}_{T_2} 
	+ \underbrace{\frac{1}{m} \sum_{k=1} ^ m (\nabla ^ 3 \ell ^ {(k)} (\bbeta ^ *) - \bTheta) (\bDelta ^ {(k)} \otimes \bDelta ^ {(k)})}_{T_3} \\ 
	&+ \underbrace{\frac{1}{m} \sum_{k=1} ^ m \bR_4 ^ {(k)} (\bDelta ^ {(k)} \otimes \bDelta ^ {(k)} \otimes \bDelta ^ {(k)})}_{T_4} 
	- \underbrace{\frac{1}{m} \sum_{k=1} ^ m \widetilde \bR_4 ^ {(k)} (\bDelta ^ {(k)} \otimes \bDelta ^ {(k)} \otimes \bDelta ^ {(k)})}_{T_5} \\
	&+ \underbrace{\frac{1}{m} \sum_{k=1} ^ m(\widetilde \bSigma - \bSigma) \bDelta ^ {(k)}}_{T_6}
	+ \underbrace{\frac{1}{m} \sum_{k=1} ^ m(\widetilde \bTheta - \bTheta) (\bDelta ^ {(k)} \otimes \bDelta ^ {(k)})}_{T_7}.
\end{aligned}	
\]
Following the proof in Theorem \ref{thm:glm_reboot_gradient}, we introduce the following events:
\[
\begin{aligned}
	\cE ^ {(k)} := 
		&\big\{\|\nabla \ell ^ {(k)} (\bbeta ^ *)\|_2
			\leq C_1 (\phi M) ^ {1 / 2} K \Upsilon_1\big\}\cap \\
		&\big\{\|\nabla ^ 2 \ell ^ {(k)} (\bbeta ^ *) - \bSigma\|_2 
			\leq C_2 M K ^ 2 \Upsilon_1\big\}\cap \\
		&\big\{\|\nabla ^ 3 \ell ^ {(k)} (\bbeta ^ *) - \bTheta\|_2
			\leq C_3 M K ^ 3 \Upsilon_1\big\} \cap \\
		&\big\{\|\bR_4 ^ {(k)}\|_2 \leq C_4 M K ^ 4 \big\}, \\
	\cF^{(k)} := 
		&\big\{\ltwonorm{\bDelta ^ {(k)}} \leq C_5 \kappa ^ {-1} (\phi M) ^ {1 / 2} K \Upsilon_1\big\}
\end{aligned}
\]
where $\Upsilon_1 := \{(p \vee \log n) / n\} ^ {1 / 2}$ and $C_1, C_2, \ldots, C_5$ are constants. Define the intersection of all the above events by $\cA$. Then we work on the bounds for $\{T_i\}_{i=1} ^ 7$ conditional on event $\cA$. The statistical error bound for $T_1, \ldots, T_5$ is the same as in Theorem \ref{thm:glm_reboot_gradient}. Therefore, we focus on $T_6$ and $T_7$.
\paragraph*{Bound for $T_6$}
Let $\widetilde \bx = \bw + \bdelta$ with $\bw \sim \cN(\bmu, \bS)$ and $\bdelta \sim \cN(\widetilde \bmu - \bmu, \widetilde \bS -\bS)$, where $\bw$ and $\bdelta$ are independent. By the definition of $\widetilde \bSigma$ and $\bSigma$, we have
\[
\begin{aligned}
	\widetilde \bSigma - \bSigma
	&=\EE\{b''(\widetilde \bx ^ \top \bbeta ^ *)\widetilde \bx \widetilde \bx ^ \top\} 
	- \EE\{b''(\bx ^ \top \bbeta ^ *)\bx\bx ^ \top\} 
	\!=\! \EE\{b''(\bw ^ \top \bbeta ^ * \!+\! \bdelta ^ \top \bbeta ^ *)(\bw \!+\! \bdelta) (\bw \!+\! \bdelta) ^ \top\} - \EE\{b''(\bx ^ \top \bbeta ^ *)\bx\bx ^ \top\} \\
	&= \EE\{b''(\bw ^ \top \bbeta ^ * + \bdelta ^ \top \bbeta ^ *)\bw \bw ^ \top\}  
		+ 2 \EE\{b''(\bw ^ \top \bbeta ^ * + \bdelta ^ \top \bbeta ^ *)\bw \bdelta ^ \top\} 
		+ \EE\{b''(\bw ^ \top \bbeta ^ * + \bdelta ^ \top \bbeta ^ *)\bdelta \bdelta ^ \top\} \\
		&\quad - \EE\{b''(\bx ^ \top \bbeta ^ *)\bx \bx ^ \top\} \\
	&= \EE\{b''(\bw ^ \top \bbeta ^ *)\bw \bw ^ \top\} 
	   + \EE\{b'''(\bw ^ \top \bbeta ^ * + t\bdelta ^ \top \bbeta ^ *)(\bdelta ^ \top \bbeta ^ *)\bw \bw ^ \top\}
	   + 2 \EE\{b''(\bw ^ \top \bbeta ^ * + \bdelta ^ \top \bbeta ^ *)\bw \bdelta ^ \top\} \\
	   &\quad + \EE\{b''(\bw ^ \top \bbeta ^ * + \bdelta ^ \top \bbeta ^ *)\bdelta \bdelta ^ \top\} - \EE\{b''(\bx ^ \top \bbeta ^ *)\bx \bx ^ \top\},
\end{aligned}
\]
where $t \in [0, 1]$ and we apply Taylor's expansion in the last equation. Since $\EE\{b''(\bw ^ \top \bbeta ^ *)\bw \bw ^ \top\} = \EE\{b''(\bx ^ \top \bbeta ^ *) \bx \bx ^ \top\}$, it follows that
\[
	\widetilde \bSigma - \bSigma
	= \EE\{b'''(\bw ^ \top \bbeta ^ * + t\bdelta ^ \top \bbeta ^ *)(\bdelta ^ \top \bbeta ^ *)\bw \bw ^ \top\}
	+ 2 \EE\{b''(\bw ^ \top \bbeta ^ * + \bdelta ^ \top \bbeta ^ *)\bw \bdelta ^ \top\} 
	+ \EE\{b''(\bw ^ \top \bbeta ^ * + \bdelta ^ \top \bbeta ^ *)\bdelta \bdelta ^ \top\}.
\]
By the triangle inequality and the Cauchy-Schwarz inequality, we have
\begin{equation}
\begin{aligned}
\label{eq:mis_Sigma}
	\|\widetilde \bSigma - \bSigma\|_2
	&= \sup_{\bu \in \cS ^ {p-1}} |\bu ^ \top (\widetilde \bSigma - \bSigma)\bu| \\
	&\leq \sup_{\bu \in \cS ^ {p-1}} \big|\EE\{b'''(\bw ^ \top \bbeta ^ * + t\bdelta ^ \top \bbeta ^ *)(\bdelta ^ \top \bbeta ^ *) (\bw ^ \top \bu) ^ 2\}\big| 
		+ 2 \sup_{\bu \in \cS ^ {p-1}} \big|\EE\{b''(\bw ^ \top \bbeta ^ * + \bdelta ^ \top \bbeta ^ *) (\bw ^ \top \bu) (\bdelta ^ \top \bu)\}\big| \\
		&\quad+ \sup_{\bu \in \cS ^ {p-1}} \big|\EE\{b''(\bw ^ \top \bbeta ^ * + \bdelta ^ \top \bbeta ^ *) (\bdelta ^ \top \bu ) ^ 2\}\big| \\
	&\leq \sup_{\bu \in \cS ^ {p-1}} \big[\EE\{b'''(\bw ^ \top \bbeta ^ * + t\bdelta ^ \top \bbeta ^ *)(\bdelta ^ \top \bbeta ^ *)\} ^ 2 \EE \{(\bw ^ \top \bu) ^ 4\}\big] ^ {1/2} \\
		&\quad+ 2 \sup_{\bu \in \cS ^ {p-1}} \big[\EE\{b''(\bw ^ \top \bbeta ^ * + \bdelta ^ \top \bbeta ^ *) (\bw ^ \top \bu)\}^2 \EE \{(\bdelta ^ \top \bu)^2\}\big] ^ {1/2} 
		+ \sup_{\bu \in \cS ^ {p-1}} M \EE\{ (\bdelta ^ \top \bu ) ^ 2\} \\
	&\leq M K ^ 2 \sup_{\bu \in \cS ^ {p-1}} \big[\EE\{(\bdelta ^ \top \bbeta ^ *)^2\} \big] ^ {1/2} 
		+ 2 M K \sup_{\bu \in \cS ^ {p-1}} \big[\EE \{(\bdelta ^ \top \bu)^2\}\big] ^ {1/2} 
		+ M \sup_{\bu \in \cS ^ {p-1}} \EE\{(\bdelta ^ \top \bu ) ^ 2\}.
\end{aligned}
\end{equation}
For any $\bu \in \cS ^ {p-1}$, $\bdelta ^ \top \bu \sim \cN((\widetilde \bmu - \bmu) ^ \top \bu, \bu ^ \top (\widetilde \bS -\bS) \bu)$. By the property of normal distribution,
\[
\begin{aligned}
	\sup_{\bu \in \cS ^ {p-1}} &\EE\{(\bdelta ^ \top \bu ) ^ 2\}
	 = \sup_{\bu \in \cS ^ {p-1}} \big[\bu ^ \top (\widetilde \bS -\bS) \bu 
	 + \{(\widetilde \bmu - \bmu) ^ \top \bu\} ^ 2\big]
	 \leq \|\widetilde \bS -\bS\|_2 + \|\widetilde \bmu - \bmu\|_2 ^ 2.
\end{aligned}
\]
Similarly, $\EE\{(\bdelta ^ \top \bbeta ^ *) ^ 2\} \leq \|\bbeta ^ *\|_2 ^ 2 \big(\|\widetilde \bS -\bS\|_2 + \|\widetilde \bmu - \bmu\|_2 ^ 2 \big)$. Combining with the bound \eqref{eq:mis_Sigma}, we have
\[
	\|\widetilde \bSigma - \bSigma\|_2 
	\leq M K ^ 2 \|\bbeta ^ *\|_2 (\|\widetilde \bS -\bS\|_2 ^ {1/2} + \|\widetilde \bmu - \bmu\|_2).
\]
By Lemma \ref{lem:avergae_est}, we have with probability at least $1 - 2 n ^ {-4} - 4 m n ^ {-4}$ that
\[
	\|T_6\|_2 
	\leq \|\widetilde \bSigma - \bSigma\|_2 \bigg\|\frac{1}{m} \sum_{k=1} ^ m  \bDelta ^ {(k)}\bigg\|_2 
	\lesssim C_{\kappa, \phi, M, K, \bbeta ^ *, \bSigma ^ {-1}} (\|\widetilde \bS -\bS\|_2 ^ {1/2} + \|\widetilde \bmu - \bmu\|_2) \bigg(\frac{p \vee \log n}{n}\bigg),
\]
where $C_{\kappa, \phi, M, K, \bbeta ^ *, \bSigma ^ {-1}} := (\kappa ^ {-2} \phi M ^ 3 K ^ 7 + \kappa ^ {-1} \phi ^ {1 / 2} M ^ {5 / 2} K ^ 5) \|\bSigma ^ {-1}\|_2 \|\bbeta ^ *\|_2$.

\paragraph*{Bound for $T_7$}
By the definition of $\widetilde \bTheta$ and $\bTheta$, we have
\[
\begin{aligned}
	&\|\widetilde \bTheta - \bTheta\|_2
	= \sup_{\bu \in \cS ^ {p-1}} |\EE b'''(\widetilde \bx ^ \top \bbeta ^ *) (\widetilde \bx ^ \top \bu) ^ 3 - \EE b'''(\bx ^ \top \bbeta ^ *) (\bx ^ \top \bu) ^ 3| \\
	&= \sup_{\bu \in \cS ^ {p-1}} |\EE b'''(\bw ^ \top \bbeta ^ * + \bdelta ^ \top \bbeta ^ *) (\bw ^ \top \bu + \bdelta ^ \top \bu) ^ 3 - \EE b'''(\bx ^ \top \bbeta ^ *) (\bx ^ \top \bu) ^ 3| \\
	&= \sup_{\bu \in \cS ^ {p-1}} |\EE b'''(\bw ^ \top \bbeta ^ *)(\bw ^ \top \bu) ^ 3
	+ \EE b''''(\bw ^ \top \bbeta ^ * + t \bdelta ^ \top \bbeta ^ *)(\bw ^ \top \bu) ^ 3 (\bdelta ^ \top \bbeta ^ *)
	+ 3 \EE b'''(\bw ^ \top \bbeta ^ * + \bdelta ^ \top \bbeta ^ *) (\bw ^ \top \bu) (\bdelta ^ \top \bu) ^ 2 \\
	&\quad+ 3 \EE b'''(\bw ^ \top \bbeta ^ * + \bdelta ^ \top \bbeta ^ *)(\bw ^ \top \bu) ^ 2 (\bdelta ^ \top \bu) 
	+ \EE b'''(\bw ^ \top \bbeta ^ * + \bdelta ^ \top \bbeta ^ *) (\bdelta ^ \top \bu) ^ 3\} 
	- \EE b'''(\bx ^ \top \bbeta ^ *) (\bx ^ \top \bu\}|,
\end{aligned}
\]
where $t \in [0, 1]$ and we apply Taylor's expansion in the last equation. Similar to the bound \eqref{eq:mis_Sigma}, we have
\[
\begin{aligned}
	\|\widetilde \bTheta - \bTheta\|_2
	\lesssim M K ^ 3 \|\bbeta ^ *\|_2 (\|\widetilde \bS -\bS\|_2 ^ {1/2} + \|\widetilde \bmu - \bmu\|_2).
\end{aligned}
\]
Therefore,
\[
\begin{aligned}
	\|T_7\|_2 
	&\leq \|\widetilde \bTheta - \bTheta\|_2 \bigg\|\frac{1}{m} \sum_{k=1} ^ m (\bDelta ^ {(k)} \otimes \bDelta ^ {(k)})\bigg\|_2 
	\leq \frac{1}{m} \sum_{k=1} ^ m \|\widetilde \bTheta - \bTheta\|_2\|\bDelta ^ {(k)}\|_2 ^ 2 \\
	&\lesssim \kappa ^ {-2} \phi M ^ 2 K ^ 5 \|\bbeta ^ *\|_2 (\|\widetilde \bS -\bS\|_2 ^ {1/2} + \|\widetilde \bmu - \bmu\|_2) \bigg(\frac{p \vee \log n}{n}\bigg).
\end{aligned}
\]

Combining the bounds for $\{T_i\}_{i=1} ^ 7$ conditional on event $\cA$, we have with probability at least $1 - 16 n ^ {-4} - 4 m n ^ {-4}$ that 
\begin{equation}
\begin{aligned}
	\|\nabla \widetilde \ell(\bbeta ^ *)\|_2
	\lesssim &\check C_{\kappa, \phi, M, K, \bSigma ^ {-1}}' \bigg(\frac{p \vee 4 \log n}{mn}\bigg) ^ {1 / 2} 
	+ \check C_{\kappa, \phi, M, K, \bSigma ^ {-1}}'' \bigg(\frac{p \vee 4 \log n}{n}\bigg) ^ 2 \\
	&+ C_{\kappa, \phi, M, K, \bbeta ^ *, \bSigma ^ {-1}} (\|\widetilde \bS -\bS\|_2 ^ {1/2} + \|\widetilde \bmu - \bmu\|_2) \bigg(\frac{p \vee \log n}{n}\bigg), 
\end{aligned}
\end{equation}
where $\check C_{\kappa, \phi, M, K, \bSigma ^ {-1}}'$ and $\check C_{\kappa, \phi, M, K, \bSigma ^ {-1}}''$ is defined in Theorem \ref{thm:glm_reboot_gradient}. Consider the failure probability of $\cA$, we derive the desired result.

\subsection{Proof of Lemma \ref{prop:pr1}}

\begin{proof}
The proof follows that of Lemma 5 in \cite{ma2018implicit}. 
For notational simplicity, we omit ``${(k)}$'' in the superscript in the following proof. Let 
\[
	\bY := \frac{1}{n} \sum_{i=1}^{n} y_i \bx_i \bx_i^\top 
	    = \frac{1}{n} \sum_{i=1}^{n} (\bx_i ^ \top \bbeta ^ *) ^ 2 \bx_i \bx_i ^ \top + \varepsilon_i \bx_i \bx_i ^ \top.
\]
By Lemmas \ref{lem:pr_center_hess} and \ref{lem:pr_center_veps}, we have with probability at least $1 - 18 n ^ {-2}$ that
\[
	\|\bY - \EE \bY\|_2 \leq r \|\bbeta\| _ 2 ^ 2,
\]
provided that $n \geq C_r p ^ 2$ for some positive constant $C_r$ depending on $r$. Since $\EE \bY = \|\bbeta^ *\|_2^2 \bI_p + 2\bbeta ^ *\bbeta ^ {*\top}$, it follows that $\lambda_1(\EE \bY ) = 3\|\bbeta ^ *\|_2^2$, $\lambda_2(\EE \bY ) = \|\bbeta ^ *\|_2^2$, and the leading eigenvector of $\EE \bY $, denoted by $\bv_1(\EE \bY )$, is $\bbeta ^ * / \|\bbeta^ *\|_2$. On one hand, by the Davis-Kahan theorem (Theorem 1 in \cite{yu2015useful}), we obtain that
\[
	\|\bv_1(\bY) - \bv_1(\EE \bY )\|_2 
	\leq 2 \sqrt{2} \frac{\|\bY - \EE \bY \|_2}{\lambda_1(\EE \bY ) - \lambda_2(\EE \bY )}	
	\le \sqrt{2} r.
\]
On the other hand, by Weyl's inequality, we have
\[
\begin{aligned}
	\big|\sqrt{\lambda_1(\bY)/3} - \|\bbeta ^ *\|_2\big|
	&= \bigg|\frac{\lambda_1(\bY)/3 - \|\bbeta ^ *\|_2 ^ 2}{\sqrt{\lambda_1(\bY)/3} + \|\bbeta ^ *\|_2}\bigg| 
	\leq \frac{|\lambda_1(\bY) - 3\|\bbeta ^ *\|_2 ^ 2|}{3\|\bbeta ^ *\|_2}  
	\leq \frac{\|\bY - \EE \bY \|_2}{3\|\bbeta ^ *\|_2} 
		\le \frac{r}{3} \|\bbeta ^ *\|_2 .
\end{aligned}
\]
Therefore, we deduce that if $n \geq C_r p ^ 2$, with probability at least $1 - 18 n ^ {-2}$ that
\[
\begin{aligned}
	\|\widehat \bbeta_{\rm{init}} - \bbeta ^ *\|_2 
	&= \big\|\sqrt{\lambda_1(\bY)/3} \bv_1(\bY) - \bbeta ^ *\big\|_2 \\
	&= \big\|\sqrt{\lambda_1(\bY)/3} \bv_1(\bY) - \|\bbeta ^ *\|_2 \bv_1(\bY) + \|\bbeta ^ *\|_2 \bv_1(\bY) - \bbeta ^ *\big\|_2 \\
	&\leq \big|\sqrt{\lambda_1(\bY)/3} - \|\bbeta ^ *\|_2\big| 
	+ \|\bbeta ^ *\|_2 \|\bv_1(\bY) - \bv_1(\EE \bY )\|_2
		\le 2 r \|\bbeta ^ *\|_2.
\end{aligned}
\]
This leads to the final conclusion.
\end{proof}

\subsection{Proof of Proposition \ref{prop:pr2}}
\begin{proof}
For simplicity, we omit ``${(k)}$'' in the superscript in the following proof. Define $\bSigma := \EE \{\nabla ^ 2  \ell(\bbeta ^ *)\} = 2\|\bbeta ^ * \|_2 ^ 2 \bI_p + 4 \bbeta ^ * \bbeta ^ {*\top}$. We further define the following events:
\[
\begin{aligned}
	&\cH_1 := \bigg\{\bigg\|\frac{1}{n} \sum_{i=1}^{n} (\bx_i ^ \top \bbeta ^ *) ^ 2 \bx_i \bx_i ^ \top
	- \|\bbeta ^ *\| _ 2 ^ 2 \bI_p - 2 \bbeta ^ * \bbeta ^ {*\top}\bigg\|_2
	\leq \frac{1}{13} \|\bbeta ^ *\| _ 2 ^ 2\bigg\}, \\
	&\cH_2 := \bigg\{\frac{1}{n} \|\bX\|_{2 \rightarrow 4} ^ 4 \leq \frac{10}{3}\bigg\} 
		\cap \bigg\{\|\widehat \bbeta _ {\rm{init}} - \bbeta ^ *\|_2 \le \frac{1}{13} \|\bbeta ^ *\|_2\bigg\}, \\
	&\cH_3 := \bigg\{\bigg\|\frac{1}{n} \sum_{i=1}^{n} \varepsilon_i \bx_i \bx_i ^ \top \bigg\|_2 
	\le \frac{1}{13} \|\bbeta ^ *\| _ 2 ^ 2\bigg\}, 
\end{aligned}
\]
where we choose $n \geq C_1 p ^ 2$ for some sufficient large $C_1$ so that the events hold. Define the intersection of all the above events by $\cH$. By Lemmas \ref{PRlem6}, \ref{lem:pr_center_hess} and \ref{lem:pr_center_veps}, we have $\PP (\cH ^ c) \leq 38 n ^ {-2}$. From now on, we assume $\cH$ holds; we will take into account the failure probability of $\cH$ to that of the final conclusion in the end. Recall the definition of $\delta \ell (\bbeta ^ * + \bDelta; \bbeta ^ *)$ for any vector $\bDelta$ satisfied that $\|\bDelta\| _ 2 < \frac{1}{13} \|\bbeta ^ *\|_2$:
\[
	\delta \ell (\bbeta ^ * + \bDelta; \bbeta ^ *)
		= \ell (\bbeta ^ * + \bDelta) - \ell (\bbeta ^ *) - \nabla \ell (\bbeta ^ *) ^ \top \bDelta 
		= \int_0 ^ 1 \frac{1}{2} \bDelta ^ \top \nabla ^ 2 \ell (\bbeta ^ * + v \bDelta) \bDelta dv
\]
for some $v \in [0, 1]$ that depends on $\bDelta$. 
Then we have
\begin{equation}
\label{eq:pr_decmp1}
	\begin{aligned}
		\int_0 ^ 1 \frac{1}{2} \bDelta ^ \top \nabla ^ 2 &\ell (\bbeta ^ * + v \bDelta) \bDelta dv
		= \int_0 ^ 1 \frac{1}{2n} \sum_{i=1}^{n} \big[3\{\bx_i ^ \top (\bbeta ^ * + v \bDelta)\} ^ 2 - y_i\big] (\bx_i ^ \top \bDelta) ^ 2 dv \\
		&= \int_0 ^ 1  \frac{1}{2n} \sum_{i=1}^{n} \big\{2(\bx_i ^ \top \bbeta ^ *) ^ 2 (\bx_i ^ \top \bDelta) ^ 2
		+ 6v (\bx_i ^ \top \bbeta ^ *)(\bx_i ^ \top\bDelta) ^ 3
		+ 3v ^ 2(\bx_i ^ \top\bDelta) ^ 4 
		- \varepsilon_i (\bx_i ^ \top \bDelta) ^ 2\big\} dv \\
		&\ge \frac{1}{n} \sum_{i=1}^{n} (\bx_i ^ \top \bbeta ^ *) ^ 2 (\bx_i ^ \top \bDelta) ^ 2
		+ \int_0 ^ 1  \frac{3v}{n} \sum_{i=1}^{n} (\bx_i ^ \top \bbeta ^ *)(\bx_i ^ \top\bDelta) ^  3 dv
		- \frac{1}{2n} \sum_{i=1}^{n}\varepsilon_i (\bx_i ^ \top \bDelta) ^ 2.
	\end{aligned}
\end{equation}
By $\cH_1$, we have
\begin{equation}
\label{eq:pr_loc_1}
	\frac{1}{n} \sum_{i=1}^{n} (\bx_i ^ \top \bbeta ^ *) ^ 2 (\bx_i ^ \top \bDelta) ^ 2
	\ge 2 (\bbeta ^ {*\top} \bDelta) ^ 2 + \frac{12}{13} \|\bbeta ^ *\| _ 2 ^ 2\|\bDelta\| _ 2 ^ 2
	\ge \frac{12}{13} \|\bbeta ^ *\| _ 2 ^ 2\|\bDelta\| _ 2 ^ 2.
\end{equation}
By $\cH_2$ and the fact that $\|\bDelta\| _ 2 < \frac{1}{13} \|\bbeta ^ *\|_2$, we have
\[
\begin{aligned}
	\frac{3}{n} \sum_{i=1}^{n} (\bx_i ^ \top \bbeta ^ *) (\bx_i ^ \top\bDelta) ^ 3 
	&\le \frac{3}{n} \bigg\{\sum_{i=1}^{n} (\bx_i ^ \top \bbeta ^ *) ^ 4\bigg\} ^ {1/4} \bigg\{\sum_{i=1}^{n} (\bx_i ^ \top \bDelta) ^ 4\bigg\} ^ {3/4} \\
	&\le  \frac{3}{n} \|\bX\|_{2 \rightarrow 4} ^ 4 \|\bbeta ^ *\| _ 2 \|\bDelta\| _ 2 ^ 3 
	\le 10 \|\bbeta ^ *\| _ 2 \|\bDelta\| _ 2 ^ 3
	\le \frac{10}{13} \|\bbeta ^ *\| _ 2 ^ 2 \|\bDelta\| _ 2 ^ 2.
\end{aligned}
\]
Then we have
\begin{equation}
	\label{eq:pr_loc_2}
	\int_0 ^ 1  \frac{3v}{n} \sum_{i=1}^{n} (\bx_i ^ \top \bbeta ^ *)(\bx_i ^ \top\bDelta) ^  3 dv
	\leq \frac{10}{13} \|\bbeta ^ *\| _ 2 ^ 2 \|\bDelta\| _ 2 ^ 2.
\end{equation}
By $\cH_3$, we have
\begin{equation}
\label{eq:pr_loc_3}
	\bigg|\frac{1}{2n} \sum_{i=1}^{n} \varepsilon_i (\bx_i ^ \top \bDelta) ^ 2\bigg| 
	\le \frac{1}{26} \|\bbeta ^ *\| _ 2 ^ 2 \|\bDelta\| _ 2 ^ 2.
\end{equation}
Combining the bounds \eqref{eq:pr_loc_1}, \eqref{eq:pr_loc_2} and \eqref{eq:pr_loc_3} delivers that
\[
\begin{aligned}
	\int_0 ^ 1 \frac{1}{2} \bDelta ^ \top \nabla ^ 2 \ell (\bbeta ^ * + v \bDelta) \bDelta dv  
	&\ge \frac{12}{13} \|\bbeta ^ *\| _ 2 ^ 2\|\bDelta\| _ 2 ^ 2 - \frac{10}{13} \|\bbeta ^ *\| _ 2 ^ 2\|\bDelta\| _ 2 ^ 2 - \frac{1}{26} \|\bbeta ^ *\| _ 2 ^ 2\|\bDelta\| _ 2 ^ 2 \\
	&\ge \frac{1}{13} \|\bbeta ^ *\| _ 2 ^ 2\|\bDelta\| _ 2 ^ 2.
\end{aligned}
\]
Set $\bDelta = \widehat \bbeta - \bbeta ^ *$. Given that $\ell(\widehat \bbeta) \le \ell(\bbeta ^ *)$, we have 
\[
	\frac{1}{13} \|\bbeta ^ *\| _ 2 ^ 2\|\bDelta\| _ 2 ^ 2 
	\leq \delta \ell (\bbeta ^ * + \bDelta; \bbeta ^ *)
	\leq - \nabla \ell (\bbeta ^ *) ^ \top \bDelta 
	\leq \|\nabla \ell (\bbeta ^ *)\|_2 \|\bDelta\|_2,
\]
which further implies that
\[
	\|\bDelta\| _ 2 \leq 13 \|\bbeta ^ *\| _ 2 ^ {-2} \|\nabla \ell (\bbeta ^ *)\|_2.
\]
Now we derive the rate of $\|\nabla \ell (\bbeta ^ *)\|_2$. By Lemma \ref{lem:pr_center_veps}, we have with probability at least $1 - 4e ^ {-\xi}$ that
\begin{equation}
	\|\nabla \ell(\bbeta ^ *)\|_2 
	= \bigg\|\frac{1}{n} \sum_{i=1}^{n} \varepsilon_i (\bx_i ^ \top \bbeta ^ *) \bx_i\bigg\|_2 
	\lesssim \|\bbeta ^ *\|_2 \bigg(\frac{p \vee \xi}{n}\bigg) ^ {1/2}.
\end{equation}
Finally, considering the failure probability of $\cH$, we can obtain that with probability at least $1 - 38 n ^ {-2} - 4e ^ {-\xi}$ that
\[
	\|\widehat \bbeta - \bbeta ^ *\| _ 2 \lesssim  \|\bbeta ^ *\|_2  ^ {-1}\bigg(\frac{p \vee \xi}{n}\bigg) ^ {1/2}.
\]
In the following, we derive the moment bound $\EE \|\widehat \bbeta - \bbeta ^ *\|_2 ^ 2$. 
\[
	\begin{aligned}
		\EE \|\widehat \bbeta - \bbeta ^ *\|_2 ^ 2 
		&= \EE (\mathbbm{1}_{\cH}\|\widehat \bbeta - \bbeta ^ *\|_2 ^ 2) + \EE (\mathbbm{1}_{\cH ^ c}\|\widehat \bbeta - \bbeta ^ *\|_2 ^ 2) \\
		&\lesssim \EE \big\{\|\bbeta ^ *\|_2 ^ {-2} \|\nabla \ell (\bbeta ^ *)\|_2\big\} ^ 2 + \|\bbeta ^ *\|_2 ^ 2 \PP(\cH ^ c) 
		\lesssim \|\bbeta ^ *\| _ 2 ^ {-2} \frac{p}{n}.
	\end{aligned}
\]

\end{proof}

\subsection{Proof of Theorem \ref{thm:reboot_phase_retrieval}}
Recall that $\EE \big(\cdot \big|\, \widehat \bbeta ^ {(k)}\big)$ is the conditional expectation given $\widehat \bbeta ^ {(k)}$. By the definition, we have
\[
\begin{aligned}
	\nabla ^ 2 \widetilde \ell (\bbeta)
	&= \frac{1}{m} \sum_{k = 1} ^ m \EE \Big\{ \nabla ^ 2 \ell \big(\bbeta \, ;\,  \widetilde \bz ^ {(k)}\big) \big|\, \widehat \bbeta ^ {(k)}\Big\} 
	= \frac{1}{m} \sum_{k = 1} ^ m \EE \Big\{ 3 \big(\widetilde \bx_i ^ {(k)\top} \bbeta\big) ^ 2 \widetilde \bx_i ^ {(k)} \widetilde \bx_i ^ {(k)\top} - \big(\widetilde \bx_i ^ {(k)\top} \widehat \bbeta ^ {(k)} \big) \widetilde \bx_i ^ {(k)} \widetilde \bx_i ^ {(k)\top} \big|\, \widehat \bbeta ^ {(k)}\Big\} \\
	&= 3\|\bbeta\|_2 ^ 2 \bI_p + 6 \bbeta \bbeta ^ \top 
	- \|\bbeta ^ {(k)}\|_2 ^ 2 \bI_p - 2\bbeta ^ {(k)} \bbeta ^ {(k) \top}.
\end{aligned}
\]
By the definition of $\delta \widetilde \ell (\bbeta ^ * + \bDelta; \bbeta ^ *)$ for any vector $\bDelta$ satisfied that $\|\bDelta\| _ 2 < \frac{1}{13} \|\bbeta ^ *\|_2$, we have
\[
	\delta \widetilde \ell (\bbeta ^ * + \bDelta; \bbeta ^ *)
		= \widetilde \ell (\bbeta ^ * + \bDelta) - \widetilde \ell (\bbeta ^ *) - \nabla \widetilde \ell (\bbeta ^ *) ^ \top \bDelta 
		= \frac{1}{2} \int_0 ^ 1 \bDelta ^ \top \nabla ^ 2 \widetilde \ell (\bbeta ^ * + v \bDelta) \bDelta dv.
\]
Note that
\[
\begin{aligned}
	\int_0 ^ 1 \bDelta ^ \top &\nabla ^ 2 \widetilde \ell (\bbeta ^ * + v \bDelta) \bDelta dv \\
	&= \int_0 ^ 1 \Big\{3\|\bbeta ^ * + v \bDelta\|_2 ^ 2 \|\bDelta\|_2 ^ 2 + 6 (\bbeta ^ {*\top} \bDelta + v \|\bDelta\|_2 ^ 2) ^ 2
	- \|\bbeta ^ *\|_2 ^ 2 \|\bDelta\|_2 ^ 2 - 2 (\bbeta ^ {*\top} \bDelta) ^ 2 \Big\} dv \\
	&\ge \int_0 ^ 1 \Big\{3\|\bbeta ^ *\|_2 ^ 2 \|\bDelta\|_2 ^ 2 + 6 v (\bbeta ^ {*\top} \bDelta) \|\bDelta\|_2 ^ 2 + 12 v (\bbeta ^ {*\top} \bDelta) \|\bDelta\|_2 ^ 2 \Big\} dv.
\end{aligned}
\]
Since $\bbeta ^ {*\top} \bDelta \leq \|\bbeta ^ *\|_2 \|\bDelta\|_2 \leq \frac{1}{13} \|\bbeta ^ *\|_2 ^ 2$, it follows that
\[
\begin{aligned}
	\int_0 ^ 1 \bDelta ^ \top &\nabla ^ 2 \widetilde \ell (\bbeta ^ * + v \bDelta) \bDelta dv 
	\ge 3\|\bbeta ^ *\|_2 ^ 2 \|\bDelta\|_2 ^ 2 -  \frac{18}{13} \|\bbeta ^ *\|_2 ^ 2 \|\bDelta\|_2 ^ 2 
	\ge \|\bbeta ^ *\|_2 ^ 2 \|\bDelta\|_2 ^ 2.
\end{aligned}
\]
Set $\bDelta = \widehat \bbeta ^ \rb - \bbeta ^ *$. Given that $\widetilde \ell(\widehat \bbeta ^ \rb) \le \widetilde \ell(\bbeta ^ *)$, we have
\[
	\frac{1}{2} \|\bbeta ^ *\|_2 ^ 2\|\bDelta\| _ 2 ^ 2 
	\leq \delta \widetilde \ell (\bbeta ^ * + \bDelta; \bbeta ^ *)
	\leq - \nabla \widetilde \ell (\bbeta ^ *) ^ \top \bDelta 
	\leq \|\nabla \widetilde \ell (\bbeta ^ *)\|_2 \| \bDelta\|_2,
\]
which further implies that
\begin{equation}
\label{eq:pr_grad}
	\|\bDelta\|_2 \leq 2 \|\bbeta ^ *\|_2 ^ {-2} \|\nabla \widetilde \ell (\bbeta ^ *)\|_2 .
\end{equation}
Now we analyze the rate of $\|\nabla \widetilde \ell(\bbeta ^ *)\|_2$. For convenience, we write $\widehat \bbeta ^ {(k)} - \bbeta ^ *$ as $\bDelta ^ {(k)}$ for $k \in [m]$. Applying the third-order Taylor expansion of $\nabla \widetilde\ell ^ {(k)}(\bbeta)$ at $\bbeta ^ *$ yields that 
\begin{equation} 
\label{eq:pr_subtit1}
	\begin{aligned}
		\nabla \widetilde \ell ^ {(k)} (\widehat \bbeta ^ {(k)}) 
		&= \nabla \widetilde \ell ^ {(k)} (\bbeta ^ *) + \bSigma \bDelta ^ {(k)}  
  		+ \widetilde \bR_3 ^ {(k)} (\bDelta ^ {(k)} \otimes \bDelta ^ {(k)}) = 0,
	\end{aligned}
\end{equation}
where $\widetilde \bR_3 ^ {(k)} = \EE \big\{\int_0 ^ 1 \nabla ^ 3 \widetilde \ell ^ {(k)} \big(\bbeta ^ * + v(\widehat \bbeta ^{(k)} - \bbeta ^ *)\big) dv \big|\, \widehat \bbeta ^{(k)}\big\}$. Similarly, by the third-order Taylor expansion and the fact that $\nabla \ell ^ {(k)} (\widehat \bbeta ^ {(k)}) = 0$ , we have
\[
		\nabla \ell ^ {(k)} (\widehat \bbeta ^ {(k)})
		= \nabla \ell ^ {(k)} (\bbeta ^ *) + \bSigma \bDelta ^ {(k)} + (\nabla ^ 2 \ell ^ {(k)} (\bbeta ^ *) - \bSigma) \bDelta ^ {(k)} 
  		+ \bR_3 ^ {(k)} (\bDelta ^ {(k)} \otimes \bDelta ^ {(k)} ) 
  		= 0, 
\]
where $\bR_3 ^ {(k)} = \int_0 ^ 1 \nabla ^ 3 \ell ^ {(k)} (\bbeta ^ * + v(\widehat \bbeta ^{(k)} - \bbeta ^ *)) dv$. This implies that
\begin{equation}
\label{eq:pr_subtit2}
        \bSigma  \bDelta ^ {(k)}
        = - \nabla \ell ^ {(k)} (\bbeta ^ *) - (\nabla ^ 2 \ell ^ {(k)} (\bbeta ^ *) - \bSigma) \bDelta ^ {(k)} 
        	-  \bR_3 ^ {(k)} (\bDelta ^ {(k)} \otimes \bDelta ^ {(k)}).
\end{equation}
Substituting \eqref{eq:pr_subtit2} into \eqref{eq:pr_subtit1}, we have
\begin{equation}
	\begin{aligned}
		0
		&= \nabla \widetilde \ell ^ {(k)} (\bbeta ^ *) - \nabla \ell ^ {(k)} (\bbeta ^ *) 
		- (\nabla ^ 2 \ell ^ {(k)} (\bbeta ^ *) - \bSigma) \bDelta ^ {(k)}
		- \bR_3 ^ {(k)} (\bDelta ^ {(k)} \otimes \bDelta ^ {(k)})
		+ \widetilde \bR_3 ^ {(k)} (\bDelta ^ {(k)} \otimes \bDelta ^ {(k)}) .
	\end{aligned}
\end{equation}
Note that $\nabla \widetilde \ell(\bbeta ^ *) = \frac{1}{m} \sum_{k=1}^m \nabla \widetilde \ell ^ {(k)}(\bbeta ^ *)$. Then we have the following decomposition: 
\[
	\begin{aligned}
		\nabla \widetilde \ell(\bbeta ^ *) 
		= &\underbrace{\nabla \ell (\bbeta ^ *)}_{T_1}
		+ \underbrace{\frac{1}{m} \sum_{k=1} ^ m (\nabla ^ 2 \ell ^ {(k)} (\bbeta ^ *) - \bSigma) \bDelta ^ {(k)}}_{T_2} 
		+ \underbrace{\frac{1}{m} \sum_{k=1} ^ m \bR_3 ^ {(k)} (\bDelta ^ {(k)} \otimes \bDelta ^ {(k)})}_{T_3}
		- \underbrace{\frac{1}{m} \sum_{k=1} ^ m \widetilde \bR_3 ^ {(k)} (\bDelta ^ {(k)} \otimes \bDelta ^ {(k)})}_{T_4}.
	\end{aligned}
\]
We introduce the following events: 
\[
\begin{aligned}
	\cE ^ {(k)} :=
		& \bigg\{\big\|\widehat \bbeta ^ {(k)} - \bbeta ^ *\big\|_2 
		\le \frac{C_1}{\ltwonorm{\bbeta ^ *}} \bigg(\frac{p \vee \xi}{n}\bigg) ^ {1/2}\bigg\}\cap \\
		&\bigg\{\|\nabla ^ 2 \ell ^ {(k)} (\bbeta ^ *) - \bSigma\|_2 
		\le C_2\|\bbeta\|_2 ^ 2 \bigg(\frac{p \vee \xi}{n}\bigg) ^ {1/2} 
			+ C_2\|\bbeta\|_2 ^ 2 \frac{(p \vee \xi) \xi}{n}\bigg\} \cap \\
		&\Big\{\big\|\bR_3 ^ {(k)}\big\|_2 
		\le C_3 \|\bbeta ^ *\|_2 \Big\} \cap
		\Big\{\big\|\widetilde \bR_3 ^ {(k)}\big\|_2 
		\le C_3 \|\bbeta ^ *\|_2 \Big\},
\end{aligned}
\]
where $C_1, C_2, C_3$ are constants. Let $\cE := \cap_{i=1} ^ m \cE ^ {(k)}$. By Proposition \ref{prop:pr2}, Lemmas \ref{PRlem6}, \ref{lem:pr_center_hess} and \ref{lem:pr_center_veps}, we have 
\begin{equation}
\label{eq:pr_prob_A}
	\PP (\cA^c) 
	\leq 42 m n ^ {-2} + 4 m e ^ {-\xi}.
\end{equation}
From now on, we assume $\cE$ holds; we will take into account the failure probability of $\cE$ to that of the final conclusion in the end. Then we consider the bounds for $T_1, T_2, T_3$ and $T_4$ conditional on events $\cE$. We categorize these terms into variance and bias terms: $T_1$ is the variance term, and $T_2, T_3, T_4$ are the bias terms.

\paragraph*{\bf \underline{Variance terms}}
\paragraph*{Bound for $T_1$}
By Lemma \ref{lem:pr_center_veps}, we have with probability at least $1 -4 e ^ {-\xi}$ that
\[
\begin{aligned}
	\|T_1\| = \bigg\|\frac{1}{mn} \sum_{k=1}^{m} \sum_{i=1}^{n} \varepsilon_i ^ {(k)} (\bx_i ^ {(k)\top} \bbeta ^ *) \bx_i ^ {(k)} \bigg\|_2 
	\lesssim \|\bbeta ^ *\|_2 \bigg(\frac{p \vee \xi}{mn}\bigg) ^ {1/2}.
\end{aligned}
\]

\paragraph*{\bf \underline{Bias terms}}
\paragraph*{Bound for $T_2$}
By the triangle inequality, we have
\[
\begin{aligned}
	\bigg\|\frac{1}{m} \sum_{k=1} ^ m (\nabla ^ 2 \ell ^ {(k)} (\bbeta ^ *) - \bSigma) \bDelta ^ {(k)}\bigg\|_2
	\leq & \bigg\|\frac{1}{m} \sum_{k=1} ^ m (\nabla ^ 2 \ell ^ {(k)} (\bbeta ^ *) - \bSigma) \bDelta ^ {(k)} - \EE (\nabla ^ 2 \ell ^ {(k)} (\bbeta ^ *) - \bSigma) \bDelta ^ {(k)} \bigg\|_2\\
	&+ \bigg\| \frac{1}{m} \sum_{k=1} ^ m \EE (\nabla ^ 2 \ell ^ {(k)} (\bbeta ^ *) - \bSigma) \bDelta ^ {(k)}\bigg\|_2.
\end{aligned}
\]
Similarly, by Hoeffding's inequality, we have with probability at least $1-2 e^{-\xi}$ that
\[
\begin{aligned}
	\bigg\|\frac{1}{m} \sum_{k=1} ^ m (\nabla ^ 2 \ell ^ {(k)} (\bbeta ^ *) - \bSigma) \bDelta ^ {(k)}
		- \EE (\nabla ^ 2 \ell ^ {(k)} (\bbeta ^ *) - \bSigma) \bDelta ^ {(k)} \bigg\|_2  
		\lesssim \|\bbeta ^ *\|_2 \frac{p \vee \xi}{(mn) ^ {1/2}} \bigg\{\bigg(\frac{p \vee \xi}{ n}\bigg) ^ {1/2} + \frac{(p \vee \xi) \xi}{ n}\bigg\}.
\end{aligned}
\]
Let $\xi = 2\log n$. Suppose $n \ge C_7 \log ^ 5 n$ for some constant $C_7$. Consider the bound in two cases: $p \le \log^2 n$ and $p > \log^2 n$. In the first case $p \le \log ^ 2 n$:
\[
	\frac{p \vee \xi}{(mn) ^ {1/2}} \bigg\{\bigg(\frac{p \vee \xi}{ n}\bigg) ^ {1/2} + \frac{(p \vee \xi) \xi}{ n}\bigg\}
	\lesssim \bigg(\frac{1}{mn} \times\frac{\log ^ 5n}{n}\bigg) ^ {1/2}
	\lesssim \bigg(\frac{1}{mn}\bigg) ^ {1/2}.
\]
In the second case $p > \log^2 n$:
\[
	\frac{p \vee \xi}{(mn) ^ {1/2}} \bigg\{\bigg(\frac{p \vee \xi}{ n}\bigg) ^ {1/2} + \frac{(p \vee \xi) \xi}{ n}\bigg\}
	\lesssim\bigg(\frac{1}{mn}\bigg) ^ {1/2} \frac{p^2 \log n}{n}
	\lesssim \bigg(\frac{p}{mn}\bigg) ^ {1/2}.
\]
Therefore, we have with probability at least $1-2 n ^ {-2}$ that
\begin{equation}
		\bigg\|\frac{1}{m} \sum_{k=1} ^ m (\nabla ^ 2 \ell ^ {(k)} (\bbeta ^ *) - \bSigma) \bDelta ^ {(k)}
		- \EE (\nabla ^ 2 \ell ^ {(k)} (\bbeta ^ *) - \bSigma) \bDelta ^ {(k)} \bigg\|_2 
		\lesssim \bigg(\frac{p}{mn}\bigg) ^ {1/2}.
\end{equation}
By the Cauchy-Schwarz inequality, we have
\[
	\begin{aligned}
		\big\|\EE (\nabla ^ 2 \ell ^ {(1)} (\bbeta ^ *) - \bSigma) \bDelta ^ {(1)} \big\|_2
		\leq \{\EE \|\nabla ^ 2 \ell ^ {(1)} (\bbeta ^ *) - \bSigma\|_2 ^ 2 \EE \| \bDelta ^ {(1)}\|_2 ^ 2\} ^ {1/2} 
		\lesssim \|\bbeta ^ *\|_2  \frac{p}{n}.
	\end{aligned}
\]
Then we have with probability at least $1 - 2n ^ {-2}$ that
\[
	\bigg\|\frac{1}{m} \sum_{k=1} ^ m (\nabla ^ 2 \ell ^ {(k)} (\bbeta ^ *) - \bSigma) \bDelta ^ {(k)}\bigg\|_2
	\lesssim \|\bbeta ^ *\|_2 \bigg\{\bigg(\frac{p}{mn}\bigg) ^ {1/2} + \frac{p}{n}\bigg\}.
\]

\paragraph*{Bound for $T_3$}
By $\cE$ and $\cF$, we have
\[
	\bigg\|\frac{1}{m} \sum_{k=1} ^ m \bR_3 ^ {(k)} (\bDelta ^ {(k)} \otimes \bDelta ^ {(k)})\bigg\|_2
	\le \frac{1}{m} \sum_{k=1} ^ m \big\|\bR_3 ^ {(k)}\big\|_2 \big\|\bDelta ^ {(k)}\big\|_2 ^ 2 
	\lesssim \frac{1}{\|\bbeta ^ *\|_2} \frac{p \vee \xi}{n}.
\]

\paragraph*{Bound for $T_4$}
Similar to $T_3$, we have
\[
	\bigg\|\frac{1}{m} \sum_{k=1} ^ m \widetilde \bR_3 ^ {(k)} (\bDelta ^ {(k)} \otimes \bDelta ^ {(k)})\bigg\|_2
	\le \frac{1}{m} \sum_{k=1} ^ m \big\|\widetilde \bR_3 ^ {(k)}\big\|_2 \big\|\bDelta ^ {(k)}\big\|_2 ^ 2 
	\lesssim \frac{1}{\|\bbeta ^ *\|_2} \frac{p \vee \xi}{n}.
\]
Let $\xi = 2 \log n$. Combining the above bounds and considering the failure probability of $\cA$, we can obtain that with probability at least $1 - (46m + 6) n ^ {-2}$ that
\[
	\big\|\widehat \bbeta ^ \rb - \bbeta ^ *\big\|_2
		\lesssim \frac{1}{\|\bbeta ^ *\|_2} \bigg(\frac{p}{mn}\bigg) ^ {1/2} + 
		\bigg(\frac{1}{\|\bbeta ^ *\|_2} + \frac{1}{\|\bbeta ^ *\|_2 ^ 3}\bigg) \bigg(\frac{p \vee \log n}{n}\bigg).
\]

\section{Proof of technical lemmas}
\subsection{Lemmas for the generalized linear models}
\label{appendix:B.1}

In this section, we provide the proofs of some technical lemmas in the generalized linear models. 

\begin{lem}
\label{lem:two_delta}
Under Conditions \ref{con:distribution} and \ref{con:b_four_prime}, if $n \ge \max(p, 4\log n)$, then we have that with probability at least $1 - 6 n ^ {-4}$ such that
\[
    \bDelta ^ {(k)} \otimes \bDelta ^ {(k)} - (\bSigma ^ {-1} \nabla \ell ^ {(k)}(\bbeta ^ *)) \otimes (\bSigma^{-1} \nabla \ell ^ {(k)}(\bbeta ^ *)) 
	\lesssim C_{\kappa, \phi, M, K}' \|\bSigma ^ {-1}\|_2 ^ 2 \bigg(\frac{p \vee 4 \log n}{n}\bigg) ^ {3 / 2},
\]
and
\[
\begin{aligned}
    \bDelta ^ {(k)} \otimes \bDelta ^ {(k)} \otimes \bDelta ^ {(k)} 
    - (\bSigma ^ {-1} \nabla \ell ^ {(k)}(\bbeta ^ *)) \otimes (\bSigma ^ {-1} \nabla \ell ^ {(k)}&(\bbeta ^ *)) \otimes (\bSigma ^ {-1} \nabla \ell ^ {(k)} (\bbeta ^ *)) \\
	&\lesssim C_{\kappa, \phi, M, K}'' \|\bSigma ^ {-1}\|_2 ^ 3 \bigg(\frac{p \vee 4 \log n}{n}\bigg) ^ 2.
\end{aligned}
\]
for any $k \in [m]$, where $C_{\kappa, \phi, M, K}' = \kappa ^ {-2} \phi ^ {3 / 2} M ^ {5 / 2} K ^ 6 + \kappa ^ {-1} \phi M ^ 2 K ^ 4$ and $C_{\kappa, \phi, M, K}'' = \kappa ^ {-2} \phi ^ 2 M ^ 3 K ^ 7  + \kappa ^ {-1} \phi ^ {3 / 2} M ^ {5 / 2} K ^ 5$.
\end{lem}

\begin{proof} 
For simplicity, we omit ``${(k)}$'' in the superscript in the following proof. \eqref{eq:lem1_1} yields that
\[
\begin{aligned}
	\bDelta  \otimes \bDelta
    = &(\bSigma ^ {-1} \nabla \ell(\bbeta ^ *)) \otimes (\bSigma ^ {-1} \nabla \ell(\bbeta ^ *)) 
    + (\bSigma ^ {-1} \nabla \ell (\bbeta ^ *)) \otimes \{\bSigma ^ {-1} (\bR_2 - \bSigma) \bDelta\} \\
    & + \{\bSigma ^ {-1} (\bR_2 - \bSigma) \bDelta\} \otimes (\bSigma ^ {-1} \nabla \ell(\bbeta ^ *))
    + \{\bSigma ^ {-1} (\bR_2 - \bSigma) \bDelta\} \otimes \{\bSigma ^ {-1} (\bR_2 - \bSigma) \bDelta\}.
\end{aligned} 
\]
We can bound each of the last three outer products similarly. Therefore, we focus on the second one for simplicity. By display (\ref{ineq:grad_upper_bound}), Lemma \ref{prop:glm_local_mse} and Lemma \ref{lem:center_hessian}, we have with probability at least $1 - 6 n ^ {-4}$ that
\[
\begin{aligned}
	\|(\bSigma ^ {-1} \nabla \ell (\bbeta ^ *)) \otimes \{\bSigma ^ {-1} (\bR_2 - \bSigma) \bDelta\}\|_2
	&\leq \|\bSigma ^ {-1}\|_2 ^ 2 \|\nabla \ell (\bbeta ^ *)\|_2 \|\bR_2 - \bSigma\|_2 \|\bDelta\| _ 2 \\
	&\lesssim \kappa ^ {-1} \phi M K ^ 2 \{\kappa ^ {-1} \phi ^ {1 / 2} M ^ {3 / 2} K ^ 4  + M K ^ 2\} \|\bSigma ^ {-1}\|_2 ^ 2 \bigg(\frac{p \vee 4 \log n}{n}\bigg) ^ {3 / 2}.
\end{aligned}
\]
Therefore, we have with probability at least $1 - 6 n ^ {-4}$ such that
\[
\begin{aligned}
    \bDelta \otimes \bDelta - (\bSigma ^ {-1} \nabla \ell(\bbeta ^ *)) \otimes (\bSigma^{-1} \nabla \ell (\bbeta ^ *))
	\lesssim \kappa ^ {-1} \phi M K ^ 2 \{\kappa ^ {-1} \phi^ {1 / 2} M ^ {3 / 2} K ^ 4 + M K ^ 2\} \|\bSigma ^ {-1}\|_2 ^ 2 \bigg(\frac{p \vee 4 \log n}{n}\bigg) ^ {3 / 2}.
\end{aligned}
\]
Similarly, it holds with probability at least $1 - 6 n ^ {-4}$ that 
\[
\begin{aligned}
    \bDelta \otimes \bDelta \otimes \bDelta
    &- (\bSigma ^ {-1} \nabla \ell (\bbeta ^ *)) \otimes (\bSigma^{-1} \nabla \ell (\bbeta ^ *)) \otimes (\bSigma^{-1} \nabla \ell (\bbeta ^ *)) \\
	&\quad\quad\quad\quad\quad\quad\quad\quad \lesssim \kappa ^ {-1} (\phi M) ^ {3 / 2} K ^ 3 \{\kappa ^ {-1} \phi ^ {1 / 2} M ^ {3 / 2} K ^ 4  + M K ^ 2\} \|\bSigma ^ {-1}\|_2 ^ 3 \bigg(\frac{p \vee 4 \log n}{n}\bigg) ^ 2.
\end{aligned}
\]
\end{proof}

\begin{lem} 
\label{lem:center_Sigma} 
Under Conditions \ref{con:distribution} and \ref{con:b_double_prime}, for any $\xi > 0$, we have with probability at least $1 - 2 e ^ {-\xi}$ that
\[
	\|\nabla ^ 2 \ell ^ {(k)} (\bbeta) - \EE \{\nabla ^ 2 \ell ^ {(k)} (\bbeta)\}\|_2 
	\lesssim M K ^ 2 \bigg\{ \bigg(\frac{p \vee \xi}{n}\bigg)^{1/2} + \frac{p \vee \xi}{n}\bigg\},
\]
for any $\bbeta \in \RR ^ p$ and any $k \in [m]$.
\end{lem}

\begin{proof}
For simplicity, we omit ``${(k)}$'' in the superscript in the following proof. Note that
\[
\begin{aligned}
	\|\nabla ^ 2 \ell (\bbeta) - \EE \{\nabla ^ 2 \ell (\bbeta)\}\|_2 
	&= \max_{\|\bu\|_2 = 1} \big<(\nabla ^ 2 \ell (\bbeta) - \EE \{\nabla ^ 2 \ell (\bbeta)\}) \bu, \bu\big> \\
	&\leq 2 \max_{\bu \in \cN(1/4)} \big<(\nabla ^ 2 \ell (\bbeta) - \EE \{\nabla ^ 2 \ell (\bbeta)\}) \bu, \bu\big> \\
	&\leq 2 \max_{\bu \in \cN(1/4)} \bigg|\frac{1}{n} \sum_{i=1}^n b''(\bx_i \bbeta) (\bx_i ^ \top \bu) ^ 2 - \EE \{b''(\bx_i \bbeta)(\bx_i ^ \top \bu) ^ 2\}\bigg|.
\end{aligned}
\]
Lemma 2.7.6 in \cite{vershynin2010introduction} implies that
\[
\begin{aligned}
	\|b''(\bx_i \bbeta)(\bx_i ^ \top \bu) ^ 2 - \EE \{b''(\bx_i \bbeta)(\bx_i ^ \top \bu) ^ 2\} \|_{\psi_1} 
	&\leq C_1 M \|(\bx_i ^ \top \bu) ^ 2\|_{\psi_1} \\
	&\leq C_1 M \|\bx_i ^ \top \bu\|_{\psi_2} ^ 2 
	\leq C_1 M K ^ 2,
\end{aligned}
\]
where $C_1$ is a constant. By Bernstein's inequality, we have
\[
\begin{aligned}
	&\PP \bigg[ \max_{\bu \in \cN(1/4)} \bigg|\frac{1}{n} \sum_{i=1}^n b''(\bx_i \bbeta) (\bx_i ^ \top \bu) ^ 2 - \EE \{b''(\bx_i \bbeta)(\bx_i ^ \top \bu) ^ 2\}\bigg|
	\geq C_2 M K ^ 2 \max \bigg\{\bigg(\frac{t}{n}\bigg) ^ {1/2}, \frac{t}{n}\bigg\}\bigg] \\
	&\leq \sum_{\bu \in \cN(1/4)} \PP \bigg[\bigg|\frac{1}{n} \sum_{i=1}^n b''(\bx_i \bbeta) (\bx_i ^ \top \bu) ^ 2 - \EE \{b''(\bx_i \bbeta)(\bx_i ^ \top \bu) ^ 2\}\bigg| \geq C_2 M K ^ 2 \max \bigg\{\bigg(\frac{t}{n}\bigg) ^ {1/2}, \frac{t}{n}\bigg\}\bigg] \\
	&\leq 2 e ^ {-(t - 9 \log p)},
\end{aligned}
\]
where $C_2$ is a constant. Substituting $\xi = t - 9 \log p$ into the bound with positive $\xi$, we can find a universal constant $C_3$ with probability at least $1 - 2e ^ {- \xi}$ that 
\[
	\|\nabla ^ 2 \ell (\bbeta) - \EE \{\nabla ^ 2 \ell (\bbeta)\}\|_2 \leq C_3 M K ^ 2 \bigg\{ \bigg(\frac{p \vee \xi}{n}\bigg)^{1/2} + \frac{p \vee \xi}{n}\bigg\}.
\]
\end{proof}

\begin{lem}
\label{lem:b_three_prime}
Under Conditions \ref{con:distribution} and \ref{con:b_four_prime}, for any $\xi > 0$, we have with probability at least $1 - 2 e ^ {- \xi}$  that
\[
	\big\|\nabla ^ 3 \ell ^ {(k)} (\bbeta) - \EE \{\nabla ^ 3 \ell ^ {(k)} (\bbeta)\}\big\|_2
	\lesssim M K ^ 3 \bigg\{\bigg(\frac{p \vee \xi}{n}\bigg) ^ {1/2} + \frac{(p \vee \xi) ^ {3/2}}{n}\bigg\}.
\]
for any $\bbeta \in \RR ^ p$ and any $k \in [m]$,
\end{lem}

\begin{proof}
For simplicity, we omit ``${(k)}$'' in the superscript in the following proof. We introduce the shorthand $\bQ := \nabla ^ 3 \ell (\bbeta) - \EE\{\nabla ^ 3 \ell(\bbeta)\}$. Note that for any $\bbeta \in \RR ^ p$, 
\[
	\|\bQ\|_2 
		= \sup_{\|\bu\|_2 = 1} \|\bQ (\bu \otimes \bu )\|_2
		= \sup_{\|\bu\|_2 = 1} |\bQ (\bu \otimes \bu \otimes \bu)|.
\]
By Corollary 4.2.13 in \cite{vershynin2010introduction}, we can construct a $\frac{1}{8}$-net of the sphere $\cS ^ {p - 1}$ with cardinality $|\cN(1/8)| \leq 17 ^ p$. Therefore, given any $\bu \in \cS ^ {p -1}$, we can write $\bu = \bv + \bfsym \delta$ for some $\bv$ in $\cN(1 / 8)$ and an error vector $\bfsym \delta$ such that $\|\bfsym \delta\|_2 \leq 1/8$. Then we have
\[
	|\bQ (\bu \otimes \bu \otimes \bu)|
	= |\bQ (\bv \otimes \bv \otimes \bv) + 3\bQ (\bv \otimes \bv \otimes \bfsym \delta) + 3\bQ (\bv \otimes \bfsym \delta \otimes \bfsym \delta) + \bQ (\bfsym \delta \otimes \bfsym \delta \otimes \bfsym \delta)|.
\]
By the triangle inequality and H\"older's inequality, we have
\[
\begin{aligned}
	|\bQ (\bu \otimes \bu \otimes \bu)|
	&\leq |\bQ (\bv \otimes \bv \otimes \bv)| + 3 \|\bQ \|_2 \|\bv\|_2 ^ 2 \|\bfsym \delta\|_2 + 3 \|\bQ\|_2 \|\bv\|_2 \|\bfsym \delta\|_2 ^ 2 + \|\bQ\|_2 \|\bfsym \delta\|_2 ^ 3 \\
	&\leq |\bQ (\bv \otimes \bv \otimes \bv)| + \bigg(\frac{3}{8} + \frac{3}{8 ^ 2} + \frac{1}{8 ^ 3}\bigg) \|\bQ\|_2 \\
	&\leq |\bQ (\bv \otimes \bv \otimes \bv)| + \frac{1}{2} \|\bQ\|_2.
\end{aligned}
\]
Taking the supremum over $\bu \in \cS ^ {p -1}$ and $\bv \in \cN(1 / 8)$ yields 
\begin{equation}
\label{eq:b_three_prime2}
	\|\bQ\|_2 
	\leq 2 \!\!\!\!\!\! \sup_{\bv \in \cN(1/8)} \!\!\!\!\!\! |\bQ (\bv \otimes \bv \otimes \bv)| 
	\leq 2 \!\!\!\!\!\! \sup_{\bv \in \cN(1/8)} \! \bigg|\frac{1}{n} \sum_{i=1}^n b'''(\bx_i ^ \top \bbeta) (\bx_i ^ \top \bv)^ 3 - \EE \{b'''(\bx_i ^ \top \bbeta) (\bx_i ^ \top \bv) ^ 3\} \bigg|.
\end{equation}
Given any $\bv$, by sub-Gaussianity, we have
\[
    \EE \exp \bigg\{\bigg(\frac{|b'''(\bx_i ^ \top \bbeta) (\bx_i ^ \top \bv) ^ 3|}{M K ^ 3}\bigg) ^ {2 / 3}\bigg\} \leq 2, 
\]
from which we deduce that $\|b'''(\bx_i ^ \top \bbeta) (\bx_i ^ \top \bv) ^ 3\| _ {\psi_{2 / 3}} \leq M K ^ 3$. Applying the bound above display (3.6) in \cite{Adamczak2009RestrictedIP}, we have for any $t > 0$ that 
\[
\begin{aligned}
	\PP \bigg[\bigg|\frac{1}{n} \sum_{i=1}^n b'''(\bx_i ^ \top \bbeta) (\bx_i ^ \top \bv)^ 3 - \EE \{b'''(\bx_i ^ \top \bbeta) (\bx_i ^ \top \bv) ^ 3\} \bigg| & \geq C_1 M K ^ 3 \bigg\{\bigg(\frac{t}{n}\bigg) ^ {\!\!1 / 2} \!\!\!\!+ \frac{t ^ {3 / 2}}{n}\bigg\}\bigg] 
    \leq 2e ^ {-(t - 3)}, 
\end{aligned}
\]
where $C_1$ is a universal constant. Applying a union bound over $\bv \in \cN(1 / 8)$ and then \eqref{eq:b_three_prime2}, we deduce that 
\[	
\begin{aligned}
	&\PP \bigg[\big\| \nabla ^ 3 \ell (\bbeta) - \EE \{\nabla ^ 3 \ell (\bbeta)\}\big\| _ 2 \geq C_2 M K ^ 3 \bigg\{\bigg(\frac{t}{n}\bigg) ^ {1 / 2} + \frac{t ^ {3 / 2}}{n}\bigg\}\bigg] \\
	&\leq \PP \bigg[ \sup_{\ltwonorm{\bv} = 1} \bigg|\frac{1}{n} \sum_{i=1}^n b'''(\bx_i ^ \top \bbeta) (\bx_i ^ \top \bv)^ 3 - \EE \{b'''(\bx_i ^ \top \bbeta) (\bx_i ^ \top \bv) ^ 3\} \bigg| \geq C_2 M K ^ 3 \bigg\{\bigg(\frac{t}{n}\bigg) ^ {1 / 2} + \frac{t ^ {3 / 2}}{n}\bigg\}\bigg] \\
	&\le 2e ^ {-(t - 3 - p \log 17)}.
\end{aligned}
\]
Substituting $\xi = t - 3 - p \log 17$ into the bound with positive $\xi$, we derive the conclusion that with probability at least $1 - 2e ^ {- \xi}$ such that
\[
	\big\| \nabla ^ 3 \ell (\bbeta) - \EE \{\nabla ^ 3 \ell (\bbeta)\}\big\|_2
	\lesssim M K ^ 3 \bigg\{\bigg(\frac{p \vee \xi}{n}\bigg) ^ {1/2} + \frac{(p \vee \xi) ^ {3/2}}{n}\bigg\}.
\]
\end{proof}

\begin{lem}
\label{lem:b_four_prime}
	Under Condition \ref{con:distribution} and \ref{con:b_four_prime}, for any $\xi > 0$, we have with probability at least $1 - 2e ^ {-\xi}$ that
\[
	\big\|\nabla ^ 4 \ell ^ {(k)} (\bbeta) - \EE \{\nabla ^ 4 \ell ^ {(k)} (\bbeta)\}\big\|_2  
	\lesssim M K ^ 4 \bigg\{\bigg(\frac{p \vee \xi}{n}\bigg) ^ {1/2} + \frac{(p \vee \xi) ^ 2}{n}\bigg\},
\]
for any $\bbeta \in \RR ^ p$ and any $k \in [m]$. 
\end{lem}

\begin{proof}
For simplicity, we omit ``${(k)}$'' in the superscript in the following proof. Similar to \eqref{eq:b_three_prime2}, we have
\[
	\big\|\nabla ^ 4 \ell (\bbeta) - \EE \{\nabla ^ 4 \ell (\bbeta)\}\big\|_2 
	\leq 2 \!\!\!\!\!\! \sup_{\bv \in \cN(1/16)} \!\! \bigg|\frac{1}{n} \sum_{i=1}^n b''''(\bx_i ^ \top \bbeta) (\bx_i ^ \top \bv) ^ 4 - \EE \{b''''(\bx_i ^ \top \bbeta) (\bx_i ^ \top \bv) ^ 4\} \bigg|.
\]
Given any $\bv$, by sub-Gaussianity, we have
\[
    \EE \exp \bigg\{\bigg(\frac{|b''''(\bx_i ^ \top \bbeta) (\bx_i ^ \top \bv) ^ 4|}{M K ^ 4}\bigg) ^ {1 / 2}\bigg\}  \leq 2, 
\]
from which we deduce that $\|b''''(\bx_i ^ \top \bbeta) (\bx_i ^ \top \bv) ^ 4\| _ {\psi_{1 / 2}} \leq M K ^ 4$. Applying the bound above display (3.6) in \cite{Adamczak2009RestrictedIP}, we have for any $t > 0$ that 
\begin{equation}
\label{eq:b_four_prime2}
	\PP \bigg[\bigg|\frac{1}{n} \sum_{i=1}^n b''''(\bx_i ^ \top \bbeta) (\bx_i ^ \top \bv) ^ 4 - \EE \{b''''(\bx_i ^ \top \bbeta) (\bx_i ^ \top \bv) ^ 4\} \bigg| \geq C_1 M K ^ 4 \bigg\{\bigg(\frac{t}{n} \bigg) ^ {\!\!1 / 2} \!\!\!\! + \frac{t ^ 2}{n}\bigg\}\bigg] \leq 2e ^ {-(t - 3)}, 
\end{equation}
where $C_1$ is a universal constant. Applying a union bound over $\bv \in \cN(1 / 16)$ and then \eqref{eq:b_four_prime2}, we deduce that 
\[
	\PP \bigg[\big\|\nabla ^ 4 \ell (\bbeta) - \EE \{\nabla ^ 4 \ell (\bbeta)\}\big\|_2  \ge C_2 M K ^ 4 \bigg\{\bigg(\frac{t}{n} \bigg) ^ {1 / 2} + \frac{t ^ 2}{n} \bigg\} \bigg] 
	\leq 2e ^ {-(t - 3 - p \log 33)}.
\]
Substituting $\xi = t - 3 - p \log 33$ into the bound with positive $\xi$, we can find universal constants $C_3$ with probability at least $1 - 2e ^ {- \xi}$ such that 
\[
	\big\|\nabla ^ 4 \ell (\bbeta) - \EE \{\nabla ^ 4 \ell (\bbeta)\}\big\|_2 
	\leq C_3 M K ^ 4 \bigg\{\bigg(\frac{p \vee \xi}{n}\bigg) ^ {1/2} + \frac{(p \vee \xi) ^ 2}{n}\bigg\}.
\]
\end{proof}

\begin{lem}
\label{lem:b_five_prime}
	Under Condition \ref{con:distribution} and \ref{con:b_four_prime}, for any $\xi > 0$, we have with probability at least $1 - 2e ^ {-\xi}$ that
\[
	\big\|\nabla ^ 5 \ell ^ {(k)} (\bbeta) \big\|_2  
	\lesssim M K ^ 5 \bigg\{1 + \bigg(\frac{p \vee \xi}{n}\bigg) ^ {1/2} + \frac{(p \vee \xi) ^ {5 / 2}}{n}\bigg\},
\]for any $\bbeta \in \RR ^ p$ and any $k \in [m]$. 
\end{lem}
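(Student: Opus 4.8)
The plan is to replicate the argument of Lemmas \ref{lem:b_three_prime} and \ref{lem:b_four_prime} for the degree-five tensor, with one structural change: since the statement bounds the full operator norm of $\nabla^5\ell^{(k)}(\bbeta)$ rather than its deviation from the mean, I will peel off the expectation separately, and this is precisely what produces the extra constant $1$ in the rate. Suppressing the superscript $(k)$, I first record
\[
\nabla^5\ell(\bbeta) = \frac{1}{n}\sum_{i=1}^n b'''''(\bx_i^\top\bbeta)\,\bx_i\otimes\bx_i\otimes\bx_i\otimes\bx_i\otimes\bx_i,
\]
and split $\|\nabla^5\ell(\bbeta)\|_2 \le \|\nabla^5\ell(\bbeta) - \EE\{\nabla^5\ell(\bbeta)\}\|_2 + \|\EE\{\nabla^5\ell(\bbeta)\}\|_2$. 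For the expectation, Condition \ref{con:b_four_prime} gives $b'''''\le M$ and Condition \ref{con:distribution} gives $\|\bx^\top\bu\|_{\psi_2}\le K$, so $\EE|\bx^\top\bu|^5\lesssim K^5$; taking the supremum over $\bu\in\cS^{p-1}$ yields $\|\EE\{\nabla^5\ell(\bbeta)\}\|_2\lesssim MK^5$, which accounts for the leading term in the stated rate.

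For the centered part I set $\bQ:=\nabla^5\ell(\bbeta)-\EE\{\nabla^5\ell(\bbeta)\}$ and discretize exactly as in the earlier proofs. Writing a unit vector $\bu=\bv+\bfsym{\delta}$ with $\bv$ a net point and $\|\bfsym{\delta}\|_2\le\theta$, the multinomial expansion of $\bu^{\otimes 5}$ leaves cross terms with total coefficient $\sum_{j=1}^5\binom{5}{j}\theta^j=(1+\theta)^5-1$; choosing a $\tfrac{1}{32}$-net $\cN(1/32)$ of $\cS^{p-1}$ keeps this below $1/2$, so that $\|\bQ\|_2\le 2\sup_{\bv\in\cN(1/32)}|\bQ(\bv\otimes\bv\otimes\bv\otimes\bv\otimes\bv)|$, where $|\cN(1/32)|\le 65^p$ by Corollary 4.2.13 of \cite{vershynin2010introduction}. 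For each fixed $\bv$ the summand obeys $\|b'''''(\bx_i^\top\bbeta)(\bx_i^\top\bv)^5\|_{\psi_{2/5}}\le M\|\bx_i^\top\bv\|_{\psi_2}^5\le MK^5$, since raising a $\psi_2$ variable to the fifth power produces a $\psi_{2/5}$ variable.

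I would then invoke the Bernstein-type bound for averages of independent $\psi_{2/5}$ variables (the display above (3.6) in \cite{Adamczak2009RestrictedIP}) with Orlicz exponent $\alpha=2/5$, hence $1/\alpha=5/2$. This gives, for each $\bv$, a tail of order $MK^5\{(t/n)^{1/2}+t^{5/2}/n\}$ with failure probability $2e^{-(t-3)}$; a union bound over the $65^p$ net points and the substitution $\xi=t-3-p\log 65$ produce the centered bound $MK^5\{((p\vee\xi)/n)^{1/2}+(p\vee\xi)^{5/2}/n\}$ with probability at least $1-2e^{-\xi}$. Adding the $\lesssim MK^5$ expectation bound yields the stated inequality. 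The only genuinely new point, and hence the main thing to verify carefully, is the Orlicz bookkeeping: confirming that the quintic form sits in $\psi_{2/5}$ and that the Adamczak inequality applies in the heavy-tailed regime $\alpha<1$, so that the $t^{5/2}/n$ term (rather than a milder one) is what governs the deviation. Everything else is a routine degree-five transcription of the two preceding lemmas.
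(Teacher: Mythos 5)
Your proposal is correct and follows essentially the same route as the paper's proof: a $\tfrac{1}{32}$-net discretization of the sphere, the observation that $b'''''(\bx_i^\top\bbeta)(\bx_i^\top\bv)^5$ is $\psi_{2/5}$ with norm $\lesssim MK^5$, the Adamczak-type tail $MK^5\{(t/n)^{1/2}+t^{5/2}/n\}$, a union bound with the substitution $\xi = t - 3 - p\log 65$, and the expectation term $\lesssim MK^5$ supplying the constant $1$. The only (immaterial) difference is that you center the whole tensor and bound $\|\EE\nabla^5\ell(\bbeta)\|_2$ as an operator norm, whereas the paper centers per net direction and adds back the per-direction expectation; your version also avoids the paper's spurious $\|\bbeta\|_2$ factor in that expectation bound.
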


\begin{proof}
For simplicity, we omit ``${(k)}$'' in the superscript in the following proof. Similar to Lemmas \ref{lem:b_three_prime} and \ref{lem:b_four_prime}, for any $t > 0$ and $\bv \in \cN(1 / 32)$, we have with probability at least $2e ^ {-(t - 3)}$ that
\[
	\bigg|\frac{1}{n} \sum_{i=1}^n b'''''(\bx_i ^ \top \bbeta) (\bx_i ^ \top \bv) ^ 5 - \EE \{b''''(\bx_i ^ \top \bbeta) (\bx_i ^ \top \bv) ^ 5\} \bigg| \lesssim M K ^ 5 \bigg\{\bigg(\frac{t}{n} \bigg) ^ {\!\!1 / 2}  + \frac{t ^ {5 / 2}}{n}\bigg\}, 
\]
Note that $\EE \{b''''(\bx_i ^ \top \bbeta) (\bx_i ^ \top \bv) ^ 5\} \lesssim M K ^ 5$ and 
\[
	\big\|\nabla ^ 5 \ell (\bbeta)\big\|_2 
	\leq 2 \!\!\!\!\!\! \sup_{\bv \in \cN(1/32)} \bigg|\frac{1}{n} \sum_{i=1}^n b'''''(\bx_i ^ \top \bbeta) (\bx_i ^ \top \bv) ^ 5  \bigg|.
\]
Applying a union bound over $\bv \in \cN(1 / 32)$, we derive the result.
\end{proof}

\begin{lem}
\label{lem:center_hessian}
Under Conditions \ref{con:distribution}, \ref{con:b_double_prime} and \ref{con:b_four_prime}, if $n \ge \max(p, 4\log n)$, then we have with probability at least $1 - 4 n ^ {-4}$ that
\[
	\|\nabla ^ 2 \ell ^ {(k)}(\bbeta) - \bSigma\|_2
	\lesssim M K ^ 3 \|\bbeta - \bbeta ^ *\| _ 2 + M K ^ 2 \bigg(\frac{p \vee \log n}{n}\biggr) ^ {1 / 2},
\]
for any $\bbeta \in \RR ^ p$ and $k \in [m]$. If $\bbeta$ satisfies that $\|\bbeta - \bbeta ^ *\| _ 2 \leq  2 \kappa ^ {-1} (\phi M) ^ {1 / 2} K \{(p \vee 4 \log n)/n\} ^ {1 / 2}$, then we have with probability at least $1 - 4 n ^ {-4}$ that
\[
	\|\nabla ^ 2 \ell ^ {(k)}(\bbeta) - \bSigma\|_2
	\lesssim C_{\kappa, \phi, M, K} \bigg(\frac{p \vee \log n}{n}\biggr) ^ {1 / 2},
\]
where $C_{\kappa, \phi, M, K} = \kappa ^ {-1} \phi ^ {1 / 2} M ^ {3 / 2} K ^ 4 + M K ^ 2$.
\end{lem}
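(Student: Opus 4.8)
The plan is to split the target into a stochastic fluctuation around the population Hessian evaluated at the \emph{same} point $\bbeta$ and a deterministic bias coming from transporting the population Hessian from $\bbeta$ to $\bbeta^*$. Concretely, since $\bSigma = \EE\{b''(\bx^\top\bbeta^*)\bx\bx^\top\} = \EE\{\nabla^2\ell^{(k)}(\bbeta^*)\}$, I would write
\[
\nabla^2 \ell^{(k)}(\bbeta) - \bSigma
= \underbrace{\big[\nabla^2 \ell^{(k)}(\bbeta) - \EE\{\nabla^2 \ell^{(k)}(\bbeta)\}\big]}_{\text{(I)}}
+ \underbrace{\big[\EE\{\nabla^2 \ell^{(k)}(\bbeta)\} - \bSigma\big]}_{\text{(II)}}.
\]
Term (I) is handled by the concentration result already available, and term (II) is a purely deterministic smoothness estimate.

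For (I), I would invoke Lemma \ref{lem:center_Sigma} at the fixed point $\bbeta$ with $\xi = 4\log n$. Because $n \ge \max(p, 4\log n)$, the linear-in-$(p\vee\xi)/n$ term is dominated by the square-root term, so with probability at least $1 - 2n^{-4}$ one gets $\|(\text{I})\|_2 \lesssim MK^2\{(p\vee\log n)/n\}^{1/2}$. For (II), which is symmetric and deterministic, I would use a first-order expansion of $b''$: for each realization of $\bx$, the mean value theorem gives some $\tilde\eta$ between $\bx^\top\bbeta$ and $\bx^\top\bbeta^*$ with $b''(\bx^\top\bbeta) - b''(\bx^\top\bbeta^*) = b'''(\tilde\eta)\,\bx^\top(\bbeta-\bbeta^*)$, so that for every unit vector $\bu$,
\[
\bu^\top(\text{II})\,\bu = \EE\big\{b'''(\tilde\eta)\,[\bx^\top(\bbeta-\bbeta^*)]\,(\bx^\top\bu)^2\big\}.
\]
Using $|b'''|\le M$ (Condition \ref{con:b_four_prime}) followed by H\"older's inequality with exponents $(3, 3/2)$ and the sub-Gaussian moment bounds $\EE|\bx^\top\bv|^3 \lesssim K^3\|\bv\|_2^3$ afforded by Condition \ref{con:distribution}, I obtain $|\bu^\top(\text{II})\,\bu| \lesssim MK^3\|\bbeta-\bbeta^*\|_2$ uniformly in $\bu$. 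Taking the supremum over $\bu \in \cS^{p-1}$ and using symmetry of (II) gives $\|(\text{II})\|_2 \lesssim MK^3\|\bbeta-\bbeta^*\|_2$. Adding the two bounds (and stating the failure probability generously as $4n^{-4}$) yields the first display.

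The second display is then an immediate specialization: substituting the hypothesized radius $\|\bbeta-\bbeta^*\|_2 \le 2\kappa^{-1}(\phi M)^{1/2}K\{(p\vee 4\log n)/n\}^{1/2}$ into the $MK^3\|\bbeta-\bbeta^*\|_2$ term converts it into a multiple of $\kappa^{-1}\phi^{1/2}M^{3/2}K^4\{(p\vee\log n)/n\}^{1/2}$, which combined with the concentration term $MK^2\{(p\vee\log n)/n\}^{1/2}$ gives exactly $C_{\kappa,\phi,M,K}\{(p\vee\log n)/n\}^{1/2}$ with $C_{\kappa,\phi,M,K} = \kappa^{-1}\phi^{1/2}M^{3/2}K^4 + MK^2$. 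The only genuinely delicate step is the moment estimate for (II): one must verify that the expectation of the product of the linear functional $\bx^\top(\bbeta-\bbeta^*)$ with the square $(\bx^\top\bu)^2$ is controlled by the product of the corresponding sub-Gaussian norms, which is precisely where Condition \ref{con:distribution} enters; the remainder is bookkeeping.
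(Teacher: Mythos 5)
Your argument is correct as a pointwise-in-$\bbeta$ statement, and it takes a genuinely different decomposition from the paper's. The paper splits $\nabla^2\ell^{(k)}(\bbeta)-\bSigma$ into $[\nabla^2\ell^{(k)}(\bbeta)-\nabla^2\ell^{(k)}(\bbeta^*)]+[\nabla^2\ell^{(k)}(\bbeta^*)-\bSigma]$: the first (empirical smoothness) piece is written via Taylor's theorem as $\nabla^3\ell^{(k)}(\bbeta')(\bbeta-\bbeta^*)$ and bounded by $\|\nabla^3\ell^{(k)}(\bbeta')\|_2\|\bbeta-\bbeta^*\|_2\lesssim MK^3\|\bbeta-\bbeta^*\|_2$ using Lemma \ref{lem:b_three_prime}, and the second piece is concentration at the \emph{fixed} point $\bbeta^*$ via Lemma \ref{lem:center_Sigma}. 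You instead place the stochastic fluctuation at the moving point $\bbeta$ and transport the \emph{population} Hessian from $\bbeta$ to $\bbeta^*$, handling that bias deterministically by the mean value theorem, H\"older with exponents $(3,3/2)$, and sub-Gaussian moment bounds. That bias estimate is correct (the operator norm of the symmetric matrix (II) is indeed the supremum of $|\bu^\top(\mathrm{II})\bu|$), and your route is economical: it spends only one concentration lemma, so your failure probability is really $2n^{-4}$ rather than $4n^{-4}$.

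The caveat, which is where the two decompositions genuinely differ in strength: your exceptional event depends on $\bbeta$, because term (I) is the centered empirical Hessian evaluated at $\bbeta$ itself, and Lemma \ref{lem:center_Sigma} is a fixed-$\bbeta$ statement. Downstream, this lemma is invoked at \emph{data-dependent} points — e.g., the Taylor intermediate point $\bbeta''$ between $\widehat\bbeta^{(k)}$ and $\bbeta^*$ in the proof of Proposition \ref{prop:glm_decompn} and in Lemma \ref{lem:two_delta} — so what is needed is a single event of probability $1-4n^{-4}$ on which the bound holds simultaneously for all relevant $\bbeta$. The paper's decomposition delivers this essentially for free: its stochastic ingredients are $\nabla^2\ell^{(k)}(\bbeta^*)-\bSigma$ (anchored at the fixed truth) and $\|\nabla^3\ell^{(k)}(\bbeta')\|_2$, which admits the $\bbeta$-free envelope $M\sup_{\|\bu\|_2=1}n^{-1}\sum_{i}|\bx_i^\top\bu|^3$ because $|b'''|\le M$ uniformly. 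Your term (I) has no such envelope — the centering $\EE\{b''(\bx^\top\bbeta)\bx\bx^\top\}$ moves with $\bbeta$ — so upgrading your proof to the uniform statement would require either a chaining argument over $\bbeta$, or rewriting (I) as an empirical difference plus a fluctuation at $\bbeta^*$ plus a population difference, which collapses back to the paper's route. Either patch it that way, or flag explicitly that your version holds pointwise for deterministic $\bbeta$ only.
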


\begin{proof}
For simplicity, we omit ``${(k)}$'' in the superscript in the following proof. Note that
\begin{equation}
	\|\nabla ^ 2 \ell(\bbeta) - \bSigma\| _ 2
	\leq \|\nabla ^ 2 \ell (\bbeta) - \nabla ^ 2 \ell (\bbeta ^ *)\| _ 2  + \|\nabla ^ 2 \ell(\bbeta ^ *) - \bSigma\| _ 2.
\end{equation}
By Taylor's expansion, we have
\[
	\nabla ^ 2 \ell(\bbeta) = \nabla ^ 2 \ell (\bbeta ^ *) + \nabla ^ 3 \ell (\bbeta')(\bbeta - \bbeta ^ *),
\]
where $\bbeta' = t \bbeta + (1 - t) \bbeta ^ *$ for some $t \in [0,1]$. Applying Lemma \ref{lem:b_three_prime} with $\xi = 4 \log n$, we have with probability at least $1 - 2 n ^ {-4}$ that 
\begin{equation}
\label{eq:hess1}
		\|\nabla ^ 2 \ell (\bbeta) - \nabla ^ 2 \ell (\bbeta ^ *)\| _ 2 
		\leq \|\nabla ^ 3 \ell (\bbeta')\| _ 2 \|\bbeta - \bbeta ^ *\| _ 2 
		\lesssim M K ^ 3 \|\bbeta - \bbeta ^ *\| _ 2.
\end{equation}
Lemma \ref{lem:center_Sigma} with $\xi = 4 \log n$ yields that with probability at least $1 - 2 n ^ {-4}$ such that 
\begin{equation}
\label{eq:hess2}
	\|\nabla ^ 2 \ell(\bbeta ^ *) - \bSigma\|_2 
	\lesssim M K ^ 2 \bigg(\frac{p \vee 4 \log n}{n}\biggr) ^ {1 / 2}.
\end{equation}
Combining \eqref{eq:hess1} and \eqref{eq:hess2}, the conclusion thus follows. In addition, if $\bbeta$ satisfied that $\|\bbeta - \bbeta ^ *\| _ 2 \leq  2 \kappa ^ {-1} (\phi M) ^ {1 / 2} K \{(p \vee 4 \log n)/n\} ^ {1 / 2}$, then we have with probability at least $1 - 4 n ^ {-4}$ that
\[
	\|\nabla ^ 2 \ell (\bbeta) - \bSigma\|_2
	\lesssim \kappa ^ {-1} \phi ^ {1 / 2} M  ^ {3 / 2} K ^ 4 \bigg(\frac{p \vee 4 \log n}{n}\biggr) ^ {1 / 2} + M K ^ 2 \bigg(\frac{p \vee 4 \log n}{n}\biggr) ^ {1 / 2}.
\]
\end{proof}

\begin{lem}
\label{lem:avergae_est}
Under the same conditions as in Lemma \ref{lem:center_hessian}, if $\widehat \bbeta ^ {(k)}$ satisfies that $\|\widehat \bbeta ^ {(k)} - \bbeta ^ *\| _ 2 \leq  2 \kappa ^ {-1} (\phi M) ^ {1 / 2} K \{(p \vee 4 \log n)/n\} ^ {1 / 2}$ for all $k \in [m]$, we have 
\[
\begin{aligned}
	\bigg\|\frac{1}{m} \sum_{k = 1} ^ m (\widehat \bbeta ^ {(k)} - \bbeta ^ *)\bigg\|_2
	\lesssim &(\phi M) ^ {1 / 2} K \|\bSigma ^ {-1}\|_2 \bigg(\frac{p \vee \log n}{mn}\bigg) ^ {1 / 2} \\
	&+ (\kappa ^ {-2} \phi M ^ 2 K ^ 5 + \kappa ^ {-1} \phi ^ {1 / 2} M ^ {3 / 2} K ^ 3) \|\bSigma ^ {-1}\|_2 \bigg(\frac{p \vee 4 \log n}{n}\bigg),
\end{aligned}	
\]
with probability at least $1 - 2 n ^ {-4} - 8 m n ^ {-4}$.
\end{lem}

\begin{proof}
For any $k \in [m]$, by Taylor's expansion, we have
\[
\begin{aligned}
	\nabla \ell ^ {(k)} (\widehat \bbeta ^ {(k)}) 
	&= \nabla \ell ^ {(k)} (\bbeta ^ *) 
	+ \int_0 ^ 1 \nabla ^ 2 \ell ^ {(k)} (\bbeta ^ * + v (\widehat \bbeta ^ {(k)} - \bbeta ^ *)) dv (\widehat \bbeta ^ {(k)} - \bbeta ^ *) \\ 
	&= \nabla \ell ^ {(k)} (\bbeta ^ *) 
	+ \bSigma (\widehat \bbeta ^ {(k)} - \bbeta ^ *)
	+ \bigg\{\int_0 ^ 1 \nabla ^ 2 \ell ^ {(k)} (\bbeta ^ * + v (\widehat \bbeta ^ {(k)} - \bbeta ^ *)) dv - \bSigma\bigg\} (\widehat \bbeta ^ {(k)} - \bbeta ^ *) = 0.
\end{aligned}
\]
Some algebra yields that
\[
	\widehat \bbeta ^ {(k)} - \bbeta ^ *
	= - \bSigma ^ {-1} \nabla \ell ^ {(k)} (\bbeta ^ *) 
	- \bSigma ^ {-1} \bigg\{\int_0 ^ 1 \nabla ^ 2 \ell ^ {(k)} (\bbeta ^ * + v (\widehat \bbeta ^ {(k)} - \bbeta ^ *)) dv - \bSigma\bigg\} (\widehat \bbeta ^ {(k)} - \bbeta ^ *).
\]
Therefore, 
\[
	\frac{1}{m} \sum_{k = 1} ^ m (\widehat \bbeta ^ {(k)} - \bbeta ^ *)
	= - \bSigma ^ {-1} \nabla \ell (\bbeta ^ *) 
	- \frac{1}{m} \sum_{k = 1} ^ m \bSigma ^ {-1} \bigg\{\int_0 ^ 1 \nabla ^ 2 \ell ^ {(k)} (\bbeta ^ * + v (\widehat \bbeta ^ {(k)} - \bbeta ^ *)) dv - \bSigma\bigg\} (\widehat \bbeta ^ {(k)} - \bbeta ^ *).
\]
By \eqref{ineq:grad_upper_bound_full}, we have with probability at least $1 - 2 n ^ {-4}$ that
\[
	\|\bSigma ^ {-1} \nabla \ell (\bbeta ^ *)\|_2
	\lesssim (\phi M) ^ {1 / 2} K \|\bSigma ^ {-1}\|_2 \bigg(\frac{p \vee \log n}{mn}\bigg) ^ {1 / 2}.
\]
By Lemma \ref{lem:center_hessian}, we have with probability at least $1 - 4 m n ^ {-4}$ that
\[
\begin{aligned}
	\bigg\|\frac{1}{m} \sum_{k = 1} ^ m \bSigma ^ {-1} \bigg\{\int_0 ^ 1 \nabla ^ 2 \ell ^ {(k)} &(\bbeta ^ * + v (\widehat \bbeta ^ {(k)} - \bbeta ^ *)) dv - \bSigma\bigg\} (\widehat \bbeta ^ {(k)} - \bbeta ^ *)\bigg\|_2 \\
	&\lesssim \frac{1}{m} \sum_{k = 1} ^ m \|\bSigma ^ {-1}\|_2 \|\widehat \bbeta ^ {(k)} - \bbeta ^ *\|_2 \bigg\|\int_0 ^ 1 \nabla ^ 2 \ell ^ {(k)} (\bbeta ^ * + v (\widehat \bbeta ^ {(k)} - \bbeta ^ *)) dv - \bSigma\bigg\|_2 \\
	&\lesssim (\kappa ^ {-2} \phi M ^ 2 K ^ 5 + \kappa ^ {-1} \phi ^ {1 / 2} M ^ {3 / 2} K ^ 3) \|\bSigma ^ {-1}\|_2 \bigg(\frac{p \vee 4 \log n}{n}\bigg),
\end{aligned}
\]
Combining these two bounds, we derive
\[
\begin{aligned}
	\bigg\|\frac{1}{m} \sum_{k = 1} ^ m (\widehat \bbeta ^ {(k)} &- \bbeta ^ *)\bigg\|_2
	\!\!\leq \|\bSigma ^ {-1} \nabla \ell (\bbeta ^ *) \|_2
	\!+\! \bigg\|\frac{1}{m} \sum_{k = 1} ^ m \bSigma ^ {-1} \bigg\{\int_0 ^ 1 \nabla ^ 2 \ell ^ {(k)} (\bbeta ^ * + v (\widehat \bbeta ^ {(k)} - \bbeta ^ *)) dv - \bSigma\bigg\} (\widehat \bbeta ^ {(k)} - \bbeta ^ *)\bigg\|_2 \\
	&\lesssim (\phi M) ^ {1 / 2} K \|\bSigma ^ {-1}\|_2 \bigg(\frac{p \vee \log n}{mn}\bigg) ^ {1 / 2}
	+ (\kappa ^ {-2} \phi M ^ 2 K ^ 5 + \kappa ^ {-1} \phi ^ {1 / 2} M ^ {3 / 2} K ^ 3) \|\bSigma ^ {-1}\|_2 \bigg(\frac{p \vee 4 \log n}{n}\bigg),
\end{aligned}	
\]
with probability at least $1 - 2 n ^ {-4} - 4 m n ^ {-4}$.
\end{proof}

\subsection{Lemmas for the noisy phase retrieval problem}
In this section, we provide the proof of the technical lemmas for estimation in the noisy phase retrieval problem. 

\begin{lem}
\label{PRlem4}
Let $f: \RR^n \rightarrow \RR$ be $K$-Lipschitz function and $\bx$ be the standard normal random vector in $\RR^n$. Then for every $t \geq 0$, we have
\[
	\PP \{f(\bx) - \EE f(\bx) > t\} \leq \exp \left(-t^2 / 2 K^2\right).
\]
\end{lem}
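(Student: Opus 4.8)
The statement is the classical Gaussian concentration inequality for Lipschitz functions (Borell--Sudakov--Tsirelson), and my plan is to follow the logarithmic-Sobolev route, which is the one that delivers the sharp constant $1/2$ in the exponent. First I would pass to the moment generating function via a Chernoff bound: for $\lambda > 0$, Markov's inequality applied to $\exp\{\lambda(f(\bx) - \EE f(\bx))\}$ reduces the whole claim to the sub-Gaussian MGF estimate
\[
	\EE \exp\{\lambda (f(\bx) - \EE f(\bx))\} \le \exp\Bigl(\tfrac{\lambda^2 K^2}{2}\Bigr),
\]
since then minimizing $\exp(\lambda^2 K^2/2 - \lambda t)$ over $\lambda$ at $\lambda = t/K^2$ yields exactly $\exp(-t^2/2K^2)$. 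I would also reduce to smooth $f$ by mollification: a $K$-Lipschitz function has $\ltwonorm{\nabla f} \le K$ almost everywhere (Rademacher), smooth approximants inherit the same gradient bound, and the MGF estimate passes to the limit.

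The core step is to invoke the Gaussian logarithmic Sobolev inequality with its optimal constant: for the standard Gaussian measure on $\RR^n$ and any smooth $g$,
\[
	\EE[g^2 \log g^2] - \EE[g^2]\,\log \EE[g^2] \le 2\,\EE \ltwonorm{\nabla g}^2 .
\]
This is established by tensorizing the one-dimensional Gaussian LSI over independent coordinates (subadditivity of entropy), and the one-dimensional case follows from a standard Ornstein--Uhlenbeck semigroup computation (or as a central-limit limit of the two-point inequality on the hypercube). With the LSI in hand, I would run Herbst's argument: taking $g = \exp(\lambda f/2)$ and writing $H(\lambda) := \EE\exp(\lambda f)$, one checks that the entropy on the left equals $\lambda H'(\lambda) - H(\lambda)\log H(\lambda)$, while the right-hand gradient term is $\tfrac{\lambda^2}{2}\,\EE[\ltwonorm{\nabla f}^2 e^{\lambda f}] \le \tfrac{\lambda^2 K^2}{2}H(\lambda)$. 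Dividing the resulting inequality by $\lambda^2 H(\lambda)$ shows that $L(\lambda) := \lambda^{-1}\log H(\lambda)$ obeys $L'(\lambda) \le K^2/2$; since $L(\lambda)\to \EE f$ as $\lambda \to 0^+$ by L'Hôpital, integrating gives $\log H(\lambda) \le \lambda \EE f + \lambda^2 K^2/2$, which is precisely the MGF bound above.

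The only substantive ingredient is the Gaussian LSI with the sharp constant $2$; the Chernoff reduction, the smoothing, and the Herbst ODE are all routine. I expect the sharpness of the constant to be the delicate point: a naive rotation-and-interpolation argument (coupling two independent Gaussian copies and differentiating along $\bx_\theta = \sin\theta\,\bx + \cos\theta\,\by$) yields the correct sub-Gaussian shape but with the inferior constant $\pi^2/8$, so the log-Sobolev input is what buys the optimal exponent. Alternatively, since this is a textbook fact, the cleanest route in the paper may be to simply cite it (e.g., the Gaussian concentration theorem in standard references), or to give a direct Ornstein--Uhlenbeck semigroup-interpolation proof of the MGF bound that bypasses the explicit LSI statement entirely.
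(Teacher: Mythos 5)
Your proposal is mathematically correct, but it does far more work than the paper itself: the paper's entire ``proof'' of this lemma is a one-line citation to Proposition 34 of \cite{vershynin2010introduction}, i.e., it simply invokes the classical Borell--Tsirelson--Ibragimov--Sudakov Gaussian concentration theorem as a known fact. You in fact anticipated this in your final sentence. Your self-contained route --- Chernoff reduction to the MGF bound $\EE \exp\{\lambda(f(\bx)-\EE f(\bx))\} \le \exp(\lambda^2 K^2/2)$, smoothing via Rademacher plus mollification, the Gaussian log-Sobolev inequality with sharp constant $2$, and Herbst's argument showing $L(\lambda) := \lambda^{-1}\log \EE e^{\lambda f}$ satisfies $L'(\lambda) \le K^2/2$ --- is the standard complete proof and all the steps check out, including your observation that the naive Maurey--Pisier interpolation along $\sin\theta\,\bx + \cos\theta\,\by$ only yields the weaker constant and that the LSI input is what secures the exponent $t^2/(2K^2)$. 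What the citation buys the paper is brevity for a textbook fact that is used only as an auxiliary tool (in the proof of Lemma \ref{PRlem6}); what your argument buys is self-containedness and a transparent account of where the sharp constant comes from, at the cost of importing the LSI machinery (tensorization plus the one-dimensional semigroup computation), which would be disproportionate detail for this paper's appendix.
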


\begin{proof}
    See Proposition 34 in \cite{vershynin2010introduction}.
\end{proof}

\begin{lem}\label{PRlem6}
Let $\bA$ be an $n \times p$ matrix whose entries are independent standard normal random variables. Then, with probability at least $1 - 2 \exp \left(-t ^ 2 / 2\right),$ the following inequality hold,
\[
	\|\bA\|_{2 \rightarrow 4} \leq (3n) ^ {1/4} + \sqrt{p} + t.
\]
\end{lem}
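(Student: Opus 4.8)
The plan is to control $\|\bA\|_{2 \rightarrow 4}$ by the standard recipe for norms of Gaussian matrices: bound the expectation via a Gaussian comparison inequality, then add dimension-free Gaussian concentration around the mean. First I would observe that, viewing $\bA \in \RR^{n\times p}$ as a vector in $\RR^{np}$ equipped with the Euclidean (Frobenius) norm, the map $\bA \mapsto \|\bA\|_{2 \rightarrow 4}$ is $1$-Lipschitz: the reverse triangle inequality gives $|\,\|\bA\|_{2 \rightarrow 4} - \|\bB\|_{2 \rightarrow 4}\,| \le \|\bA - \bB\|_{2 \rightarrow 4}$, and since $\|\bv\|_4 \le \ltwonorm{\bv}$ for every vector $\bv$, we have $\|\bA-\bB\|_{2 \rightarrow 4} \le \opnorm{\bA-\bB} \le \fnorm{\bA-\bB}$. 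Lemma \ref{PRlem4} with Lipschitz constant $K=1$ then yields $\PP\{\|\bA\|_{2 \rightarrow 4} - \EE\|\bA\|_{2 \rightarrow 4} > t\} \le \exp(-t^2/2)$, so it remains only to show $\EE\|\bA\|_{2 \rightarrow 4} \le (3n)^{1/4} + \sqrt p$.

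For the expectation I would rewrite the norm as a Gaussian process by dualizing the $\ell_4$ norm. Since the dual of $\ell_4$ is $\ell_{4/3}$, we have $\|\bA\bu\|_4 = \sup_{\|\bw\|_{4/3}=1}\bw^\top\bA\bu$, and hence $\|\bA\|_{2 \rightarrow 4} = \sup_{\bu\in S,\, \bw\in T}\bw^\top\bA\bu$, where $S$ is the $\ell_2$ unit ball in $\RR^p$ and $T$ is the $\ell_{4/3}$ unit ball in $\RR^n$. This is precisely the setting of Chevet's inequality (see, e.g., \cite{vershynin2010introduction}), which bounds $\EE\sup_{\bu\in S,\, \bw\in T}\bw^\top\bA\bu$ by $w(S)\,\mathrm{rad}(T) + \mathrm{rad}(S)\,w(T)$, where $w(\cdot)$ denotes the Gaussian width and $\mathrm{rad}(\cdot)$ the Euclidean radius. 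I would then evaluate the four quantities: $\mathrm{rad}(S) = 1$; $w(S) = \EE\ltwonorm{\bh} \le (\EE\ltwonorm{\bh}^2)^{1/2} = \sqrt p$ for $\bh\sim\cN(\bzero,\bI_p)$; $\mathrm{rad}(T) \le 1$ because $\ltwonorm{\bw} \le \|\bw\|_{4/3}$; and $w(T) = \EE\|\bg\|_4 \le (\EE\|\bg\|_4^4)^{1/4} = (3n)^{1/4}$ for $\bg\sim\cN(\bzero,\bI_n)$, using $\EE g_i^4 = 3$ and Jensen's inequality. Combining these gives $\EE\|\bA\|_{2 \rightarrow 4} \le (3n)^{1/4} + \sqrt p$ with exactly the stated constants.

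Putting the two pieces together, with probability at least $1 - \exp(-t^2/2) \ge 1 - 2\exp(-t^2/2)$ we obtain $\|\bA\|_{2 \rightarrow 4} \le \EE\|\bA\|_{2 \rightarrow 4} + t \le (3n)^{1/4} + \sqrt p + t$, as claimed. The Lipschitz bound and the concentration step are immediate; the main work lies in the expectation bound, and within it the key structural step is recognizing the correct dual pairing so that the two Gaussian widths $\EE\|\bg\|_4$ and $\EE\ltwonorm{\bh}$ produce precisely $(3n)^{1/4}$ and $\sqrt p$. If one prefers a self-contained argument avoiding a direct appeal to Chevet, the same bound follows from the Sudakov--Fernique/Gordon comparison by checking that the increments of $(\bu,\bw)\mapsto \bw^\top\bA\bu$ are dominated by those of $(\bu,\bw)\mapsto \ltwonorm{\bu}\langle\bg,\bw\rangle + \ltwonorm{\bw}\langle\bh,\bu\rangle$; verifying this increment comparison is where I expect the only genuine computational effort to lie.
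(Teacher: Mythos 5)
Your proposal is correct and follows essentially the same route as the paper's proof: dualize the $\ell_4$ norm to write $\|\bA\|_{2\rightarrow 4}$ as the supremum of a Gaussian bilinear form, bound its expectation by $\sqrt p + (3n)^{1/4}$ via a Gaussian comparison, and finish with the $1$-Lipschitz property and Lemma \ref{PRlem4}. The only cosmetic difference is that you invoke Chevet's inequality as a black box, whereas the paper carries out the underlying Sudakov--Fernique increment comparison by hand --- exactly the ``self-contained'' alternative you describe in your closing remark.
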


\begin{proof}
The proof follows that of Lemma A.5 in \cite{2015Optimal} step by step. Note that $\|\bx\|_4 = \max_{\|\bv\|_{4/3}=1}\langle\bx,\bv\rangle$. Define $X_{\bu, \bv} = \langle\bA\bu, \bv\rangle$ on
\[
	T = \big\{(\bu, \bv): \bu \in \RR^p, \|\bu\|_2 = 1, \bv \in \RR^n, \|\bv\|_{4/3} = 1\big\}.
\]
Then$\|\bA\|_{2 \rightarrow 4} = \max_{(\bu, \bv) \in T} X_{\bu, \bv}$. Define $Y_{\bu,\bv} = \langle\bg, \bu\rangle + \langle\bh, \bv\rangle$ where $\bg$ and $\bh$ are independent standard Gaussian random vectors of dimensions $p$ and $n$ respectively.
For any $(\bu, \bv),(\bu', \bv') \in T,$ we have
\[
	\EE |X_{\bu, \bv} - X_{\bu', \bv'}|^2 = \|\bv\|_2^2 + \|\bv'\|_2^2 - 2\langle\bu, \bu'\rangle\langle\bv, \bv'\rangle
\]
and
\[
	\EE|Y_{\bu, \bv} - Y_{\bu', \bv'}|^2 = 2 + \|\bv\|_2^2 + \|\bv'\|_2^2 - 2\langle\bu, \bu'\rangle - 2\langle\bv, \bv'\rangle.
\]
Therefore,
$\EE |Y_{\bu, \bv} - Y_{\bu', \bv'}|^2 - \EE |X_{\bu, \bv} - X_{\bu', \bv'}|^2 = 2\big(1 - \langle\bu, \bu'\rangle\big) \big(1 - \langle\bv, \bv'\big) \geq 0$. Then applying the Sudakov-Fernique inequality \citep[][Theorem~7.2.11]{Ver18}, we deduce that
\[
	\EE \|\bA\|_{2 \rightarrow 4}  \leq \EE \max_{(\bu, \bv) \in T} Y_{\bu, \bv} = \EE \|\bg \|_2 + \EE\|\bh\|_4 \leq \big(\EE\|\bg\|_2^2\big)^{1/2} + \big(\EE \|\bh\|_4^4\big)^{1/4} = \sqrt{p} + (3n)^{1 / 4}.
\]
Note that $\|\cdot\|_{2 \rightarrow 4}$ is a 1-Lipschitz function, i.e., $\|\bA - \bB\|_{2 \rightarrow 4} \le \fnorm{\bA - \bB}$ for any $\bA, \bB \in \RR ^ {n \times p}$. By Lemma \ref{PRlem4} there, it holds with probability at least $1 - 2\exp(-t ^ 2 / 2)$ that
\[
	\|\bA\|_{2 \rightarrow 4} \leq \sqrt{p} + (3n) ^ {1/4} + t.
\]
\end{proof}

\begin{lem}
\label{lem:pr_center_hess}
Suppose that $\bx_1, \ldots, \bx_n$ are independent observations of $\bx$ and $\bx \sim \cN (\mathbf{0}_p, \bI_p)$. For any $\bbeta \in \RR^p$, it holds with probability at least $1 - 14 e ^ {-\xi}$ that
\[
	\bigg\|\frac{1}{n} \sum_{i=1}^{n} (\bx_i ^ \top \bbeta) ^ 2 \bx_i \bx_i ^ \top - \EE (\bx_i ^ \top \bbeta) ^ 2 \bx_i \bx_i ^ \top\bigg\|_2 
	\lesssim \|\bbeta\|_2 ^ 2 \bigg\{\bigg(\frac{p \vee \xi}{n}\bigg) ^ {1/2} + \frac{(p \vee \xi) \xi}{n}\bigg\},
\]
provided that $n \ge C p ^ 2$ for some positive constant $C$. Moreover, the moment bound is given by
\[
	\bigg\{\EE \bigg(\bigg\|\frac{1}{n} \sum_{i=1}^{n} (\bx_i ^ \top \bbeta) ^ 2 \bx_i \bx_i ^ \top - \EE (\bx_i ^ \top \bbeta) ^ 2 \bx_i \bx_i ^ \top\bigg\|_2 ^ 2 \bigg) \bigg\} ^ {1/2} 
	\lesssim \|\bbeta\|_2 ^ 2 \bigg(\frac{p}{n}\bigg) ^ {1 / 2}.
\]
\end{lem}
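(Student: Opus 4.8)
The plan is to combine rotation invariance, a covering-net reduction, and---crucially---a conditioning argument that separates the coordinate along $\bbeta$ from its orthogonal complement, which is what produces the sharp heavy-tail term $(p\vee\xi)\xi/n$ rather than the cruder $(p\vee\xi)^2/n$.

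\emph{Reduction.} Since $\bx\sim\cN(\bzero_p,\bI_p)$ is rotationally invariant and the object is $2$-homogeneous in $\bbeta$, I would first pick an orthogonal $\bO$ with $\bO\bbeta = \ltwonorm{\bbeta}\be_1$; because $\bO\bx_i\stackrel{d}{=}\bx_i$ and the operator norm is rotation-invariant, the matrix in question equals $\ltwonorm{\bbeta}^2$ times $\bM := \frac1n\sum_{i}x_{i1}^2\bx_i\bx_i^\top - \EE[x_1^2\bx\bx^\top]$ in rotated coordinates, so it suffices to bound $\opnorm{\bM}$ and multiply by $\ltwonorm{\bbeta}^2$ at the end. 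Next, using a $1/4$-net $\cN(1/4)$ of $\cS^{p-1}$ with $|\cN(1/4)|\le 9^p$, I reduce to controlling the quadratic forms $\bu^\top\bM\bu = \frac1n\sum_i(W_i-\EE W_i)$ with $W_i := x_{i1}^2(\bx_i^\top\bu)^2$, uniformly over $\bu\in\cN(1/4)$.

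\emph{Why the naive bound is not enough.} Here $W_i$ is a product of two sub-exponential variables, so $\|W_i\|_{\psi_{1/2}}\lesssim 1$, and a direct Bernstein-type bound for $\psi_{1/2}$ sums (as in Lemma \ref{lem:b_four_prime}) gives, per $\bu$, a deviation $(t/n)^{1/2}+t^2/n$. A union bound over the net forces $t\asymp p\vee\xi$, producing the correct light term $((p\vee\xi)/n)^{1/2}$ but a heavy term $(p\vee\xi)^2/n$, which is strictly worse than the claimed $(p\vee\xi)\xi/n$ exactly when $\xi\ll p$---and this regime is the relevant one, since the target rate must still vanish when $p\asymp\sqrt n$.

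\emph{Conditioning argument.} Write $\bz_i:=\bx_i-x_{i1}\be_1$, so that $\bz_i\indep x_{i1}$ and $x_{i1}^2\bx_i\bx_i^\top = x_{i1}^4\be_1\be_1^\top + x_{i1}^3(\be_1\bz_i^\top+\bz_i\be_1^\top)+x_{i1}^2\bz_i\bz_i^\top$. I would condition on $(x_{11},\dots,x_{n1})$ and work on the high-probability event $\cG$ on which $\frac1n\sum_i x_{i1}^4\lesssim 1$, $\frac1n\sum_i x_{i1}^6\lesssim 1$, and $\max_i x_{i1}^2\lesssim \xi\vee\log n$; a standard union bound shows $\PP(\cG^c)$ is negligible. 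On $\cG$ the first (rank-one) term is handled by a scalar $\psi_{1/2}$ bound; the cross term reduces to $\opnorm{\frac1n\sum_i x_{i1}^3\bz_i}$, which conditionally is a linear Gaussian form with variance proxy $\frac1{n^2}\sum_i x_{i1}^6\lesssim n^{-1}$, hence purely sub-Gaussian (no heavy term) and bounded by $((p\vee\xi)/n)^{1/2}$ after a net; and the dominant Wishart term, for a fixed net direction $\bv$, is the weighted sub-exponential sum $\frac1n\sum_i x_{i1}^2(\bz_i^\top\bv)^2$. Its conditional Bernstein bound has variance proxy $\frac1n\sum_i x_{i1}^4\lesssim 1$ (giving the light term) and per-term scale $\max_i x_{i1}^2/n$, so the heavy term is $\frac{\max_i x_{i1}^2}{n}(p\vee\xi)\lesssim\frac{(p\vee\xi)\xi}{n}$ after absorbing $\log n$ into $\xi$. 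The condition $n\ge Cp^2$ guarantees $\cG$ and keeps the cross-term and all lower-order contributions below the two stated terms.

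\emph{Moment bound and main obstacle.} The $L^2$-type estimate follows by integrating the tail bound $\PP(\opnorm{\bM}>s)$ over $s$, where the light term dominates and yields $\EE\opnorm{\bM}\lesssim p/n$, hence $(\EE\opnorm{\bM})^{1/2}\lesssim(p/n)^{1/2}$ once $\ltwonorm{\bbeta}^2$ is restored. The main obstacle is precisely extracting the sharp heavy term $(p\vee\xi)\xi/n$: one-shot $\psi_{1/2}$ concentration is lossy, and the gain comes from freezing the weights $x_{i1}^2$---the only factor not involving the net direction $\bu$---so that the residual $\bu$-dependent randomness is merely sub-exponential with a per-term scale governed by $\max_i x_{i1}^2$ rather than by the $\psi_{1/2}$-norm of the full product.
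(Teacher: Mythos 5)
Your tail bound is correct and follows the paper's own proof essentially step for step: the same rotation-invariance reduction to $\bbeta=\ltwonorm{\bbeta}\be_1$, the same three-part decomposition $x_{i1}^4\be_1\be_1^\top + x_{i1}^3(\be_1\bz_i^\top+\bz_i\be_1^\top)+x_{i1}^2\bz_i\bz_i^\top$, the same conditioning on $(x_{11},\dots,x_{n1})$ restricted to events controlling $\frac1n\sum_i x_{i1}^4$, $\frac1n\sum_i x_{i1}^6$ and $\max_i x_{i1}^2$, conditional Hoeffding for the cross term and conditional Bernstein for the Wishart term, and a union bound over a net. (The paper also carries the residual scalar term $|\frac1n\sum_i x_{i1}^2-1|$ explicitly, which in your notation is the gap between the conditional mean of the Wishart block and its unconditional mean; this is bookkeeping, and your bound covers it.) Your diagnosis of why one-shot $\psi_{1/2}$ concentration loses a factor is also exactly the reason the paper conditions.

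The moment bound, however, contains a genuine error. You claim that integrating the tail yields $\EE\opnorm{\bM}\lesssim p/n$ and hence $(\EE\opnorm{\bM})^{1/2}\lesssim(p/n)^{1/2}$. This first-moment claim is false: for any fixed unit vector $\bu\perp\be_1$ one has $\var(\bu^\top\bM\bu)\asymp 1/n$, so $\EE\opnorm{\bM}\gtrsim n^{-1/2}$, and in fact $\EE\opnorm{\bM}\asymp(p/n)^{1/2}$ — the typical size of $\opnorm{\bM}$ is already $(p/n)^{1/2}$, so no tail integration can push its first moment down to $p/n$, and $(\EE\opnorm{\bM})^{1/2}\asymp(p/n)^{1/4}$, not $(p/n)^{1/2}$. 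What tail integration actually delivers, and what the paper proves and later uses (e.g.\ in bounding $T_4$ via $\{\EE\|\nabla^2\ell^{(1)}(\bbeta^*)-\bSigma\|_2^2\,\EE\|\bDelta^{(1)}\|_2^2\}^{1/2}$), is the \emph{second}-moment bound $\{\EE\opnorm{\bM}^2\}^{1/2}\lesssim(p/n)^{1/2}$: set $A\asymp(p/n)^{1/2}$, $Z:=(\opnorm{\bM}-A)_+$, integrate the tail to get $\EE Z^2\lesssim p/n$, and conclude via $\{\EE(A+Z)^2\}^{1/2}\le A+\{\EE Z^2\}^{1/2}$. The lemma statement as printed drops the square on the norm inside the expectation (a typo, visible by comparing with the final display of the paper's proof), and your proposal inherited that typo; but the fix is to prove the squared version, not to assert $\EE\opnorm{\bM}\lesssim p/n$, which is unprovable.
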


\begin{proof}
Following the proof of Lemma 7.4 in \cite{candes2015phase}, we give an explicit bound here. By unitary invariance, it is enough to consider $\bbeta = \be_1$. For any $i \in [n]$, let $x_{i,1}$ denote the first element of $\bx_i$ and $\bx_{i,-1}$ denote the remaining elements. That is $\bx_i = (x_{i,1}, \bx_{i,-1})$. Then we have
\[
	\begin{aligned}
		&\bigg\|\frac{1}{n} \sum_{i=1}^{n} x_{i,1}^2 
    		\bigg(\begin{array}{cc}
        		x_{i,1}^2        & x_{i,1}\bx_{i,-1}^\top \\
        		x_{i,1}\bx_{i,-1} & \bx_{i,-1}\bx_{i-1}^\top
        	\end{array}\bigg)
  		- (\bI_p + 2\be_1 \be_1^\top)\bigg\|_2 \\
		&\leq \bigg\|\frac{1}{n} \sum_{i=1}^{n} x_{i,1}^2 
    		\bigg(\begin{array}{cc}
        		x_{i,1}^2 & 0 \\
        		0        & 0
        	\end{array}\bigg)
  		- \bigg(\begin{array}{cc}
        		3 & 0 \\
        		0 & 0
        	\end{array}\bigg)\bigg\|_2 
		+ \bigg\|\frac{1}{n} \sum_{i=1}^{n} x_{i,1}^2 
    		\bigg(\begin{array}{cc}
        		0               & x_{i,1}\bx_{i,-1}^\top \\
        		x_{i,1}\bx_{i,-1} & 0
        	\end{array}\bigg)
  		- \bigg(\begin{array}{cc}
        		0 & 0 \\
        		0 & 0
        	\end{array}\bigg)\bigg\|_2 \\
		&\quad+ \bigg\|\frac{1}{n} \sum_{i=1}^{n} x_{i,1}^2 
    		\bigg(\begin{array}{cc}
        		0 & 0 \\
        		0 & \bx_{i,-1}\bx_{i-1}^\top
       		\end{array}\bigg)
  		- x_{i,1}^2 \bigg(\begin{array}{cc}
        		0 & 0 \\
        		0 & \bI_{p-1}
        	\end{array}\bigg)\bigg\|_2 
		+ \bigg\|\frac{1}{n} \sum_{i=1}^{n} x_{i,1}^2 
    	\bigg(\begin{array}{cc}
        		0 & 0 \\
        		0 & \bI_{p-1}
        	\end{array}\bigg)
  		- \bigg(\begin{array}{cc}
        		0 & 0 \\
        		0 & \bI_{p-1}
        	\end{array}\bigg)\bigg\|_2 \\
		&\leq \bigg|\frac{1}{n} \sum_{i=1}^{n} x_{i,1}^4 - 3\bigg| + 2\bigg\|\frac{1}{n} \sum_{i=1}^{n} x_{i,1}^3 \bx_{i,-1}\bigg\|_2 + \bigg\|\frac{1}{n} \sum_{i=1}^{n} x_{i,1}^2 (\bx_{i,-1}\bx_{i-1} ^ \top - \bI_{p-1})\bigg\|_2 + \bigg|\frac{1}{n} \sum_{i=1}^{n} x_{i,1}^2 - 1\bigg|.
	\end{aligned}
\]
By display (3.6) in \cite{Adamczak2009RestrictedIP}, we have with probability at least $1 - 4 e ^ {- \xi}$ that
\begin{equation}
\label{eq:pr_hessian_term1}
	\begin{aligned}
		\bigg|\frac{1}{n} \sum_{i=1}^{n} x_{i,1}^4 - 3\bigg| 
			\leq C_1 \bigg\{\bigg(\frac{\xi}{n}\bigg) ^ {1/2} + \frac{\xi ^ 2}{n}\bigg\}
		\quad\text{and}\quad
		\bigg|\frac{1}{n} \sum_{i=1}^{n} x_{i,1}^2 - 1\bigg| 
			\leq C_2 \bigg\{\bigg(\frac{\xi}{n}\bigg) ^ {1/2} + \frac{\xi}{n}\bigg\},
	\end{aligned}
\end{equation}
where $C_1$ and $C_2$ are constants. Before we analyze the bounds for the second and third term, we define the following events:
\[
\begin{aligned}
	&\cA_1 = \bigg\{\Big|\frac{1}{n} \sum_{i=1}^{n} x_{i,1} ^ 6 - 15 \Big| \le  C_3\bigg(\frac{\xi}{n}\bigg) ^ {1/2} + C_3\frac{\xi ^ 3}{n}\bigg\}, \\
	&\cA_2 = \bigg\{\Big|\frac{1}{n} \sum_{i=1}^{n} x_{i,1} ^ 4 - 3 \Big| \le C_4\bigg(\frac{\xi}{n}\bigg) ^ {1/2} + C_4\frac{\xi ^ 2}{n}\bigg\}, \\
	&\cA_3 = \Big\{\big|\max_i(x_{i,1} ^ 2 - 1)\big| \le C_5(\log n + \xi)\Big\},
\end{aligned}
\]
where $C_3$, $C_4$ and $C_5$ are constants. To bound the second term, we also define the event $\cE_1$:
\[
	\cE_1 := \bigg\{\bigg\|\frac{1}{n} \sum_{i=1}^{n} x_{i,1}^3 \bx_{i,-1}\bigg\|_2 \ge  t_1\bigg\}
	\ \text{ with }\ 
	t_1 = C_6 \bigg\{15 + \bigg(\frac{\xi}{n}\bigg) ^ {1/2} + \frac{\xi ^ 3}{n}\bigg\} ^ {1/2} \bigg(\frac{p \vee \xi}{n}\bigg) ^ {1/2}.
\]
where $C_6$ is constant. We observe that
\[
	\bigg\|\frac{1}{n} \sum_{i=1}^{n} x_{i,1}^3 \bx_{i,-1}\bigg\|_2 
	= \sup_{\|\bv\|_2=1} \bigg|\frac{1}{n} \sum_{i=1}^{n} x_{i,1}^3 (\bx_{i,-1} ^ \top \bv) \bigg|
	\leq 2\sup_{\bv \in \cN(1/4)} \bigg|\frac{1}{n} \sum_{i=1}^{n} x_{i,1} ^ 3 (\bx_{i,-1} ^ \top \bv) \bigg|,
\]
where $\cN(1/4)$ is the $1/4$-net of the unit sphere $\cS^{p-2}$. Conditional on $\{x_{i,1}\}_{i=1} ^ n$, by Hoeffding's inequality, we have
\begin{equation}
	\begin{aligned}
		\PP \Big(\cE_1 \Big| \{x_{i,1}\}_{i=1} ^ n\Big) 
		&\le \PP \bigg(\sup_{\bv \in \cN(1/4)} \bigg|\frac{1}{n} \sum_{i=1}^{n} x_{i,1} ^ 3 (\bx_{i,-1} ^ \top \bv) \bigg| \ge \frac{t_1}{2} \Big| \{x_{i,1}\}_{i=1} ^ n\bigg) \\
		&\le 9^{p-1} \times 2\exp \bigg(-\frac{c_1 n t_1^2}{\frac{1}{n}\sum_{i=1}^{n}x_{i,1}^6} \bigg),
	\end{aligned}
\end{equation}
where $c_1$ is a constant. Note that
\[
	\PP (\cE_1 \cap \cA_1)
	= \EE \mathbbm{1}_{\cE_1 \cap \cA_1}
	= \EE [\EE \{\mathbbm{1}_{\cE_1} \mathbbm{1}_{\cA_1} | (x_{i,1})_{i \in [n]}\}] \leq 2 e ^ {- \xi}.
\]
Define  By display (3.6) in \cite{Adamczak2009RestrictedIP}, we have $\PP (\cA_1 ^ c) \le 2 e ^ {- \xi}$. Therefore, we have
\begin{equation}
\label{eq:pr_hessian_term2}
	\begin{aligned}
		\PP (\cE_1) 
		\leq \PP (\cE_1 \cap \cA_1) + \PP (\cA_1 ^ c) 
		\leq \PP (\cE_1 \cap \cA_1) + \PP (\cA_1 ^ c) 
		\le 4 e ^ {- \xi}.
	\end{aligned}
\end{equation}
As for the bound of the third term, we define the event $\cE_2$:
\[
	\cE_2 := \bigg\{\bigg\|\frac{1}{n} \sum_{i=1}^{n} x_{i,1}^2 (\bx_{i,-1}\bx_{i-1} ^ \top - \bI_{p-1})\bigg\|_2 \ge t_2\bigg\},
\]
with
\[
	t_2 = C_7 \bigg\{3 + \bigg(\frac{\xi}{n}\bigg) ^ {1/2} + \frac{\xi ^ 2}{n}\bigg\} ^ {1/2} \bigg(\frac{p \vee \xi}{n}\bigg) ^ {1/2}
		 + C_7(1 + \log n + \xi) \frac{p \vee \xi}{n},
\]
where $C_7$ is constant. We note that
\begin{equation}
	\begin{aligned}
		\bigg\|\frac{1}{n} \sum_{i=1}^{n} x_{i,1}^2 (\bx_{i,-1}\bx_{i-1} ^ \top - \bI_{p-1})\bigg\|_2 
		&= \sup_{\|\bv\|_2=1} \bigg|\frac{1}{n} \sum_{i=1}^{n} x_{i,1}^2 \{(\bx_{i,-1} ^ \top \bv)^2 - 1\} \bigg| \\
		&\leq 2\sup_{\bv \in \cN(1/4)} \bigg|\frac{1}{n} \sum_{i=1}^{n} x_{i,1}^2 \{(\bx_{i,-1} ^ \top\bv)^2 - 1\} \bigg|.
	\end{aligned}
\end{equation}
Conditional on $\{x_{i,1}\}_{i=1} ^ n$, by Bernstein's inequality, we have
\[
	\begin{aligned}
		\PP \Big(\cE_2 \ge t_2 \Big| \{x_{i,1}\}_{i=1} ^ n\Big) 
		&\le \PP \bigg(\sup_{\bv \in \cN(1/4)} \bigg|\frac{1}{n} \sum_{i=1}^{n} x_{i,1}^2 \{(\bx_{i,-1} ^ \top\bv)^2 - 1\} \bigg| \ge \frac{t_2}{2} \Big| \{x_{i,1}\}_{i=1} ^ n\bigg) \\
		&\leq 9 ^ {p-1} \times 2\exp\bigg\{-c_2 \min\bigg(\frac{n t_2 ^ 2}{\frac{1}{n}\sum_{i=1}^{n} x_{i,1} ^ 4}, \frac{nt_2}{\max_i \{x_{i,1}^2\}}\bigg)\bigg\}.
	\end{aligned}
\]
where $c_2$ is a constant. Similarly, we have
\[
	\PP (\cE_2 \cap \cA_2 \cap \cA_3) 
	= \EE \mathbbm{1}_{\cE_2 \cap \cA_2 \cap \cA_3}
	= \EE [\EE \{\mathbbm{1}_{\cE_2} \mathbbm{1}_{\cA_2 \cap \cA_3} | (x_{i,1})_{i \in [n]}\}]
	\leq 2 e ^ {- \xi}.
\]
By display (3.6) in \cite{Adamczak2009RestrictedIP}, we have $\PP (\cA_2 ^ c) \le 2 e ^ {- \xi}$ and $\PP (\cA_3 ^ c) \le 2 e ^ {- \xi}$. Therefore,
\begin{equation}
\label{eq:pr_hessian_term3}
	\begin{aligned}
		\PP (\cE_2) \leq \PP (\cE_2 \cap \cA_2 \cap \cA_3) + \PP (\cA_2 ^ c) + \PP (\cA_3 ^ c) \le 6 e ^ {- \xi}.
	\end{aligned}
\end{equation}
Combining \eqref{eq:pr_hessian_term1}, \eqref{eq:pr_hessian_term2} and \eqref{eq:pr_hessian_term3}, for some constant $C_8$, we have with probability at least $1 - 14 e ^ {-\xi}$ that
\[
	\bigg\|\frac{1}{n} \sum_{i=1} ^ n (\bx_i ^ \top \be_1) ^ 2 \bx_i\bx_i ^ \top - (\bI_p + 2\be_1 \be_1^\top)\bigg\|_2 
	\le C_8 \bigg\{\bigg(\frac{p \vee \xi}{n}\bigg) ^ {1/2} + \frac{(p \vee \xi) \xi}{n}\bigg\},
\]
provided that $p \le C_9 n / \log ^ 2n$ for some positive constant $C_9$. Let 
\[
	A := C_8\bigg(\frac{p}{n}\bigg) ^ {1/2} 
	\quad \text{and} \quad
	Z := \max \bigg(\bigg\|\frac{1}{n} \sum_{i=1} ^ n (\bx_i ^ \top \be_1) ^ 2 \bx_i\bx_i ^ \top - (\bI_p + 2\be_1 \be_1^\top)\bigg\|_2 - A, 0\bigg).
\]
Next, we derive the rate of moment bound of $Z$. We consider two situation: (a) $p ^ 3 < n$; (b) $p ^ 3 \ge n \ge p ^ 2$. Firstly, for (a), we have
\[
\begin{aligned}
	\EE (Z ^ 2) 
	&= \int_0 ^ \infty 2t \PP(Z > t) dt
	= \int_0 ^ \infty 2t \PP(Z > t) dt \\
	&\lesssim \int_0 ^ \infty \bigg\{\bigg(\frac{\xi}{n}\bigg) ^ {1/2} + \frac{(p \vee \xi) \xi}{n}\bigg\}
		\bigg\{\bigg(\frac{1}{\xi n}\bigg) ^ {1/2} + \frac{p \vee \xi}{n}\bigg\} e ^ {-\xi} d\xi \\
	&\lesssim \int_0 ^ {n ^ {1/3}} \frac{1}{n} e ^ {-\xi} d\xi
		+ \int_ {n ^ {1/3}} ^ \infty \frac{\xi ^ 3}{n ^ 2} e ^ {-\xi} d\xi 
	\lesssim \frac{1}{n}.
\end{aligned}
\]
where in the second step we substitute $t$ with $\big\{\big(\frac{\xi}{n}\big) ^ {1/2} + \frac{(p \vee \xi) \xi}{n}\big\}$. Similarly, for (b), we have
\[
\begin{aligned}
	\EE (Z ^ 2) 
	&= \int_0 ^ \infty 2t \PP(Z > t) dt
	= \int_0 ^ \infty 2t \PP(Z > t) dt \\
	&\lesssim \int_0 ^ \infty \bigg\{\bigg(\frac{\xi}{n}\bigg) ^ {1/2} + \frac{(p \vee \xi) \xi}{n}\bigg\}
		\bigg\{\bigg(\frac{1}{\xi n}\bigg) ^ {1/2} + \frac{p \vee \xi}{n}\bigg\} e ^ {-\xi} d\xi \\
	&\lesssim \int_0 ^ {n ^ {1/3}} \frac{p}{n} e ^ {-\xi} d\xi 
		+ \int ^ p _ {n ^ {1/3}} \frac{p ^ 2 \xi}{n ^ 2} e ^ {-\xi} d\xi
		+ \int_ p ^ \infty \frac{\xi ^ 3}{n ^ 2} e ^ {-\xi} d\xi 
	\lesssim \frac{p}{n}.
\end{aligned}
\]
Applying Stirling's approximation yields that $\{\EE(Z^2)\}^{1 / 2} \lesssim  (p/n) ^ {1 / 2}$. Therefore,
\[
\begin{aligned}
	\bigg\{\EE \bigg(\bigg\|\frac{1}{n} \sum_{i=1} ^ n (\bx_i ^ \top \be_1) ^ 2 \bx_i\bx_i ^ \top - (\bI_p + 2\be_1 \be_1^\top)&\bigg\|_2 ^ 2\bigg) \bigg\} ^ {1/2} 
	= [\EE\{(A+Z) ^ 2\}] ^ {1 / 2} \\
	&\leq A + \{\EE(Z^2)\}^{1 / 2} 
	\lesssim  (p/n) ^ {1 / 2}.
\end{aligned}
\]
\end{proof}

\begin{lem}
\label{lem:pr_center_veps}
Suppose that $\varepsilon_1, \ldots, \varepsilon_n$ are independent random variables and follow standard normal distribution. Suppose that $\bx_1, \ldots, \bx_n$ are $i.i.d.$ observations of  sub-Gaussian random vector $\bx$ valued in $\RR ^ p$ satisfying that $\EE \bx = \bzero$ and $\|\bx\|_{\psi_2} \le K$. Then, we have with probability at least $1 - 4e ^ {-\xi}$ that
\[
	\bigg\|\frac{1}{n} \sum_{i=1}^{n} \varepsilon_i \bx_i \bx_i ^ \top \bigg\|_2 
	\lesssim K ^ 2 \bigg\{\bigg(\frac{p \vee \xi}{n}\bigg) ^ {1/2} + \frac{(p \vee \xi) ^ {5/2}}{n ^ {3/2}}\bigg\}.
\]
\end{lem}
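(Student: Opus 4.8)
The plan is to follow exactly the route of Lemma \ref{lem:b_three_prime}, observing that $\bM := \frac{1}{n}\sum_{i=1}^{n}\varepsilon_i\bx_i\bx_i^\top$ is symmetric and that its operator norm is a supremum of quadratic forms whose summands have the same $\psi_{2/3}$ integrability as the cubic terms treated there. First I would write
\[
	\left\|\frac1n\sum_{i=1}^{n}\varepsilon_i\bx_i\bx_i^\top\right\|_2
	= \sup_{\|\bu\|_2=1}\left|\frac1n\sum_{i=1}^{n}\varepsilon_i(\bx_i^\top\bu)^2\right|,
\]
and reduce the supremum to a maximum over a $1/4$-net $\cN(1/4)$ of $\cS^{p-1}$. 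Since $\bM$ is symmetric, the standard net reduction for operator norms gives $\|\bM\|_2\le 2\max_{\bv\in\cN(1/4)}|\bv^\top\bM\bv|$, with $|\cN(1/4)|\le 9^p$ by Corollary 4.2.13 in \cite{vershynin2010introduction}.

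Next, for each fixed $\bv$, I set $W_i := \varepsilon_i(\bx_i^\top\bv)^2$. Because $\varepsilon_i$ is independent of $\bx_i$ and centered, $\EE W_i = 0$, so $\frac1n\sum_i W_i$ is a centered sum of independent variables. Exactly as in Lemma \ref{lem:b_three_prime}, the sub-Gaussianity of $\varepsilon_i$ and of $\bx_i^\top\bv$ yields, for a universal constant $c$, that $\EE\exp\{(|W_i|/(cK^2))^{2/3}\}\le 2$, i.e.\ $\|W_i\|_{\psi_{2/3}}\lesssim K^2$ (this is the Orlicz analogue of $\|XYZ\|_{\psi_{2/3}}\le\|X\|_{\psi_2}\|Y\|_{\psi_2}\|Z\|_{\psi_2}$ applied to $\varepsilon_i,\bx_i^\top\bv,\bx_i^\top\bv$). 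Applying the Bernstein-type bound for $\psi_{2/3}$ variables stated above display (3.6) of \cite{Adamczak2009RestrictedIP}, I obtain for each $\bv$ and every $t>0$ that
\[
	\PP\left(\left|\frac1n\sum_{i=1}^{n}W_i\right|\ge C K^2\left\{\left(\frac{t}{n}\right)^{1/2}+\frac{t^{3/2}}{n}\right\}\right)\le 2e^{-t}.
\]

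A union bound over $\cN(1/4)$ inflates the failure probability to $2e^{-(t-p\log 9)}$; substituting $\xi = t - p\log 9$ (so that $t\lesssim p\vee\xi$) then gives, with probability at least $1-2e^{-\xi}$ (hence at least $1-4e^{-\xi}$),
\[
	\left\|\frac1n\sum_{i=1}^{n}\varepsilon_i\bx_i\bx_i^\top\right\|_2
	\lesssim K^2\left\{\left(\frac{p\vee\xi}{n}\right)^{1/2}+\frac{(p\vee\xi)^{3/2}}{n}\right\}.
\]
To match the displayed statement it remains only to absorb the heavy-tail term: writing $q=p\vee\xi$, one checks $q^{3/2}/n\le q^{1/2}/n^{1/2}+q^{5/2}/n^{3/2}$ by comparing the regimes $q\le n^{1/2}$ (where the first claimed term dominates) and $q> n^{1/2}$ (where the second dominates), which completes the argument.

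The main obstacle is the Orlicz bookkeeping in the middle step: pinning down that the product $\varepsilon_i(\bx_i^\top\bv)^2$ lands in $\psi_{2/3}$ rather than $\psi_1$, which is precisely what produces the $t^{3/2}/n$ heavy-tail contribution (instead of $t/n$ or $t^2/n$), and then invoking the $\psi_{2/3}$ tail inequality with the correct exponent $1/\alpha=3/2$. Everything else—the net reduction, the union bound, and the final regime comparison—is routine and parallels Lemmas \ref{lem:b_three_prime} and \ref{lem:b_four_prime}.
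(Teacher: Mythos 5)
Your proof is correct, but it takes a genuinely different route from the paper's. The paper conditions on the design $\{\bx_i\}_{i=1}^n$ and exploits the exact Gaussianity of $\varepsilon_i$: conditionally, $\frac{1}{n}\sum_i \varepsilon_i(\bx_i^\top\bv)^2$ is Gaussian, so a conditional Hoeffding bound reduces everything to controlling the empirical fourth moment $\frac{1}{n}\sum_i(\bx_i^\top\bv)^4$, which is then handled by the $\psi_{1/2}$ tail inequality of Adamczak et al.; the two separate failure events (conditional deviation plus fourth-moment concentration) are what produce the $4e^{-\xi}$ in the statement. You instead fold the noise into the summand and treat $W_i = \varepsilon_i(\bx_i^\top\bv)^2$ as a single centered $\psi_{2/3}$ variable, applying the $\psi_\alpha$ Bernstein-type bound once, exactly as the paper itself does for the third-derivative terms in Lemma \ref{lem:b_three_prime}. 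Your Orlicz bookkeeping is sound: the product bound $\|\varepsilon_i(\bx_i^\top\bv)^2\|_{\psi_{2/3}} \lesssim K^2$ follows from the three-factor H\"older/AM--GM argument (the same device underlying the paper's claim $\|b'''(\bx_i^\top\bbeta)(\bx_i^\top\bv)^3\|_{\psi_{2/3}} \le MK^3$), $W_i$ is centered because $\varepsilon_i$ is independent of $\bx_i$, and your final regime comparison showing $q^{3/2}/n \le q^{1/2}/n^{1/2} + q^{5/2}/n^{3/2}$ is valid, so your bound implies the stated one. What each approach buys: your single-step argument is structurally simpler, gives a smaller failure probability ($2e^{-\xi}$) and a tail term $(p\vee\xi)^{3/2}/n$ that is in fact sharper than the paper's $(p\vee\xi)^{5/2}/n^{3/2}$, and—since it uses only sub-Gaussianity of $\varepsilon_i$—it delivers for free the extension to sub-Gaussian noise that the paper mentions under Condition \ref{con:gaussian_pr} as possible but does not carry out; the paper's conditional route, by contrast, isolates the randomness of the noise from that of the design, which is the more natural template if one wanted to weaken assumptions on the design while keeping Gaussian noise.
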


\begin{proof}
Note that 
\begin{equation}
\label{eq:pr_net_bound2}
	\bigg\|\frac{1}{n} \sum_{i=1}^{n} \varepsilon_i \bx_i \bx_i ^ \top \bigg\|_2
	= \sup_{\|\bv\|_2 = 1}\bigg\{\frac{1}{n} \sum_{i=1}^{n} \varepsilon_i (\bx_i ^ \top \bv) ^ 2\bigg\} 
	\leq 2 \sup_{\bv \in \cN(1/4)}\bigg\{\frac{1}{n} \sum_{i=1}^{n} \varepsilon_i (\bx_i ^ \top \bv) ^ 2\bigg\}.
\end{equation}
Conditional on $\bx_1, \ldots, \bx_n$ and applying Hoeffding's inequality, we have with probability at least $1 - 2e ^ {-t}$ that
\begin{equation}
\label{eq:pr_lem_bound1}
	\bigg|\frac{1}{n} \sum_{i=1}^{n} \varepsilon_i (\bx_i ^ \top \bv) ^ 2\bigg| 
	\leq C_1 \bigg(\frac{t}{n}\bigg) ^ {1/2} \times \bigg\{\frac{1}{n} \sum_{i} ^ n (\bx_i ^ \top \bv) ^ 4\bigg\} ^ {1/2},
\end{equation}
where $C_1$ is a constant. Given any $\bv$, by sub-Gaussianity, we have $\EE \exp [\{(\bx_i ^ \top \bv) ^ 4 / K ^ 4\} ^ {1 / 2}]  \leq 2$ from which we deduce that $\|(\bx_i ^ \top \bv) ^ 4\| _ {\psi_{1 / 2}} \leq K ^ 4$. By display (3.6) in \cite{Adamczak2009RestrictedIP}, we have
\[
	\PP \bigg[\bigg|\frac{1}{n}\sum_{i=1}^{n} (\bx_i ^ \top \bv) ^ 4 - \EE (\bx_i ^ \top \bv) ^ 4\bigg| \geq C_2 K ^ 4 \bigg\{\bigg(\frac{t}{n} \bigg) ^ {1 / 2} + \frac{t ^ 2}{n}\bigg\}\bigg] 
	\leq 2e ^ {-(t - 3)}, 
\]
where $C_2$ is a universal constant. Given that $\EE\{(\bx ^ \top \bv) ^ 4\} \lesssim K ^ 4$, for a constant $C_3$, we have with probability at least $1 - 2e ^ {-(t-3)}$
\[
    \frac{1}{n} \sum_{i=1}^{n} (\bx_i ^ \top \bv) ^ 4 \leq C_3 K ^ 4 \bigg\{1 + \bigg(\frac{t}{n} \bigg) ^ {1 / 2} + \frac{t ^ 2}{n} \bigg\}.
\]
Combining with \eqref{eq:pr_lem_bound1} and applying a union bound over $\bv \in \cN(1 / 4)$ in \eqref{eq:pr_net_bound2} , we have with probability at least $1 - 4e ^ {-(t - 3 -p \log5)}$
\begin{equation}
	\bigg\|\frac{1}{n} \sum_{i=1}^{n} \varepsilon_i \bx_i \bx_i ^ \top \bigg\|_2 
	\lesssim K ^ 2 \bigg\{\bigg(\frac{t}{n}\bigg) ^ {1/2} + \frac{t ^ {5/2}}{n ^ {3/2}}\bigg\}.
\end{equation}
Substituting $\xi = t - 3 - p \log 5$ into the probability with positive $\xi$ yields the claimed results.
\end{proof}

\begin{lem}
\label{lem:pr_nable3_f}
Suppose that $\bx_1, \ldots, \bx_n$ are i.i.d. observations of  sub-Gaussian random vector $\bx$ valued in $\RR ^ p$ satisfying that $\EE \bx = \bzero$ and $\|\bx\|_{\psi_2} \le K$. Following the same notation as in the proof of Theorem \ref{thm:reboot_phase_retrieval}, we have with probability at least $1 - 2 e ^ {-\xi}$ that 
\[
		\|\nabla ^ 3 \ell(\bbeta)\|_2 \lesssim K ^ 4 \|\bbeta\|_2 \bigg\{1 + \bigg(\frac{p \vee \xi}{n}\bigg) ^ {1/2} + \frac{(p \vee \xi) ^ 2}{n}\bigg\},
\]
for any $\bbeta \in \RR ^ p$.
\end{lem}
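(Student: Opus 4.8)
The plan is to follow the template of Lemma~\ref{lem:b_three_prime} (and Lemma~\ref{lem:b_five_prime}), treating $\nabla^3\ell^{(k)}(\bbeta)$ as a symmetric third-order tensor and reducing its operator norm to a supremum of an empirical process over a finite net. First I would differentiate the Hessian in \eqref{eq:pr_grad_hessian} once more to obtain the explicit form
\[
	\nabla^3\ell^{(k)}(\bbeta)=\frac{6}{n}\sum_{i=1}^n(\bx_i^\top\bbeta)\,\bx_i\otimes\bx_i\otimes\bx_i,
\]
which is symmetric because the scalar prefactor $(\bx_i^\top\bbeta)$ multiplies the symmetric tensor $\bx_i\otimes\bx_i\otimes\bx_i$. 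Consequently, by the operator-norm identity for symmetric tensors and a $1/8$-covering $\cN$ of $\cS^{p-1}$ (of cardinality at most $17^p$, exactly as in the covering step of Lemma~\ref{lem:b_three_prime}), one gets
\[
	\|\nabla^3\ell^{(k)}(\bbeta)\|_2\le 2\sup_{\bv\in\cN}\Bigl|\frac{6}{n}\sum_{i=1}^n(\bx_i^\top\bbeta)(\bx_i^\top\bv)^3\Bigr|,
\]
so it suffices to control the fixed-$\bv$ averages and then take a union bound over $\cN$.

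For a fixed $\bv\in\cN$, set $W_i:=6(\bx_i^\top\bbeta)(\bx_i^\top\bv)^3$. The key structural step is to bound its Orlicz norm: since $\bx_i^\top\bbeta$ and $\bx_i^\top\bv$ are sub-Gaussian with $\|\bx_i^\top\bbeta\|_{\psi_2}\le K\|\bbeta\|_2$ and $\|\bx_i^\top\bv\|_{\psi_2}\le K$, the product of these four sub-Gaussian factors satisfies, by the generalized H\"older inequality for $\psi_r$-norms (with $1/r=\tfrac12+3\cdot\tfrac12=2$),
\[
	\|W_i\|_{\psi_{1/2}}\lesssim K^4\|\bbeta\|_2.
\]
I would then invoke the Bernstein-type deviation bound for $\psi_{1/2}$ sums stated above display (3.6) in \cite{Adamczak2009RestrictedIP}, which gives for each $t>0$, with probability at least $1-2e^{-t}$,
\[
	\Bigl|\frac1n\sum_{i=1}^n (W_i-\EE W_i)\Bigr|\lesssim K^4\|\bbeta\|_2\Bigl\{\Bigl(\frac tn\Bigr)^{1/2}+\frac{t^2}{n}\Bigr\},
\]
the quadratic tail term $t^2/n$ being precisely the $\psi_{1/2}$ analogue of the heavy-tail correction (here $1/\alpha=2$). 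The nonzero mean is harmless: $|\EE W_i|\le\EE|W_i|\lesssim\|W_i\|_{\psi_{1/2}}\lesssim K^4\|\bbeta\|_2$, and this produces the leading ``$1$'' inside the braces of the claimed bound.

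Finally, I would take the union bound over $\bv\in\cN$, costing a factor $17^p$ and hence a term of order $p$ in the exponent; substituting $\xi=t-cp-c'$ for suitable absolute constants so that the effective deviation index satisfies $t\asymp p\vee\xi$, the two data-dependent terms turn into $\bigl((p\vee\xi)/n\bigr)^{1/2}$ and $(p\vee\xi)^2/n$. Combining with the mean bound then yields, for the given $\bbeta$ and with probability at least $1-2e^{-\xi}$,
\[
	\|\nabla^3\ell^{(k)}(\bbeta)\|_2\lesssim K^4\|\bbeta\|_2\Bigl\{1+\Bigl(\frac{p\vee\xi}{n}\Bigr)^{1/2}+\frac{(p\vee\xi)^2}{n}\Bigr\},
\]
as desired; the argument is carried out pointwise in $\bbeta$, matching the form of its GLM companions (Lemmas~\ref{lem:b_three_prime}--\ref{lem:b_five_prime}), and the right-hand side enters the $\bx_i$ only through the degree-four monomials. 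The only genuinely delicate point is the Orlicz-norm computation $\|W_i\|_{\psi_{1/2}}\lesssim K^4\|\bbeta\|_2$ together with the bookkeeping of the degree-four homogeneity, since this is what simultaneously pins down the $K^4\|\bbeta\|_2$ prefactor and the $t^2/n$ tail exponent; everything else is a routine repetition of the covering-plus-Bernstein scheme already used for the higher derivatives of the GLM log-likelihood.
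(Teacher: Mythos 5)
Your proposal is correct and follows essentially the same route as the paper's proof: reduce the operator norm of the symmetric third-order tensor $\frac{6}{n}\sum_i(\bx_i^\top\bbeta)\,\bx_i\otimes\bx_i\otimes\bx_i$ to a supremum over a finite net, bound $\|(\bx_i^\top\bbeta)(\bx_i^\top\bv)^3\|_{\psi_{1/2}}\lesssim K^4\|\bbeta\|_2$, apply the Bernstein-type bound of \cite{Adamczak2009RestrictedIP} with the $t^2/n$ tail, absorb the nonzero mean into the constant term, and take a union bound with the reparametrization $t\asymp p\vee\xi$. The only differences are cosmetic (you use a $1/8$-net where the paper uses $1/16$, and you spell out the Orlicz--H\"older computation that the paper leaves implicit), so no further changes are needed.
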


\begin{proof}
Recall that $\nabla ^ 3 \ell(\bbeta) = \frac{1}{n} \sum_{i=1}^n 6 (\bx_i ^ \top \bbeta) (\bx_i \otimes \bx_i \otimes \bx_i)$ and
\[
	\|\nabla ^ 3 \ell(\bbeta)\|_2 
	\leq 2 \sup_{\bu \in \cN(1/16)} \bigg|\frac{1}{n} \sum_{i=1}^n 6 (\bx_i ^ \top \bbeta) (\bx_i ^ \top \bu) ^ 3\bigg|.
\]
By sub-Gaussianity, we have $\|(\bx_i ^ \top \bbeta) (\bx_i ^ \top \bu) ^ 3\|_{\psi_{1/2}} \leq K ^ 4 \|\bbeta\|_2$. Applying display (3.6) in \cite{Adamczak2009RestrictedIP} and the fact that $\EE \{(\bx_i ^ \top \bbeta) (\bx_i ^ \top \bu) ^ 3\} \leq K ^ 4 \|\bbeta\|_2$, we have 
\[
\begin{aligned}
	\PP \bigg[\bigg|\frac{1}{n} \sum_{i=1}^n 6 (\bx_i ^ \top \bbeta) (\bx_i ^ \top \bu) ^ 3\bigg| \geq C_1 K ^ 4 \|\bbeta\|_2 \bigg\{1 + \bigg(\frac{t}{n}\bigg) ^ {1/2} + \frac{t ^ 2}{n}\bigg\}\bigg] \leq2 e ^ {-(t-3)},
\end{aligned}
\]
where $C_1$ is a universal constant. Applying a union bound over $\cN(1/16)$, we derive the desired probability bound. 
\end{proof}



\end{document}